\setlist{topsep=3pt,itemsep=0pt, parsep=0pt, leftmargin=*}
\newtheorem{theorem}{Theorem}[section]
\newtheorem{lemma}[theorem]{Lemma}
\newtheorem{claim}[theorem]{Claim}
\newtheorem{obs}[theorem]{Observation}
\theoremstyle{definition}
\newtheorem{definition}[theorem]{Definition}
\newtheorem{remark}[theorem]{Remark}
\newcommand{\ceil}[1]{\left\lceil#1\right\rceil}
\newcommand{\floor}[1]{\left\lfloor#1\right\rfloor}
\DeclareMathOperator\union{\bigcup}
\mathchardef\mhyphen="2D
\newcommand{\Z}{{\mathbb{Z}}}
\newcommand{\calC}{{\mathcal{C}}}
\newcommand{\calI}{{\mathcal{{I}}}}
\newcommand{\calW}{{\mathcal{{W}}}}
\newcommand{\sfL}{{\mathsf{L}}}
\newcommand{\sfR}{{\mathsf{R}}}
\newcommand{\bbJ}{{\mathbb{J}}}
\newcommand{\bfT}{{\mathbf{T}}}
\newcommand{\depth}{\mathsf{depth}}
\newcommand{\dif}{\mathsf{dif}}
\newcommand{\sfleft}{{\mathsf{left}}}
\newcommand{\sfright}{{\mathsf{right}}}
\newcommand{\sfbegin}{{\mathsf{begin}}}
\newcommand{\sfcenter}{{\mathsf{center}}}
\newcommand{\sfend}{{\mathsf{end}}}
\newcommand{\side}{{\mathsf{side}}}
\newcommand{\sftop}{{\mathsf{top}}}
\newcommand{\sfmid}{{\mathsf{mid}}}
\newcommand{\sfbot}{{\mathsf{bot}}}
\newcommand{\len}{{\mathsf{len}}}
\newcommand{\discarded}{{\mathsf{disc}}}
\newcommand{\best}{{\mathsf{best}}}
\newcommand{\sfcap}{{\mathsf{cap}}}
\newcommand{\anc}{{\mathsf{anc}}}
\newcommand{\cut}{{\mathsf{cut}}}
\newcommand{\uncut}{{\mathsf{uncut}}}
\newcommand{\maxuncut}{{\mathsf{max\mhyphen uncut}}}
\newcommand{\poly}{{\mathrm{poly}}}
\newcommand{\cstrJStar}{\mathsf{construct\mhyphen\bbJ^*}}
\newcommand{\pushdown}{{\mathsf{push\mhyphen down}}}
\newcommand{\schedule}{\mathsf{schedule}}
\newcommand{\modified}{\mathsf{modified}}
\newcommand{\remove}[1]{}
\newcommand{\inorderless}{\mathbin{\overset{\mathsf{\scriptscriptstyle in\mhyphen ord}}{\scriptstyle <}}}
\newcommand{\Aless}{\mathbin{\overset{\mathsf{\scriptscriptstyle A}}{\scriptstyle <}}}
\newcommand{\notAless}{\mathbin{\overset{\mathsf{\scriptscriptstyle A}}{\scriptstyle \not<}}}
\newcommand{\sigmaless}{<_{\mathsf{side}_{\sigma'}}}
\begin{document}

\title{Towards PTAS for Precedence Constrained Scheduling via Combinatorial Algorithms}
\author{Shi Li \thanks{Department of Computer Science and Engineering, University at Buffalo, Buffalo, NY 14260. The work is in part supported by NSF grant CCF-1844890.}}

\maketitle
\begin{abstract}
	We study the classic problem of scheduling $n$ precedence constrained unit-size jobs on $m = O(1)$ machines so as to minimize the makespan. In a recent breakthrough, Levey and Rothvoss \cite{LR16} developed a $(1+\epsilon)$-approximation for the problem with running time $\exp\Big(\exp\Big(O\big(\frac{m^2}{\epsilon^2}\log^2\log n\big)\Big)\Big)$, via the Sherali-Adams lift of the basic linear programming relaxation for the problem by $\exp\Big(O\big(\frac{m^2}{\epsilon^2}\log^2\log n\big)\Big)$ levels. Garg \cite{Garg18}  recently improved the number of levels to $\log ^{O(m^2/\epsilon^2)}n$, and thus  the  running time to $\exp\big(\log ^{O(m^2/\epsilon^2)}n\big)$, which is quasi-polynomial for constant $m$ and $\epsilon$. 
	
	In this paper we present a $(1+\epsilon)$-approximation algorithm for the problem with running time $n^{O\left(\frac{m^4}{\epsilon^3}\log^3\log n\right)}$, which is very close to a polynomial for constant $m$ and $\epsilon$.   Unlike the algorithms of Levey-Rothvoss and Garg, which are based on the linear-programming hierarchy, our algorithm is purely combinatorial.  We show that the conditioning operations on the lifted LP solution can be replaced by making guesses about the optimum schedule.  
	
	Compared to the LP hierarchy framework, our guessing framework has two advantages, both playing important roles in deriving the improved running time. First, we can guess any information about the optimum schedule, as long as it can be described using a few bits, while in the conditioning framework, we can only condition on the variables in the basic LP. Second, the guessing framework can save a factor of $\log n$ in the exponent of running time.  Roughly speaking, most of the time, the information we try to guess is binary and thus each nested guess only contributes to a multiplicative factor of $2$ in the running time. In contrast, each conditioning operation in a sequence incurs a multiplicative factor of $\poly(n)$. 
\end{abstract}
\section{Introduction}


The problem of scheduling $n$ precedence constrained unit length jobs on $m$ identical machines so as to minimize the makespan is a fundamental problem in scheduling theory. In the problem, we are given $m$ identical machines, a set $J^\circ$ of $n$ unit-size jobs, with precedence constraints given by a strict partial order $\prec$ over $J^\circ$.\footnote{A relation $\Aless$ over some set $A$ is a strict partial order if for every $a\in A$, we do not have $a\Aless a$, for every $a, b, c \in A$ with $a \Aless b$ and $b \Aless c$, we have $a \Aless c$, and for every $a, b \in A$ we can not have both $a \Aless b$ and $b \Aless a$. This definition will be used later.} If we have $j \prec j'$, then the job $j'$ can only start after job $j$ completes.  The goal of the problem is to schedule all jobs in $J^\circ$ so as to minimize the makespan of the schedule, which is defined as the time by which all jobs compete, assuming the schedule starts at time $0$.  Using the classic three-field notation, the problem is denoted as $P|\text{prec}, p_j=1|C_{\max}$.

Already in 1966, Graham \cite{Gra69} showed that any greedy non-idling schedule for the problem is $(2 - 1/m)$-approximate. When $m \geq 4$, a slightly better approximation ratio of $2 - 7/(3m + 1)$ can be achieved \cite{GR08}. Later, Svensson \cite{Sve10} proved that under a variant of the Unique Games Conjecture (UGC) introduced by Bansal and Khot \cite{BK09}, there is no polynomial time algorithm that can achieve an approximation factor of $(2 - \epsilon)$ for the  problem. Thus, under the conjecture and ignoring $o(1)$ terms in the approximation ratio, Graham's algorithm already gives the best ratio for the problem in polynomial time.

An important special case of the problem that has attract a lot of attention recently is when the number $m$ of machines is a small constant, as in most real world applications, $m$  is typically much smaller than $n$, the number of jobs.  A natural question is if one can design better approximation algorithms for this special case, denoted as $Pm|\text{prec}, p_j = 1|C_{\max}$ using the three-field notation, where $Pm$ indicates that the number $m$ of machines is a constant that is not a part of the input.  On the negative side, whether the problem $P3|\text{prec}, p_j = 1|C_{\max}$ is NP-hard or not is a long-standing open problem.  On the positive side, in a recent breakthrough, Levey and Rothvoss \cite{LR16} developed a $(1+\epsilon)$-approximation for $Pm|\text{prec}, p_j = 1|C_{\max}$ with running time $\exp\left(\exp \left(O(\frac{m^2}{\epsilon^2}\log^2 \log n)\right)\right)$, via the Sherali-Adams lift \cite{SA90} of the basic LP relaxation of the problem by $\exp \left(O(\frac{m^2}{\epsilon^2}\log^2 \log n)\right)$ levels. Later, Garg \cite{Garg18} reduced the number of levels of the Sherali-Adams hierarchy to $\log^{O(m^2/\epsilon^2)} n$, and thus improving the running time to $\exp\left(\log ^{O(m^2/\epsilon^2)}n\right)$, which is strictly quasi-polynomial.

An important open question that follows is whether the running time can be made strictly polynomial; that is, whether we can obtain a PTAS for $Pm|\text{prec}, p_j=1|C_{\max}$. Indeed, this is listed as the first open problem in both the influential survey by Schuurman and Woeginger \cite{SW99a}, and  the recent report by Bansal  \cite{Ban17} on approximate scheduling problems. Both Levey and Rothvoss \cite{LR16} and Garg \cite{Garg18} asked specifically whether a PTAS can be obtained using an $O(1)$-level Sherali-Adams lift of the basic LP relaxation. 

We note that even though the running time of the algorithm of Garg is quasi-polynomial, the  $O(1)$ exponent in the exponent $\log^{O(1)} n$ of the running time depends on $m$ and $\epsilon$. Moreover, both algorithms of Levey-Rothvoss and Garg are recursive: The initial instance has size $n$, and the algorithms reduce the instance to many sub-instances of size $n/\poly\log(n)$ and solve them recursively.  Therefore the number of levels in the recursion is at least $\Omega(\log n/\log \log n)$. Using the Sherali-Adams hierarchy framework, it seems hard to avoid a factor of $n^{\Omega(\log n/\log\log n)}$ in the running time, if the algorithm is recursive.

In this paper, we give a new framework that improves the running time to $n^{O\left(\frac{m^4}{\epsilon^3}\log^3 \log n\right)}$, thus making a big step towards obtaining a PTAS for the problem:
\begin{theorem}
	\label{thm:main}
	There is an $n^{O\left(\frac{m^4}{\epsilon^3}\log^3 \log n\right)}$-time $(1+\epsilon)$-approximation algorithm for  $Pm|\text{prec}, p_j=1|C_{\max}$, i.e, the problem of scheduling precedence-constrained unit-size jobs on $m$ identical machines to minimize the makespan. 
\end{theorem}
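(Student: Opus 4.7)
The plan is to mimic the recursive scale-decomposition framework of Levey--Rothvoss and Garg, but to replace every appeal to the Sherali--Adams hierarchy by a direct guessing step on the unknown optimal schedule. Concretely, I would first guess (by binary search with $O(\log n)$ overhead) the optimal makespan $T$ and interpret it as a time axis $[0,T)$ partitioned into $B$ blocks of equal length at the top level, where $B = \log^{\Theta(1)} n$ is chosen so that $B$ applications of recursion exhaust the time horizon after $O(\log n/\log\log n)$ levels. At each recursion level I work with a sub-interval of length $L$ together with a subset of jobs to be scheduled inside it, and I split the interval into $B$ sub-blocks. The goal of the level is to decide, for each job $j$ assigned to the interval, which sub-block contains $j$, up to a $(1+\epsilon/\log n)$ loss, so that the resulting sub-block subproblems are independent and can be solved recursively.

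The combinatorial core is identifying, at each level, the small set $S$ of ``bad'' jobs whose sub-block assignment cannot be determined by a simple greedy/local rule because of precedence constraints crossing sub-block boundaries. I would try to show, in the spirit of the condensing arguments of \cite{LR16,Garg18}, that after paying an $\epsilon/\log n$ slack in the makespan, one may assume $|S|$ depends only on $m$, $\epsilon$, and $\log\log n$ rather than on $n$ or $L$; the intuition is that chains of length $\omega(m/\epsilon)$ either fit naturally into a single sub-block by Graham-type arguments or can be absorbed into slack created by idle slots. Once $|S|$ is polylogarithmic, I \emph{guess} the sub-block assignment of every job in $S$ and also the interaction of $S$ with the boundary structure (e.g.\ which jobs of $S$ complete in each sub-block, which are ancestors/descendants of which in the induced order). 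Because each guess is binary and there are $|S|\cdot O(\log B)$ of them, the branching factor at a level is $2^{|S|\log B} = n^{O(m^4/\epsilon^3 \cdot \log^2\log n)}$; multiplying across $O(\log n/\log\log n)$ recursion levels yields the target $n^{O(m^4/\epsilon^3 \cdot \log^3\log n)}$.

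Given the guess, the remaining assignment of non-special jobs to sub-blocks is reduced to a local question: a job $j\notin S$ must lie in some sub-block consistent with its guessed $S$-ancestors/descendants, and I would show this can be done by a simple non-idling rule, losing only $O(1)$ slots per sub-block as in Graham's argument. This gives a clean recursion whose correctness relies on the invariant that the ``promised'' schedule inside each sub-block has makespan at most $L/B \cdot (1+\epsilon/\log n)$, which telescopes across $O(\log n/\log\log n)$ levels to $T(1+\epsilon)$ by the standard choice $\epsilon/\log n$ per level. For each guess sequence I then verify in $\poly(n)$ time whether it is realizable (produce an actual schedule by the recursion), and output the best one found.

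The main obstacle I expect is the combinatorial step of showing that $|S|$ is only polylogarithmic after $(1+\epsilon/\log n)$ slack, which is the analogue of the ``condensing/clean-up'' lemma in \cite{LR16,Garg18}. In those papers the corresponding statement is proven via the strong independence properties of the Sherali--Adams lift after many conditionings; here one must prove an analogous structural dichotomy purely combinatorially about the \emph{existence} of a near-optimal schedule in which few chains cross sub-block boundaries in a complicated way. A secondary difficulty is formalizing exactly which $O(|S|\log B)$ bits of information about the optimum suffice to decouple the sub-blocks: too little information leaves precedence constraints linking sub-blocks, while too much blows up the branching factor. Resolving this trade-off so that each nested guess is truly binary (as emphasized in the introduction) is what converts the $\exp(\log^{O(1)}n)$ running time of \cite{Garg18} into the near-polynomial bound of Theorem~\ref{thm:main}.
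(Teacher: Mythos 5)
Your proposal captures the paper's guiding philosophy (replace Sherali--Adams conditioning by guesses about the optimum) but it leaves the technical core unproven, and you acknowledge as much: the existence of a small ``bad'' set $S$ whose guessed assignment decouples the sub-blocks is exactly the step you flag as an obstacle, and it is not how the paper actually proceeds. The paper never isolates a polylogarithmic set of exceptional jobs. Instead, for each interval it repeatedly picks the \emph{middle job of a long chain} and guesses a single bit (left or right half of the interval); each such guess moves at least a $\delta/2$ fraction of the interval's jobs down a level, so only $p=O_{m,\epsilon}(\log\log n)$ bits are needed per interval before the residual job set has small chain length. The residual jobs are then handled not by a Graham-type absorption into slack, but by replacing their precedence constraints with \emph{window constraints} whose endpoints lie on a coarse grid (or on a guessed set of $O(m/\epsilon)$ cut points per interval in the improved version), and by proving two conversion lemmas showing that valid and ``virtually-valid'' schedules differ by only $O(\epsilon T)$ \emph{discarded jobs}, which are reinserted at the end at a cost of $+1$ in the makespan each. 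The split of these jobs between the two halves is then guessed only up to window-multiset equivalence, of which there are few classes. None of this machinery --- windows, the discard-and-reinsert accounting, the equivalence-class enumeration --- appears in your sketch, and without it the claim that the subproblems become independent after guessing $O(|S|\log B)$ bits is unsupported.

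There is also an arithmetic inconsistency in your running-time accounting. You state a per-level branching factor of $2^{|S|\log B}=n^{O(m^4\epsilon^{-3}\log^2\log n)}$ and multiply over $O(\log n/\log\log n)$ levels to obtain $n^{O(m^4\epsilon^{-3}\log^3\log n)}$; but $\bigl(n^{c\log^2\log n}\bigr)^{\log n/\log\log n}=n^{c\log n\log\log n}$, which is vastly larger. The bound only closes if the per-level branching factor is $\exp\bigl(O_{m,\epsilon}(\mathrm{poly}\log\log n)\bigr)$ --- quasi-poly-log-log, not a power of $n$ --- which is what the paper achieves precisely because most guesses are single bits and the window-equivalence classes number only $O_{m,\epsilon}(\log\log n)$ many per interval. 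Your later remark that $|S|$ is ``polylogarithmic'' would make the total running time $n^{\mathrm{poly}\log n}$ at best. So both the structural lemma and the quantitative bookkeeping need the paper's specific constructions to go through.
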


The exponent in our running time is only $\poly\left(m, \frac{1}{\epsilon}, \log \log n\right)$; thus we overcome the $n^{\Omega(\log n/\log\log n)}$ barrier mentioned above.   Moreover, our running time is only single exponential in $\poly\left(m, \frac1\epsilon\right)$, while that of Garg \cite{Garg18} has a double exponential dependence on $\poly\left(m, \frac1\epsilon\right)$.

 Unlike the algorithms of Levey-Rothvoss and Garg, which are based on the Sherali-Adams hierarchy, our algorithm is purely combinatorial. We show that the conditioning operations on the lifted LP solution can be replaced by \emph{making guesses} about the optimum schedule. The guessing framework has two advantages. First, it is more flexible in the sense that we can guess any information about the optimum solution, as long as it can be described using a few bits, while we can only condition on actual variables in the basic LP relaxation.  Second, the running time given by the framework depends on the number of possibilities for the combination of our guesses; for our result, the dependence gives a better running time than that given by the dependence on the number of Sherali-Adams levels. Our algorithm is recursive and we can not avoid an $\Omega(\log n)$ number of levels in the recursion.  However, most of the time the information we try to guess is binary. Roughly speaking, instead of losing an $n^{\Omega(\log n)}$ factor in the running time, we only lose a factor of $2^{\Omega(\log n)}$. Other than the improved running time, we believe our framework is conceptually simpler and interesting on its own.  To present more detail about our techniques, we first give an overview the algorithms of Levey-Rothvoss and Garg. 

\subsection{Overview of  Levey-Rothvoss and Garg} 
The algorithms of Levey-Rothvoss and Garg are based on the Sherali-Adams hierarchy of the basic LP relaxation for the problem, and we refer to \cite{Lau01, CT12, Rot13} for beautiful surveys of LP/SDP hierarchies and their applications. For this overview, it suffices to keep the following informal description in mind.  Given a basic LP relaxation of size $N$ for some problem, we can ``lift'' it by $r \geq 1$ levels to obtain a new LP relaxation of size $N^{O(r)}$. Solving the lifted LP gives us an $r$-level fractional solution $x$. An important operation defined over an $r$-level fractional solution $x$, which has been used in many hierarchy-based algorithms, is called ``conditioning'': Taking any variable $x_j$ in the basic LP relaxation with $x_j > 0$,  ``conditioning'' on the event that $x_j = 1$ yields a new LP solution $x'$ with $x'_j = 1$, but $x'$ is only an $(\ell-1)$-level fractional solution. Thus, we can only apply $r-1$ conditioning operations sequentially on an LP solution of level $r$.

It is known that the list-scheduling algorithm of Graham \cite{Gra69} gives a schedule with makespan at most $\Delta(J^\circ) + \frac{|J^\circ|}{m}$, where $\Delta(J)$ for any $J \subseteq J^\circ$  is the maximum length of a precedence chain of jobs in $J$. Since both $\Delta(J^\circ)$ and  $\frac{|J^\circ|}{m}$ are lower bounds on the optimum makespan $T$, the algorithm gives a $2$-approximation.  If $\Delta(J^\circ)$ is very small, say $\epsilon$ times the optimum makespan $T$, then the list scheduling algorithm already gives a $(1+\epsilon)$ approximation.  So intuitively, a set $J$ of jobs with a small $\Delta(J)$ value is easy to schedule. 

The Levey-Rothvoss algorithm \cite{LR16} exploited the intuition in the following way.  A basic structure used by the algorithm  is a dyadic tree $\bfT$ of intervals, with root being $[T]$, leaves being singular intervals, and the two children of an internal interval being its left and right-half sub-intervals. Levey-Rothvoss first solves an $r$-level lift of the basic LP relaxation for the problem to obtain a fractional solution $x$, for some integer $r = \exp(\frac{m^2}{\epsilon^2}\log^2\log T)$.  Then every job $j \in J^\circ$ is \emph{assigned} to the inclusion-wise minimal interval $I$ in $\bfT$ that contains all the time slots $t$ with $x_{j,t} > 0$, where $x_{j,t}$ is the variable in the basic LP relaxation indicating whether $j$ is scheduled at time $t$ or not. We say $I$ is the \emph{owning interval} of $j$. Let $J_I$ be set of all jobs with owning interval $I$, or equivalently, assigned to $I$.  So, every job $j$ is scheduled in its owning interval, according to the LP solution $x$. If  $\Delta(J_{[T]})$ is large, the algorithm  can take a long precedence chain in $J_{[T]}$, pick the middle job $j$ in the chain, choose an arbitrary time $t$ with $x_{j, t}>0$, and  condition on that $x_{j, t} = 1$. Thus in the new LP solution $x$, $j$ is scheduled at time $t$. If $t \leq T/2$, then $j$ and all its predecessors in $J_{[T]}$ must be scheduled in $(0, T/2]$ according to $x$. Thus the new owning intervals of these jobs become sub-intervals of $(0, T/2]$. Similarly, if $t > T/2$, the owning intervals of $j$ and its successors in $J_{[T]}$ will be changed to sub-intervals of $(T/2, T]$. In either case the algorithm is making a reasonable progress: the owning intervals of at least $\Delta(J_{[T]})/2$ jobs are shrunk.  The conditioning operation can then be repeated until $\Delta(J_{[T]})$ becomes very small.  The whole conditioning process is then repeated on $(0,T/2]$ and $(T/2, T]$ to make sure $\Delta(J_{(0, T/2]})$ and $\Delta(J_{(T/2, T]})$ are small, and then on levels $2, 3, \cdots, L_{\mathrm{LR}}$ of the dyadic tree $\bfT$, for some $L_{\mathrm{LR}} = O_{\epsilon,m}(\log^2\log T)$.

Then the Levey-Rothvoss algorithm carefully chooses three sets of levels from the $L_{\mathrm{LR}}$ levels: Top levels contain the topmost $a$ levels, middle levels contain the next $b$ levels below the top levels, and the bottom level is the level below the middle levels. It is guaranteed that $a + b + 1 \leq L_{\mathrm{LR}}$ and thus the top, middle and bottom levels all fall in the topmost $L_{\mathrm{LR}}$ levels in $\bfT$.  There are only a few jobs assigned to middle levels and thus they can be discarded immediately; jobs assigned to a bottom interval (called bottom jobs) are scheduled recursively by solving the instance defined over the bottom interval.  After obtaining a schedule of bottom jobs obtained from the recursive procedures, the top jobs (that is, jobs assigned to top intervals) are then inserted back. Levey-Rothvoss showed that only a few top jobs need to be removed due to two good properties: (i) $\Delta(J_I)$ for each top interval $I$ is small, which suggests that the dependence constraints among top jobs are easy to handle, and (ii) since $b$ is large, any top job has an owning interval that is much longer than the length of bottom intervals, allowing the algorithm to handle the precedence constraints between top and bottom jobs easily.    Overall the whole recursive algorithm discards at most $\epsilon T$ jobs, and inserting them back gives a schedule of makespan at most $(1+\epsilon)T$.

Garg \cite{Garg18} defined the owning intervals in a more flexible way: The owning interval of a job $j$ only needs to contain the time points $t$ with $x_{j, t} > 0$; it does not need to be the inclusion-wise minimal one.  By doing so, Garg can force some jobs to stay on top levels so that they do not contribute to the loss in sub-recursions, resulting in a better running time.

\subsection{Our Techniques}
As we mentioned, our algorithm is purely combinatorial. Similar to the algorithms of Levey-Rothvoss and Garg, we define the dyadic tree $\bfT$ and assign jobs to intervals in $\bfT$.  Let $J_I$ be the jobs assigned to $I$ (or with owning interval being $I$).  Instead of solving the Sherali-Adams lift of the basic LP relaxation for the problem and condition on variables to decide how to assign jobs to intervals, we \emph{guess} what happens in the optimum schedule. If all our guesses are correct, then we are sure that in the optimum solution every job is scheduled inside its owning interval.  Initially all jobs are assigned to the root interval $[T]$.  If $\Delta(J_{[T]})$ is big, we can then take the middle job $j$ in some length-$\Delta(J_{[T]})$ chain of jobs in $J_{[T]}$, as in Levey-Rothvoss. Now instead of referring to the fractional solution $x$ (which we do not have) what to do, we make a guess on whether $j$ is scheduled in $(0, T/2]$ or $(T/2, T]$ in the optimum schedule.  Suppose our guess is the former and it is correct. Then we are certain that $j$ and its predecessors are all scheduled in $(0, T/2]$; thus we can change their owning intervals to $(0, T/2]$. Similar to Levey-Rothvoss, by guessing repeatedly, we can make $\Delta(J_{[T]})$ small.

A natural way to proceed is to break the instance into two sub-instances over $(0, T/2]$ and $(T/2, T]$ respectively.  This requires us to split $J_{[T]}$ into two sets, one to the left and the other to the right.   However, once $\Delta(J_{[T]})$ becomes small, one guess can only yield a small progress and we can not afford to make guesses until $J_{[T]}$ becomes empty. To overcome this issue, we use the two ideas from Levey-Rothvoss. First, since $\Delta(J_{[T]})$ is small now, we can essentially ignore the precedence constraints among them.  Second, to take care of the precedence constraints between $J_{[T]}$ and $J^\circ \setminus J_{[T]}$,  we make guesses recursively to obtain information about the sets of jobs assigned to the first $h = \log \frac{O(1)\cdot m\log T}{\epsilon}$ levels of intervals in $\bfT$.  With this information, we have some rough knowledge on where a job in $J_{[T]}$ can be scheduled. This leads to the definition of a window for a job in $J_{[T]}$, and we impose the constraint that the job should be scheduled inside its window. We show that the precedence constraints between $J_{[T]}$ and $J^\circ \setminus J_{[T]}$ can be approximately captured by the window constraints; therefore they can be ignored.  A crucial property is that the boundaries of the windows are all multiplies of $2^{-h}T$, making the number of possible windows small.  Thus there are only a few different ways to split $J_{[T]}$.  By guessing how to split $J_{[T]}$, we can divide the instance into two separate sub-instances over $(0, T/2]$ and $(T/2, T]$, which are then solved independently and recursively.  Notice that our $h$ is small: we have $2^h = \Theta\left(\frac{m\log T}{\epsilon}\right)$. That means, we do not need to create a large gap between top and bottom intervals as in Levey-Rothvoss and Garg, allowing us to remove the $(\log n)^{\poly(m, 1/\epsilon)}$ factor in the exponent of the running time. 

However, the above framework can only lead to a running time of $n^{\tilde O_{m, \epsilon}(\log^2n)}$, where we hide a $\poly\log \log n$ factor in the $\tilde O$ notation, though this is already much better than the running time of Garg. The $\tilde O_{m, \epsilon}(\log^2n)$ term in the exponent comes from the need to guess how to assign  jobs to the first  $h$ levels of $\bfT$.  The flexibility of the guessing framework allows us to further improve the running time down to $n^{\tilde O_{m, \epsilon}(1)}$.  We show that we do not need the complete information for all the intervals in the first $h$ intervals of the dyadic tree $\bfT$.  Instead, we guess $\tilde O_{m, \epsilon}(1)$ critical intervals in the sub-tree of $\bfT$ at the first $h$ levels, and we only need the information relevant to the critical intervals.  This way the number of important intervals is reduced from $2^h$ to $\tilde O_{m, \epsilon}(h) = \tilde O_{m, \epsilon}(1)$. 

Thus, both advantages of the guessing framework play important roles in our improved running time. If we had to use the LP hierarchy and conditioning framework, we need to artificially introduce more variables in our LP to encode the information we need to guess (e.g, how to split $J_{[T]}$ between the two sub-instances), making the LP much more involved. The second advantage allows us to save a logarithmic factor in the exponent of the running time, which is critical in obtaining the running time of $n^{\tilde O_{m, \epsilon}(1)}$. From the above overview, we can see that most of the time we make guesses on whether a job is scheduled in the left or right half sub-interval of its current owning interval, which has a binary answer. So each guess will incur a multiplicative factor of $2$ in the running time.  If we use the LP hierarchy framework, we need 1-level in the LP hierarchy for each guess, which corresponds to a multiplicative factor of $\poly(n)$ in the running time. 

To deliver our techniques more smoothly, we first show how to obtain the $(1+\epsilon)$-approximation for $Pm|\text{prec}, p_j=1|C_{\max}$ in time $n^{O\left(\frac{m^4}{\epsilon^3}\log^2n\log\log n\right)}$, in which we make guesses to obtain the complete information on the first $h$ levels of the dyadic tree $\bfT$.  This already covers many essential techniques in our algorithm. Then we show how to further improve the running time to the claimed $n^{O\left(\frac{m^4}{\epsilon^3}\log^3\log n\right)}$ in the appendix.

\subsection{The Power of Linear Programming Hierarchy vs Guessing} In this paper we show that for the makespan minimization problem, the conditioning operations on the lifted LP solution can be replaced by making guesses on what happens in the optimum schedule. This phenomenon arises in some other problems as well.  Grandoni, Laekhanukit, and Li \cite{GLL19} recently gave a tight quasi-polynomial time $O(\log^2k/\log\log k)$-approximation for the Directed Steiner Tree problem, based on the Sherali-Adams hierarchy. Later, Ghuge and Nagarajan \cite{GN20} showed that the same result can be obtained using a combinatorial algorithm, based on guessing what happens in the optimum directed Steiner tree.  The guess-and-divide framework was also used in a recent result of Lokshtanov et al.\ \cite{LMM20} to obtain a tight $2$-approximation for the feedback vertex set on tournament graphs in polynomial time.  One can show that the $2$-approximation can be obtained via an $O(\log n)$-level lift of the Sherali-Adams hierarchy. But due to the recursiveness of the algorithm, it is not clear how one can avoid the $O(\log n)$ factor. So, for this problem, the combinatorial algorithm gives a better running time.  It is interesting to study for many other problems which admit LP hierarchy based algorithms, whether we can use the guessing framework to recover or improve upon these algorithms.
\section{Preliminaries}
\label{sec:prelim}
	Throughout the paper, we use $J^\circ$ to denote the set of all jobs in the input instance, as $J$ will be used heavily. Let $n = |J^\circ|$. By binary search, we assume we know the optimum makespan $T$; notice that $T \leq n \leq mT$. We can assume $T$ is an integer power of $2$ using the reduction described in Appendix~\ref{appendix:manipulations}. To construct a schedule for $\bbJ^\circ$ with makespan at most $(1+\epsilon)T$, it suffices for us to construct a schedule of makespan $T$ with at most $\epsilon T$ jobs discarded, as explained in Appendix~\ref{appendix:manipulations}. Thus we set this as our new goal.  This transformation has also been used in Levey-Rothvoss and Garg.  Since we are allowed to discard jobs, we make the following definition:
	\begin{definition}
		\label{def:original-valid}
		A valid schedule for the input instance $(J^\circ, m, \prec)$ is a vector $\sigma \in ([T] \cup\{ \discarded\})^{J^\circ}$ satisfying:
		\begin{itemize}[itemsep=0pt]
			\item \textbf{capacity constraints}: for every $t \in[T]$, we have $|\sigma^{-1}(t)| \leq m$,  and
			\item \textbf{precedence constraints}: for every $j, j' \in J^\circ\setminus \sigma^{-1}(\discarded)$ with $j \prec j'$, we have $\sigma_j < \sigma_{j'}$.
		\end{itemize}
	\end{definition}
	 In the above definition we used the following shorthands. For every schedule $\sigma' \in (I \cup\{\discarded\})^J$ of some $J \subseteq J^\circ$ in some interval $I \subseteq [T]$, we define $\sigma'^{-1}(t) = \{j \in J: \sigma'_j = t\}$ for every $t \in I \cup \{\discarded\}$. We also define $\sigma'^{-1}(I'):= \{j \in J: \sigma'_j \in I'\} = \union_{t \in I'}\sigma'^{-1}(t)$,  for every sub-interval $I' \subseteq I$. We say jobs in $\sigma'^{-1}(\discarded)$ are discarded in the schedule $\sigma'$.  Our goal is then to find a valid schedule $\sigma \in ([T] \cup \discarded)^{J^\circ}$  with at most $\epsilon T$ jobs discarded.
	
	 \paragraph{Definitions and Notations Related to Precedence Constraints} Given two disjoint sets $J, J' \subseteq J^\circ$, we say there are no precedence constraints from $J$ to $J'$ if for every $j \in J, j' \in J'$, we have $j \not\prec j'$.  We say there are no precedence constraints between $J$ and $J'$ if for every $j \in J, j' \in J'$, we have $j \not \prec j'$ and $j' \not \prec j$.  If $J$ ($J'$, resp.) is a singleton set,  we can replace it with the job it contains in both definitions.  We say a sequence $J_1, J_2, \cdots, J_k$ of disjoint sets of jobs respects the precedence constraints if there are no precedence constraints from $J_{i'}$ to $J_i$ for any $1 \leq i < i' \leq k$.
	 
	Given a subset $J \subseteq J^\circ$ of jobs, we shall use $\Delta(J)$ to denote the length of the longest precedence chain $j_1 \prec j_2 \prec j_3 \prec \cdots \prec j_r$ with $j_1, j_2, \cdots, j_r \in J$. Notice that the $\Delta$ function is subadditive: We have $\Delta(J_1 \cup J_2 \cup \cdots \cup J_k) \leq \Delta(J_1) + \Delta(J_2) + \cdots + \Delta(J_k)$ for $k$ subsets $J_1, J_2, \cdots, J_k$ of $J^\circ$.   For every $J \subseteq J^\circ$ and some $j \in J$, we use $\depth_J(j)$ to denote the length of longest precedence chain $j_1 \prec j_2 \prec j_3 \prec \cdots \prec j_r$ with $j_1, j_2, \cdots, j_r \in J$ and $j_r = j$. It is easy to see that for two jobs $j, j' \in J$ with $j \prec j'$ we have $\depth_J(j) < \depth_J(j')$.  For every $J \subseteq J^\circ$ and $j \in J$, we use $N^-_J(j):=\{j' \in J: j' \prec j\}$ and $N^+_J(j):=\{j' \in J: j\prec j'\}$ to denote the set of predecessors and successors of $j$ in $J$ respectively.
	
	
	\paragraph{Global Parameters}
	Throughout the paper, we shall use the following important global parameters: $h = \left\lceil\log \frac{8m\log T}{\epsilon}\right\rceil = O(\log \log T ), L = \log T - h, h' = \ceil{\log \frac{4m}{\epsilon}}, \delta = \frac{\epsilon}{16\cdot2^hm^2} = \Theta\left(\frac{\epsilon^2}{m^3\log T}\right), \delta' = \frac{1}{2\cdot 2^{2h}}$ and $p = \floor{\frac2\delta\ln\frac{m}{\delta'}} + 1 = \Theta\left(\frac{m^3\log T\log \left(\frac m\epsilon\log T\right)}{\epsilon^2}\right)$, where the $\log$ function has base $2$.  
	For simplicity, we assume $T$ is sufficiently large compared to $m$ and $1/\epsilon$. 

	\subsection{Dyadic Interval Tree and Related Definitions and Notations}
	As in Levey-Rothvoss, a basic structure used in our algorithm is a dyadic tree of intervals in $[T]$. The tree is rooted at the whole interval $[T]$, and each internal interval $(b,e]$ has $(b, (b+e)/2]$ as its left child and $((b+e)/2, e]$ as its right child. All the bottom intervals have length $2^h = \Theta\left(\frac{m\log T}{\epsilon}\right)$; that is, we stop the partitioning if we obtained intervals of length $2^h$.  Since we assumed $T$ is an integer power of $2$, the starting and ending time of all intervals in the tree are integers. We view time intervals as intervals of integers and often use the left-open-right-closed form to denote them. That is, for integers $0 \leq b \leq e \leq T$, $(b, e]$ contains integers $b+1, b+2, \cdots, e$ and thus the size of $(b, e]$ is $e-b$. 

	We use $\bfT$ to denote the dyadic tree, and $\calI$ to denote the set of intervals in the tree. Thus $\bfT$ contains $\log \frac{T}{2^h}+1 = \log T - h + 1 =  L+1$ levels and we index them using $0$ to $L$ from top to bottom: For every $\ell \in [0, L]$, there are  $2^\ell$ intervals of  length $2^{-\ell}T$ at level $\ell$; we use $\calI_\ell$ to denote these intervals.    We say levels $0$ to $L-h'-1$ are top levels, levels $L-h'$ to $L-1$ are middle levels, and level $L$ is the bottom level.  The intervals at top, middle and bottom levels are called top, middle and bottom intervals respectively, and we use $\calI_\sftop := \union_{\ell \in [0, L-h'-1]}\calI_\ell$, $\calI_\sfmid := \union_{\ell \in [L-h', L-1]} \calI_\ell$ and $\calI_\sfbot = \calI_L$ to denote them. For any $\ell$ we use $\calI_{<\ell}, \calI_{\leq\ell}, \calI_{>\ell}$ and $\calI_{\geq\ell}$ for $\union_{\ell'<\ell}\calI_{\ell'},  \union_{\ell'\leq\ell}\calI_{\ell'}, \union_{\ell'>\ell}\calI_{\ell'}$ and $\union_{\ell'\geq\ell}\calI_{\ell'}$ respectively. For simplicity we assume if $\ell \notin [0, L]$ then $\calI_\ell = \emptyset$. See Figure~\ref{fig:tree} for the dyadic tree structure. 

	\begin{figure}
		\centering
		\includegraphics[width=0.5\textwidth]{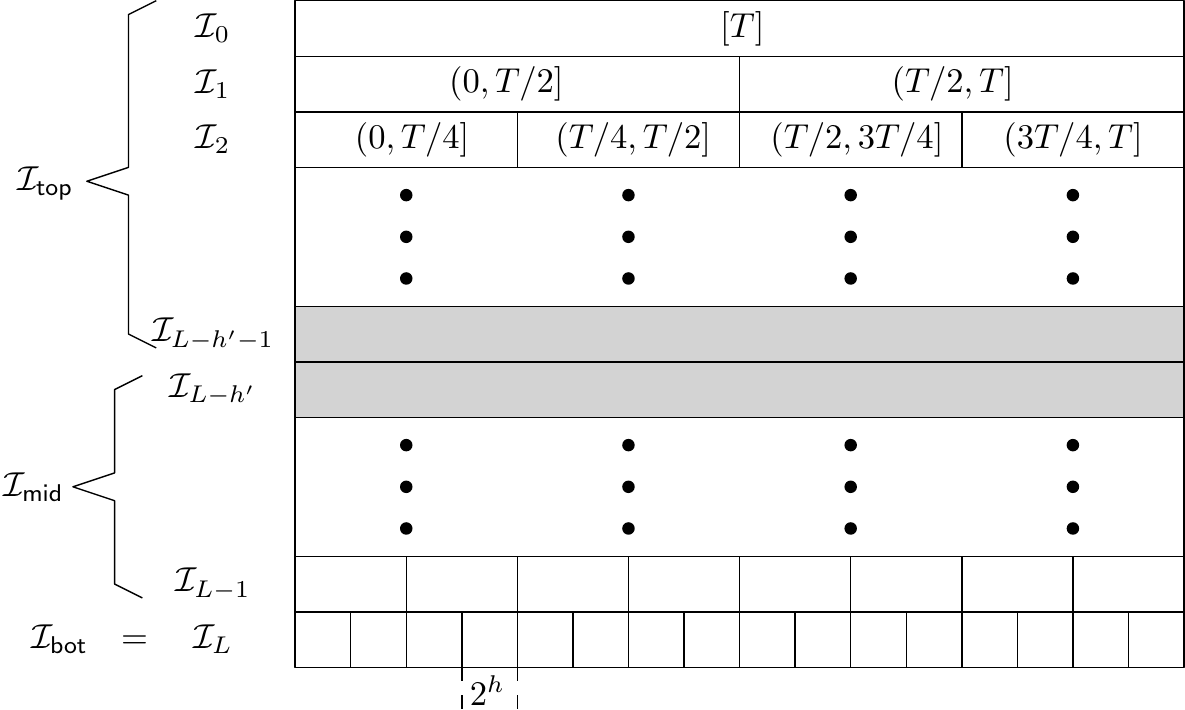} \caption{Dyadic Tree $\bfT$.} \label{fig:tree}
	\end{figure}
	
	Fix any $I\in\calI$, we use $\calI[I] = \{I' \in \calI: I' \subseteq I\}$ to denote the sub-intervals of $I$ in $\calI$, and define $\calI_\sftop[I] = \calI_\sftop \cap \calI[I]$, $\calI_\sfmid[I] = \calI_\sfmid \cap \calI[I]$ and $\calI_\sfbot[I] = \calI_\sfbot \cap \calI[I]$.  For every $\ell \in [0, L]$ and $I^* \in \calI$, we use $\calI_\ell[I^*]$ to denote the set of intervals in $\calI[I^*]$ of length $2^{-\ell}|I^*|$. Notice that in this definition $\ell$ is the \emph{relative} level:  Every interval in the set $\calI_\ell[I^*]$ has its level being $\ell$ plus the level of $I^*$.  Similarly, for every $\ell$, we use $\calI_{<\ell}[I^*], \calI_{\leq\ell}[I^*], \calI_{>\ell}[I^*]$ and $\calI_{\geq\ell}[I^*]$ for $\union_{\ell'<\ell}\calI_{\ell'}[I^*],  \union_{\ell'\leq\ell}\calI_{\ell'}[I^*], \union_{\ell'>\ell}\calI_{\ell'}[I^*]$ and $\union_{\ell'\geq\ell}\calI_{\ell'}[I^*]$ respectively. 
	
	For an interval $I \in \calI_\sftop \cup \calI_\sfmid$, we use $\sfleft(I)$ and $\sfright(I)$ to denote the left and right child intervals of $I$ respectively. If some $I \in \calI$ has $I= (b, e]$, we define $\sfbegin(I) = b, \sfend(I) = e$ and $\sfcenter(I) = (b+e)/2$ to denote the starting, ending and middle time points of $I$ respectively.  Notice that $\sfbegin(I) \notin I$ and $\sfend(I) \in I$.  Thus, $I = (\sfbegin(I), \sfend(I)], \sfleft(I) = (\sfbegin(I), \sfcenter(I)]$ and $\sfright(I) = (\sfcenter(I), \sfend(I)]$. 
	For two intervals $I, I' \in \calI$, we use $I \inorderless I'$ to denote that $I$ is before $I'$ in the in-order traversal of $\bfT$. Notice that this is equivalent to $\sfcenter(I) < \sfcenter(I')$, but we always use $I \inorderless I'$ since it emphasizes the relationship with the in-order traversal of $\bfT$.
	
	We compare our dyadic-tree structure with that in \cite{LR16} and \cite{Garg18}. The trees used in \cite{LR16} and \cite{Garg18} are very shallow:  the depth is $O_{m, \epsilon}(\log^2\log T)$ in \cite{LR16} and $O_{m, \epsilon}(\log\log T)$ in \cite{Garg18}.
	 So, their algorithms are recursive: the instance correspondent to a bottom interval has to be solved recursively. In contrast, our dyadic tree has depth $\log T - O(\log \log T)$, and each bottom instance can be solved directly by enumeration. Later, we shall see the recursiveness of our algorithm is in the construction of a ``{dyadic} system''.  Also, the number of middle levels in our tree is only $h' = O(\log \frac{m}{\epsilon})$, which is independent of $T$, while the number is $O_{m, \epsilon}(\log\log T)$ in both Levey-Rothvoss and Garg. 

	We use the following shorthands throughout the paper. Suppose we have a vector $(A_I)_{I \in \calI'}$ of sets for some subset $\calI' \subseteq \calI$ of intervals. Then for every interval $W \subseteq [T]$, we define $A_{\subseteq W} := \union_{I \in \calI': I \subseteq W} A_I, A_{\supseteq W} := \union_{I \in \calI': I \supseteq W} A_I$ and $A_{\supsetneq W} := \union_{I \in \calI': I \supsetneq W} A_I$.  In this definitions, $A$ can be replaced by other symbols.

	\subsection{Helper Lemmas}  Now we give some simple helper lemmas that will be used later. Their proofs can be found in Appendix~\ref{appendix:proof-helper}. The following lemma says that indeed a set $J$ of jobs with small $\Delta(J)$ value is easy to schedule:
	\begin{restatable}{lemma}{listscheduling}
		\label{lemma:list-scheduling}
		Given a set $J \subseteq J^\circ$ of jobs, a time interval $I \subseteq [T]$, and an integer capacity function $\sfcap: I \to [0, m]$ such that $\sum_{t \in I} \sfcap(t) \geq |J|$. Then we can efficiently find a schedule $\sigma \in (I \cup \{\discarded\})^{J}$ of $J$ in $I$ satisfying the precedence constraints and the capacity constraints w.r.t $\sfcap$: for every $t\in I$ we have $|\sigma^{-1}(t)| \leq \sfcap(t)$.  Moreover, the number of discarded jobs in $\sigma$ is at most $m\Delta(J)$.
	\end{restatable}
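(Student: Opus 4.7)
The plan is to adapt Graham's list scheduling to the time-varying capacity, grouping the jobs of $J$ by their $\depth_J$ value for the analysis. I would first compute $d_j := \depth_J(j)$ for every $j \in J$ via a topological traversal of $\prec$, and partition $J$ into the antichains $J_d := \{j \in J : d_j = d\}$ for $d = 1, 2, \ldots, \Delta(J)$. Because $j \prec j'$ forces $\depth_J(j) < \depth_J(j')$, the sequence $J_1, J_2, \ldots, J_{\Delta(J)}$ respects the precedence constraints in the sense of the preliminaries, so it is enough to place the groups in strictly increasing blocks of time slots.

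The algorithm is then a single left-to-right sweep through $I$ maintaining a pointer $t \in I$ and a current depth $d$ starting at $1$. At each step, if $J_d$ still has an unplaced job, place one at $t$ and update the count of jobs already at $t$; as soon as the count reaches $\sfcap(t)$, advance $t$ by one. When $J_d$ becomes empty, force one extra advance of $t$ (provided at least one job was just placed at the current $t$) and increment $d$. Any job whose turn comes after the sweep leaves $I$ is set to $\discarded$. The capacity constraint holds by construction, and the forced advance between depth groups guarantees $\sigma_j < \sigma_{j'}$ whenever $j \prec j'$ and both are placed: every predecessor of a $J_{d+1}$-job sits in some $J_{d'}$ with $d' \leq d$, hence at a strictly earlier slot.

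For the discard bound I would use a simple capacity-accounting argument. The capacity at a slot $t \in I$ is either consumed by a placed job or \emph{wasted}, and waste occurs only at a slot where a depth transition takes place; since at least one job of the ending group $J_d$ was just placed there, the wasted portion is at most $\sfcap(t) - 1 \leq m - 1$. Summing over the at most $\Delta(J) - 1$ transitions bounds the total waste by $(m-1)(\Delta(J)-1) < m\Delta(J)$. If every job is placed there are no discards; otherwise the sweep consumed every slot of $I$, so
\[
|J| \;\leq\; \sum_{t \in I} \sfcap(t) \;=\; \bigl(|J| - |\sigma^{-1}(\discarded)|\bigr) + \text{waste},
\]
which rearranges to $|\sigma^{-1}(\discarded)| \leq \text{waste} < m\Delta(J)$. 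The only mildly delicate point is the forced advance after each depth group: without it, a job in $J_{d+1}$ could land in the same slot as its $J_d$-predecessor and violate $\sigma_j < \sigma_{j'}$; once that is in place, the remainder is routine and the whole procedure clearly runs in polynomial time.
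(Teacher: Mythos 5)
Your proof is correct and follows essentially the same strategy as the paper's: a left-to-right list schedule whose analysis charges each discarded job to wasted capacity, with waste occurring at most $\Delta(J)$ times and at most $m$ units each time. The only real difference is cosmetic --- you pre-commit to scheduling the depth-antichains $J_1,\dots,J_{\Delta(J)}$ in disjoint blocks of slots with a forced advance between them, whereas the paper runs the standard greedy (schedule as many ready jobs as capacity allows at each $t$) and observes that any under-filled slot exhausts the current minimal jobs, decreasing $\Delta$ of the remaining set by one.
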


	To state the following lemma, we need a small definition. For four integers $z_1, z_2, z_3, z_4$, we define $\langle z_1, z_2\rangle < \langle z_3, z_4\rangle$ if $z_1 < z_3$ or $z_1 = z_3$ and $z_2 < z_4$.  Thus $<$ defines a total order over the 2-dimensional vectors $\langle z, z'\rangle \in \Z^2$. We define $\langle z_1, z_2\rangle \leq \langle z_3, z_4\rangle$ if $\langle z_1, z_2\rangle < \langle z_3, z_4\rangle$ or $\langle z_1, z_2\rangle = \langle z_3, z_4\rangle$.
	
	\begin{restatable}{lemma}{lengthaddup}
		\label{lemma:length-add-up}
		Let $J \subseteq J^\circ$ and $c: J \to Z \subseteq \Z$ be a function that maps $J$ to integers in $Z$.
		Let  $J_1, J_2, \cdots, J_k$  be disjoint subsets of $J$ (which do not necessarily form a partition of $J$). Assume the sequence $J_1, J_2, \cdots, J_k$ is consistent with the ordering of $\langle c(j), \depth_J(j)\rangle$: Formally, for every $j \in J_i, j' \in J_{i'}$ with $1\leq i < i' \leq k$ we have $\langle c(j), \depth_J(j)\rangle \leq \langle c(j'), \depth_J(j')\rangle$. Then
		\begin{align*}
			\Delta(J_1) + \Delta(J_2) + \cdots + \Delta(J_k) \leq |Z|\cdot \Delta(J) + k - 1.
		\end{align*}
	\end{restatable}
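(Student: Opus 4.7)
The plan is to handle the special case $|Z|=1$ directly and then reduce the general case to it by partitioning each $J_i$ along the value of $c$.

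For the $|Z|=1$ case the hypothesis becomes $\depth_J(j) \leq \depth_J(j')$ whenever $j \in J_i$, $j' \in J_{i'}$, and $i<i'$. Writing $v_i := \min_{j \in J_i}\depth_J(j)$ and $u_i := \max_{j \in J_i}\depth_J(j)$, and using that $\depth_J$ strictly increases along any precedence chain in $J$, any chain inside $J_i$ uses $\Delta(J_i)$ distinct depth values inside $[v_i,u_i]$, giving $\Delta(J_i) \leq u_i - v_i + 1$. Combined with $u_i \leq v_{i+1}$ (from the hypothesis) and $\depth_J \in [1,\Delta(J)]$, a standard telescoping yields $\sum_i \Delta(J_i) \leq k + (u_k - v_1) \leq \Delta(J) + k - 1$.

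For general $Z$, set $J_i^z := J_i \cap c^{-1}(z)$ and $S_z := \{i \in [k]: J_i^z \neq \emptyset\}$ for each $z \in Z$. The lex-order hypothesis forces $c(j) \leq c(j')$ whenever $j \in J_i$, $j' \in J_{i'}$, $i<i'$; after discarding empty $J_i$'s (which affects nothing) this pinching gives two structural facts: (a) each $S_z$ is an interval of $[k]$ (if $i_1 < i_2 < i_3$ with $z \in c(J_{i_1})\cap c(J_{i_3})$, every $c$-value in $J_{i_2}$ is squeezed to equal $z$), and (b) $\max S_z \leq \min S_{z'}$ for $z<z'$ in $Z$. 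A one-line telescoping based on (b) then gives $\sum_{z \in Z} |S_z| \leq k + |Z| - 1$. Now restricting to a single $c$-value collapses lex order to $\depth_J$ order, so the $|Z|=1$ bound applies to each sub-sequence $(J_i^z)_{i \in S_z}$ and yields $\sum_{i \in S_z}\Delta(J_i^z) \leq \Delta(J) + |S_z| - 1$. Combined with subadditivity $\Delta(J_i) \leq \sum_{z \in c(J_i)}\Delta(J_i^z)$ and swapping summations,
\[
\sum_i \Delta(J_i) \;\leq\; \sum_{z \in Z}\sum_{i \in S_z}\Delta(J_i^z) \;\leq\; \sum_{z \in Z}\bigl(\Delta(J) + |S_z| - 1\bigr) \;\leq\; |Z|\,\Delta(J) + k - 1.
\]

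The main obstacle is coordinating two independent sources of slack: subadditivity counts a single $J_i$ once for every $z$-value it contains, and this overcount must be exactly absorbed by the $+|Z|-1$ term in the telescoping bound on $\sum_z |S_z|$. Both rely on the structural facts (a) and (b), which in turn rely on the weak monotonicity of $c$ across the sequence (a direct consequence of the lex-order hypothesis). Once (a) and (b) are in hand, the rest is routine telescoping, subadditivity, and bookkeeping.
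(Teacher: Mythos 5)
Your proof is correct, but it is organized quite differently from the paper's. The paper argues in one shot: a maximum chain in $J_i$ has strictly increasing $\depth_J$ values, hence contributes $\Delta(J_i)$ distinct vectors $\langle c(j), \depth_J(j)\rangle$; since the vector sets of consecutive $J_i$'s can overlap in at most one vector (by the ordering hypothesis), the union contains at least $\sum_i \Delta(J_i) - (k-1)$ distinct vectors, and there are at most $|Z|\cdot\Delta(J)$ vectors available in total. You instead prove the $|Z|=1$ case by a telescoping on depth ranges and then reduce the general case to it by slicing each $J_i$ into $J_i^z = J_i\cap c^{-1}(z)$, invoking the subadditivity of $\Delta$ and a separate count $\sum_z |S_z| \leq k+|Z|-1$. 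All of your steps check out: the weak monotonicity of $c$ across the sequence gives your fact (b), which is all the telescoping needs (fact (a) is true but not required, since $|S_z|\leq \max S_z - \min S_z + 1$ holds for any subset), the lex order does collapse to $\depth_J$-order within a fixed $c$-value, and the final bookkeeping is consistent even when some $S_z$ are empty. The trade-off: the paper's argument is shorter and needs no auxiliary ingredient, while yours is more modular --- it isolates a clean one-dimensional base case and makes explicit where the $|Z|\cdot\Delta(J)$ and the $k-1$ terms each come from, at the cost of invoking subadditivity of $\Delta$ and verifying the interval structure of the $S_z$'s.
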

	
	The next lemma requires one definition.  Let $A$ be a set with a strict partial order $\Aless$. Let $\pi : A \to \Z$ be a function from $A$ to integers. We say an unordered pair $\{a, b\}$ in $A$ is an inversion in $\pi$ w.r.t the relation $\Aless$ if $a \Aless b$ but $\pi(b) < \pi(a)$, or $b \Aless a$ but $\pi(a) < \pi(b)$. The lemma says that swapping $\pi(a)$ and $\pi(b)$ for an inversion $\{a, b\}$ in $\pi$ will decrease the number of inversions.	
	\begin{restatable}{lemma}{reduceinversions}
		\label{lemma:reduce-inversions}
		Let $A$ be a set with a strict partial order $\Aless$ and $\pi : A \to \Z$ to be a function from $A$ to integers. 
	 	Let $\{a, b\}$ be an inversion in $\pi$ w.r.t $\Aless$, and $\pi'$ be obtained from $\pi$ by swapping $\pi(a)$ and $\pi(b)$: $\pi'(a) = \pi(b), \pi'(b) = \pi(a)$ and $\pi'(c) = \pi(c)$ for every $c \in A \setminus \{a, b\}$.  Then the number of inversions in $\pi'$ w.r.t $\Aless$ is strictly less than that in $\pi$. 
	\end{restatable}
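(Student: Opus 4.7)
The plan is to show the swap strictly reduces the inversion count by a local accounting argument. Without loss of generality I would assume $a \Aless b$ and $\pi(a) > \pi(b)$; the other orientation of the inversion is symmetric. I would partition all unordered pairs in $A$ into three groups: (i) the pair $\{a,b\}$ itself, (ii) pairs $\{a,c\}$ or $\{b,c\}$ for some $c \in A \setminus \{a, b\}$, and (iii) pairs disjoint from $\{a,b\}$. Group (iii) is irrelevant because $\pi$ and $\pi'$ agree off $\{a,b\}$. Group (i) contributes exactly one inversion to $\pi$ and zero to $\pi'$, since $\pi'(a) = \pi(b) < \pi(a) = \pi'(b)$ while $a \Aless b$. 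So the lemma reduces to showing: for every $c \in A \setminus \{a,b\}$, the number of inversions among $\{a,c\}$ and $\{b,c\}$ in $\pi'$ is at most that in $\pi$.

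I would then fix such a $c$ and case-split on the relation of $c$ to $a$ and $b$ under $\Aless$. Transitivity together with $a \Aless b$ prunes the tree: if $c \Aless a$ then automatically $c \Aless b$, and if $b \Aless c$ then automatically $a \Aless c$. So the only cases are (1) $c \Aless a$ (hence also $c \Aless b$), (2) $b \Aless c$ (hence also $a \Aless c$), (3) $a \Aless c$ with $c$ incomparable to $b$, (4) $c \Aless b$ with $c$ incomparable to $a$, (5) $a \Aless c \Aless b$, and (6) $c$ incomparable to both. Cases (1) and (2) are symmetric: the two $\pi$-values relevant to $c$ are merely swapped, so the count is preserved. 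Case (6) contributes zero before and after. In case (3), only the pair $\{a,c\}$ can be an inversion, and the defining condition changes from $\pi(a) > \pi(c)$ to $\pi(b) > \pi(c)$; since $\pi(a) > \pi(b)$, the latter implies the former, so the count cannot increase. Case (4) is symmetric.

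The genuinely delicate case is (5), where both pairs involving $c$ may be affected simultaneously. Here the before-count is $[\pi(a) > \pi(c)] + [\pi(c) > \pi(b)]$ and the after-count is $[\pi(b) > \pi(c)] + [\pi(c) > \pi(a)]$. A short enumeration of the position of $\pi(c)$ relative to $\pi(b)$ and $\pi(a)$ shows that the two counts agree at $1$ when $\pi(c) < \pi(b)$ or $\pi(c) > \pi(a)$, while for $\pi(b) \leq \pi(c) \leq \pi(a)$ the after-count is $0$ and the before-count is $1$ or $2$. In every subcase the after-count is at most the before-count. Combining the three groups yields $\mathrm{inv}(\pi') \leq \mathrm{inv}(\pi) - 1$, the strict inequality desired. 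The main obstacle beyond bookkeeping is handling ties in $\pi$ carefully, since the codomain is all of $\Z$; but the inequalities in each subcase remain of the correct strictness to preserve the bound.
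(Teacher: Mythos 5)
Your proposal is correct and follows essentially the same route as the paper: isolate the pair $\{a,b\}$, which loses one inversion, and show by a case analysis on how $c$ compares to $a$ and $b$ that no pair involving exactly one of $a,b$ gains an inversion. The only cosmetic difference is that your case $a \Aless c \Aless b$ is settled by explicit enumeration, whereas the paper subsumes it (together with your cases (3), (4), (6)) under a single monotonicity observation — that since $\pi'(a) < \pi(a)$ and $\pi'(b) > \pi(b)$, an inversion of $\{a,c\}$ or $\{b,c\}$ in $\pi'$ forces one in $\pi$ — applied to each pair separately.
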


\subsection{Overview of the Algorithm and Organization of the Paper}
To deliver our ideas more smoothly, we first prove Theorem~\ref{thm:main} with a worse running time of $n^{O\left(\frac{m^4}{\epsilon^3}\cdot\log^2n\cdot \log \log n\right)}$, which is already much better than the running time of $n^{\log^{O(m^2/\epsilon^2)}n}$ given by Garg \cite{Garg18}.  We show how to improve the running time to $n^{O\left(\frac{m^4}{\epsilon^3}\log^3 \log n\right)}$ in Appendix~\ref{sec:improved} and \ref{sec:improved-algo}.

In Section~\ref{sec:dyadic}, we define an important structure called a \emph{dyadic system} (Definition~\ref{def:dyadic-system}), which corresponds to an assignment $(J_I)_{I \in \calI}$ of $J^\circ$ to top and bottom intervals in $\bfT$. The assignment is consistent with the precedence constraints: If $I \inorderless I'$ then there are no precedence constraints from jobs assigned to $I'$ to jobs assigned to $I$.  Jobs assigned to a top interval $I$ has a small maximum chain length compared to $|I|$.   Then in a valid schedule for a dyadic system we require jobs assigned to some interval $I$ is scheduled in $I$ or discarded (Definition~\ref{def:valid}). Our algorithm will try to construct a dyadic system $\bbJ$ along with a valid schedule for it.  

In Section~\ref{subsec:optimum-dyadic} we show how to construct a dyadic system $\bbJ^*$ from the optimum schedule $\sigma^*$, and a set $\{g^*_I\}_{I \in \calI_\sftop \cup \calI_\sfmid}$ of short vectors in $\{\sfL, \sfR\}^*$.  Each $g^*_I$ gives us the answers to the guesses we made to reduce the chain length of jobs assigned to $I$. The $\bbJ^*$ and $g^*_I$ vectors are only used in the analysis since our algorithm does not know $\sigma^*$.  Roughly speaking, our algorithm tries to guess the vectors $\{g^*_I\}$, or equivalently, how to assign each job in the middle of a long chain to the left or right half of its owning interval, to recover the dyadic system $\bbJ^*$. 

To allow us to ignore the precedence constraints incident on top jobs  in a dyadic system $\bbJ$ (i.e, jobs assigned to top intervals in $\bbJ$), we define a window for each top job in Section~\ref{sec:virtual}. We then replace the precedence constraints incident on top jobs to window constraints: Each top job should be scheduled in its window or discarded. We call such a schedule a \emph{virtually-valid} schedule for $\bbJ$ (Definition~\ref{def:virtually-valid}).  

Then in Section~\ref{sec:conversion}, we show valid and virtually-valid schedules for a dyadic system can be converted to each other, up to the discarding of a few jobs. First  in Section~\ref{subsec:valid-to-virtually-valid} we show that the optimum schedule $\sigma^*$ can be converted to a virtually-valid schedule $\sigma'^*$ for the dyadic system $\bbJ^*$ with a few discarded jobs. Second, given any dyadic system $\bbJ$ and a virtually-valid schedule $\sigma''$ for $\bbJ$, we can efficiently construct a valid schedule $\sigma$ for $\bbJ$ with a few extra jobs discarded (Section~\ref{subsec:virtually-valid-to-valid}).  With the connection, the goal of our algorithm becomes to make guesses to recover $\bbJ^*$ and the virtually-valid schedule $\sigma'^*$.  

Finally in Section~\ref{sec:algo}, we present our recursive algorithm which constructs a dyadic system $\bbJ^\best$ along with a virtually-valid schedule $\sigma^\best$ by making guesses on $\bbJ^*$ and $\sigma'^*$.  The returned $\bbJ^\best$ and $\sigma^\best$ may be different from $\bbJ^*$ and $\sigma'^*$, but due to the existence of $\bbJ^*$ and $\sigma'^*$, we are guaranteed that $|(\sigma^\best)^{-1}(\discarded)| \leq |\sigma'^{*-1}(\discarded)|$.   To guarantee that our algorithm has a small running time, we need to break the problem into two separate sub problems with a few guesses. This is possible due to the following two properties. First, the window of a job $j$ in $J_{[T]}$, which is the set of jobs assigned to $[T]$ in $\bbJ$,  only depends on jobs assigned to the first $h$ levels of intervals in $\bfT$. Thus we only need to guess $g^*_I$ vectors for $I \in \calI_{<h}$ in order to define the windows of $J_{[T]}$. Second the windows for $J_{[T]}$ have boundaries being integer multiplies of $2^{-h}T$ and thus there are not too many possible windows. So we can afford to guess how to split $J_{[T]}$ into $(0, T/2]$ and $(T/2, T]$ and break the problem into two sub-problems.   

In Appendix~\ref{sec:improved} and \ref{sec:improved-algo}, we show how to improve the running time to $n^{O\left(\frac{m^4}{\epsilon^3}\log^3 \log n\right)}$ using a more careful guessing procedure.  The overview of the algorithm will be given at the beginning of Appendix~\ref{sec:improved}. 

We remark that many ingredients in our algorithm and analysis can also be found in Levey-Rothvoss \cite{LR16}; for example, the definition of window constraints, the ideas used to establish the connection between valid schedules and virtually-valid ones for a dyadic system are motivated by the techniques in \cite{LR16}. 
\section{Dyadic System}
	\label{sec:dyadic}
	In this section, we describe a core structure that our algorithm uses: (partial) {dyadic} systems. Some ingredients in the structure were used in Levey-Rothvoss \cite{LR16} and Garg \cite{Garg18}; for our algorithm and analysis, it is useful to define the structure explicitly. 
	\begin{definition}
		\label{def:dyadic-system}
		Given an interval ${I^*} \in \calI$, a \emph{{partial dyadic} system} $\bbJ$ over ${I^*}$ is a tuple $\big({J_\anc},  b^\anc \in [0, T]^{J_\anc}, e^\anc \in [0, T]^{J_\anc}, (J_I)_{I \in \calI[I^*]}\big)$ where
		\begin{enumerate}[label=(\ref{def:dyadic-system}\alph*)]
			\item ${J_\anc}$ and $J_I$'s are mutually disjoint subsets of $J^\circ$,  \label{property:dyadic-disjoint}
			\item for every $I \in \calI_\sftop[I^*]$,  we have $\Delta(J_I) \leq \delta|J_I| + \delta'|I|$, \label{property:dyadic-chain-length} 
			\item for every $I \in \calI_\sfmid[I^*]$, we have $J_I = \emptyset$, and \label{property:dyadic-middle-empty} 
			\item for every $I, I' \in \calI[I^*]$ with $I \inorderless I'$, there are no precedence constraints from $J_{I'}$ to $J_I$. \label{property:dyadic-precedence}
		\end{enumerate}		
		In the partial dyadic system $\bbJ$, we say jobs in $J_I$ \emph{are assigned to} the interval $I$, and $I$ is the \emph{owning interval} of jobs in $J_I$. The jobs assigned to $\calI_\sftop[I^*]$ are called \emph{top jobs}, and the jobs assigned to $\calI_\sfbot[I^*]$ are called \emph{bottom jobs}.  Notice that by Property~\ref{property:dyadic-middle-empty}, there are no middle jobs.
		Jobs in ${J_\anc}$ are called \emph{ancestor} jobs ($\anc$ stands for ``ancestor'').

		We simply say $\bbJ$ is a \emph{{dyadic} system} if additionally we have $I^* = [T], {J_\anc} = \emptyset$ and $J^\circ = J_{\subseteq [T]}$ (which is $\union_{I \in \calI}J_I$). We simply use $\bbJ = (J_I)_{I \in \calI}$ to denote a dyadic system.
	\end{definition}
		
	In this section we only focus on (non-partial) {dyadic} systems; we shall discuss partial ones when we need to use them.  In a {dyadic} system $\bbJ$, $(J_I)_{I \in \calI}$ form a partition of $J^\circ$ (Property~\ref{property:dyadic-disjoint} and that $J^\circ = J_{\subseteq [T]}$).  Property~\ref{property:dyadic-chain-length} requires that for a top interval $I \in \calI_\sftop$, the maximum chain length of jobs in $J_I$ is small. Property~\ref{property:dyadic-middle-empty} says that no jobs are assigned to middle levels.  
	Property~\ref{property:dyadic-precedence} requires that the sequence $(J_I)_{I \in \calI}$ according to the order $\inorderless$ respects the precedence constraints.


	\begin{definition}
		\label{def:valid}
		Given a {dyadic} system $\bbJ = (J_I)_{I \in \calI}$, a vector $\sigma \in \big({[T]} \cup \{\discarded\}\big)^{J^\circ}$  is said to be a valid schedule for $\bbJ$, if it satisfies the \textbf{capacity constraints}, \textbf{precedence constraints} as in Definition~\ref{def:original-valid}, and
		\begin{itemize}[leftmargin=*]
			\item \textbf{interval constraints}: for every $I \in \calI$ and $j \in J_I$, we have $\sigma_j \in I\cup \{\discarded\}$. 
		\end{itemize}		
	\end{definition}
 	So, for $\sigma$ to be valid schedule for a {dyadic} system, we additionally require each job $j$ to be scheduled inside its owning interval or discarded. 

\subsection{Dyadic System and Schedule from the Optimum Solution}
	\label{subsec:optimum-dyadic}
	In this section, we assume we are given an optimum valid schedule $\sigma^* \in [T]^{J^\circ}$ to the input instance (without discarded jobs). We shall construct a {dyadic} system $\bbJ^* = (J^*_I)_{I \in \calI}$ for which $\sigma^*$ is valid.  Notice that $\sigma^*$, $\bbJ^*$ and the procedure for constructing $\bbJ^*$ are only used in our analysis, instead of the algorithm. 

	In the recursive algorithm $\cstrJStar$ described in Algorithm~\ref{alg:cstrJstar},  we construct the {dyadic} system $\bbJ^* = (J^*_I)_{I \in \calI}$ for which the schedule $\sigma^*$ is valid. The algorithm also defines for every $I\in\calI$,  $K^*_I = J^*_{\subseteq I}$ to be the set of jobs assigned to sub-intervals of $I$ in the system $\bbJ^*$, and a vector $g^*_I \in \{\sfL, \sfR\}^*$ for every $I \in \calI_\sftop \cup \calI_\sfmid$.   Initially, we set $K^*_{[T]} = J^\circ$ and call $\cstrJStar([T])$. 
	\begin{algorithm}
		\caption{$\cstrJStar(I)$}
		\label{alg:cstrJstar}
		\begin{algorithmic}[1]
			\If{$I \in \calI_\sfbot$} $J^*_I \gets K^*_I$, \Return \EndIf \label{step:cstrJstar-bottom}
			\State $J^*_I \gets K^*_I,\ g^*_I \gets (),\ K^*_{\sfleft(I)} \gets \emptyset,\ K^*_{\sfright(I)} \gets \emptyset$, 
			define function $\len(x) = \begin{cases} 
				\delta x+\delta'|I| & \text{ if } I \in \calI_\sftop \\ 
				0 & \text{ if } I \in \calI_\sfmid
			\end{cases} $
			\label{step:cstrJstar-init}
			\While{$\Delta(J^*_I) > \len(|J^*_I|)$} \label{step:while}
				\State let $j \in J^*_I$ be a job with $|N^+_{J^*_I}(j)|,|N^-_{J^*_I}(j)| \geq \len(|J^*_I|)/2-1$, chosen in a deterministic way\label{step:cstrJstar-choose-j}
				\If{$\sigma^*_j \in \sfleft(I)$}
					\State append $\sfL$ to the vector $g^*_I$, and move $\{j\} \cup N^-_{J^*_I}(j)$ from $J^*_I$ to $K^*_{\sfleft(I)}$
				\Else \Comment{We must have $\sigma^*_j \in \sfright(I)$}
					\State append $\sfR$ to the vector $g^*_I$, and move $\{j\} \cup N^+_{J^*_I}(j)$ from $J^*_I$ to $K^*_{\sfright(I)}$
				\EndIf
			\EndWhile
			\State $\cstrJStar(\sfleft(I))$, $\cstrJStar(\sfright(I))$ \label{step:recurse}
		\end{algorithmic}
	\end{algorithm}

	At the beginning of any recursion $\cstrJStar(I)$, we have constructed the set $K^*_I$ and our goal is to assign $K^*_I$  to sub-intervals of $I$. It is guaranteed that all jobs in $K^*_I$ are scheduled in $I$ in $\sigma^*$. If $I \in \calI_\sfbot$, we set $J^*_I = K^*_I$ and return immediately (Step~\ref{step:cstrJstar-bottom}). Thus, we now assume $I \in \calI_\sftop \cup \calI_\sfmid$. Initially, all jobs are assigned to $I$ and thus we set $J^*_I = K^*_I$ and $K^*_{\sfleft(I)} = K^*_{\sfright(I)} = \emptyset$ (Step~\ref{step:cstrJstar-init}). We need to guarantee that $\Delta(J^*_I) \leq \delta|J^*_I| + \delta'|I|$ if $I \in \calI_\sftop$; when $I \in \calI_\sfmid$, we need $\Delta(J^*_I) = 0$, i.e, $J^*_I = \emptyset$.  This motivates the definition of the function $\len$. 
	Suppose at the beginning of some iteration in Loop~\ref{step:while}, we have $\Delta(J^*_I) > \len(|J^*_I|)$.  Then  there is a chain of jobs in $J^*_I$ of length at least  $\len(|J^*_I|)$. The bottom job $j$ in the chain has $|N^-_{J^*_I}(j)|, |N^+_{J^*_I}(j)| \geq \len(|J^*_I|)/2 - 1$ (this holds whenever $\len(|J^*_I|) \geq 0$).  Thus we can always find a job $j$ in Step~\ref{step:cstrJstar-choose-j} satisfying the condition.  Then we check whether $j$ is scheduled in $\sfleft(I)$ or $\sfright(I)$ in $\sigma^*$.  In the former case, all jobs in $N^-_{J^*_I}(j)$  are scheduled in $\sfleft(I)$ in $\sigma^*$ due to the precedence constraints and thus we can move $\{j\} \cup N^-_{J^*_I}(j)$  from $J^*_I$ to $K^*_{\sfleft(I)}$. Similarly in the latter case, all jobs in $N^+_{J^*_I}(j)$ are scheduled in $\sfright(I)$ in $\sigma^*$ and we can move $\{j\} \cup N^+_{J^*_I}(j)$  from $J^*_I$ to $K^*_{\sfright(I)}$.  Then the vector $g^*_I \in \{\sfL, \sfR\}^*$ will indicate whether each $j$ considered in the loop is scheduled in $\sfleft(I)$ or $\sfright(I)$.   
		
	So after the while loop, we are guaranteed that $\Delta(J^*_I) \leq \len(|J^*_I|)$, 
	$K^*_{\sfleft(I)}, K^*_{\sfright(I)}$ and $J^*_I$ form a partition of $K^*_I$, $K^*_{\sfleft(I)} \subseteq \sigma^{*-1}(\sfleft(I))$ and $K^*_{\sfright(I)} \subseteq \sigma^{*-1}(\sfright(I))$. If $I \in \calI_\sftop$, then $\Delta(J^*_I) \leq \delta|J^*_I| + \delta' |I|$ and if $I \in \calI_\sfmid$ then $J^*_I = \emptyset$.  Moreover, it is easy to see that during any moment in the while loop, the sequence $K^*_{\sfleft(I)}, J^*_I, K^*_{\sfright(I)}$ respects the precedence constraints: This is satisfied before the while loop, and it is maintained since whenever we move some $j$ from $J^*_I$ to $K^*_{\sfleft(I)}$, all its predecessors are moved, and whenever we move some $j$ from $J^*_I$ to $K^*_{\sfright(I)}$, all its successors are moved. Thus, $\bbJ^* = (J^*_I)_{I \in \calI}$ is indeed a {dyadic} system and $\sigma^*$ is a valid schedule for $\bbJ^*$ without discarded jobs.
	
	\begin{claim}
		\label{claim:number-guesses-small}
		Focus on a recursion of Algorithm~\ref{alg:cstrJstar} for some $I \in \calI_\sftop$. The number of iterations Loop~\ref{step:while} takes is at most ${p} = \floor{\frac{2}{\delta}\ln\frac{m}{\delta'} }+ 1$.  If $I \in \calI_\sfmid$, then the number is at most $m|I|$.
	\end{claim}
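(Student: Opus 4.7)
The plan is to bound, in both cases, how fast $|J^*_I|$ shrinks between successive iterations. The initial size is $n_0 := |J^*_I| = |K^*_I|$, and by the invariant $K^*_I \subseteq \sigma^{*-1}(I)$ together with the capacity constraint we have $n_0 \leq m|I|$. In any iteration the chosen $j$ satisfies $|N^-_{J^*_I}(j)|, |N^+_{J^*_I}(j)| \geq \len(n_t)/2-1$, so the set that gets moved out of $J^*_I$ has size at least $\len(n_t)/2$. Hence the generic recurrence is
\[
n_{t+1} \leq n_t - \tfrac{1}{2}\len(n_t).
\]

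For $I \in \calI_\sfmid$ we have $\len \equiv 0$, and the loop simply runs while $J^*_I \neq \emptyset$. Each iteration still removes at least the chosen job $j$ itself, so $n_t$ strictly decreases by at least one, and the loop terminates in at most $n_0 \leq m|I|$ iterations.

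For $I \in \calI_\sftop$ we have $\len(x) = \delta x + \delta'|I|$, so the recurrence becomes $n_{t+1} \leq (1-\delta/2)\,n_t - \delta'|I|/2$. To turn this into a pure geometric contraction I will work with the shifted quantity $m_t := n_t + \delta'|I|/\delta$; a direct substitution yields $m_{t+1} \leq (1-\delta/2)\,m_t$, and therefore
\[
m_t \leq (1-\delta/2)^t \, m_0 \leq (1-\delta/2)^t \bigl(m|I| + \delta'|I|/\delta\bigr).
\]
On the other hand, the loop being active at iteration $t$ forces $\Delta(J^*_I) > \len(n_t)$, which in particular requires $n_t > \len(n_t) = \delta n_t + \delta'|I|$, i.e.\ $n_t > \delta'|I|/(1-\delta)$, i.e.\ $m_t > \delta'|I|/(\delta(1-\delta))$. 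Combining these two inequalities and using $-\ln(1-\delta/2) \geq \delta/2$ together with $m\delta+\delta' \leq m$, taking logarithms gives $t < \tfrac{2}{\delta}\ln\frac{m}{\delta'}$, so the loop has already exited by iteration $p = \floor{\tfrac{2}{\delta}\ln\frac{m}{\delta'}}+1$.

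There is no real conceptual obstacle here; the only subtle step is choosing the correct shift $m_t = n_t + \delta'|I|/\delta$ so that the affine recurrence collapses to a clean geometric one, and then keeping track of the factor $(1-\delta)$ in the termination threshold. Everything else is immediate from Algorithm~\ref{alg:cstrJstar} and the capacity constraint on $\sigma^*$.
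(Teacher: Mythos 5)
Your proof is correct and follows essentially the same approach as the paper: bound the initial size by $m|I|$ via the capacity constraint, show each iteration shrinks $|J^*_I|$ geometrically by a $(1-\delta/2)$ factor, and lower-bound $|J^*_I|$ while the loop is active using $\Delta(J^*_I) \leq |J^*_I|$. The only (harmless) difference is that you retain the additive $\delta'|I|/2$ term and collapse the affine recurrence with a shift, whereas the paper simply drops that term and uses the cruder bounds $n_{t+1}\le(1-\delta/2)n_t$ and $n_t\ge\delta'|I|$.
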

	\begin{proof}
		First consider the case where $I \in \calI_\sftop$. 
		In each iteration of the while loop, we move at least $(\delta|J^*_I| + \delta'|I|)/2 \geq  \delta|J^*_I|/2$ jobs out of $J^*_I$. Initially, $|J^*_I| \leq m|I|$ since they are scheduled in $I$ in $\sigma^*$.  At the beginning of the last iteration of the loop, we have $|J^*_I| \geq \delta'|I|$ since otherwise we would not have the loop. Thus, the number of iterations is at most $\floor{\log_{1/(1-\delta/2)}\frac{m|I|}{\delta'|I|}} + 1 = \floor{-\frac1{\ln{(1-\delta/2)}}\ln\frac{m}{\delta'}} + 1 \leq \floor{\frac{2}{\delta}\ln\frac{m}{\delta'}} + 1$, where the inequality is by that $\ln(1-x) < -x$ for every $x \in (0, 1)$.
		
		For the case where $I \in \calI_\sfmid$, in every iteration we moved at least 1 job out of $J^*_I$. Initially we have $J^*_I = K^*_I$ and thus the number of iterations is at most $|K^*_I| \leq m|I|$.
	\end{proof}
	The claim is crucial to our algorithm. In our actual algorithm we do not know $\sigma^*$. However, there are at most $2^{{p}}$ ($2^{m|I|}$, resp.) different vectors in $\{\sfL, \sfR\}^{{p}}$ ($\{\sfL, \sfR\}^{{m|I|}}$, resp.) and one of them must contain $g^*_I$ as a prefix.  Later our algorithm will guess the vector and run the while loop using the guess. This is the motivation of the procedure $\pushdown$ described in Algorithm~\ref{alg:pushdown}.  When calling the procedure, we guarantee that $I \in \calI_\sftop \cup \calI_\sfmid, K \subseteq J^\circ$, $|K| \leq m|I|$ and $g \in \{\sfL, \sfR\}^{{p}}$ if $I \in \calI_\sftop$ and $g \in \{\sfL, \sfR\}^{m|I|}$ if $I \in \calI_\sfmid$.
	\begin{algorithm}[h]
		\caption{$\pushdown(I, K, g)$}
		\label{alg:pushdown}		
		\textbf{Input}: $I \in \calI_\sftop \cup \calI_\sfmid, K\subseteq J^\circ, |K| \leq m|I|, g \in \{\sfL, \sfR\}^{{p}}$ if $I \in \calI_\sftop$ and $g \in \{\sfL, \sfR\}^{m|I|}$ if $I \in \calI_\sfmid$
		\begin{algorithmic}[1]
			\State $J \gets K, K_\sfL \gets \emptyset, K_\sfR \gets \emptyset, q \gets 1$, let function $\len(x) = \begin{cases} 
							\delta x+\delta'|I| & \text{ if } I \in \calI_\sftop \\ 
							0 & \text{ if } I \in \calI_\sfmid
						\end{cases} $
			\While{$\Delta(J) > \len(|J|)$}
				\State let $j \in J$ be a job with $|N^+_{J}(j)|,|N^-_{J}(j)| \geq \len(|J|)/2-1$, chosen using the same deterministic procedure as in Step~\ref{step:cstrJstar-choose-j} of Algorithm~\ref{alg:cstrJstar} \label{step:pushdown-choose-j} 
				\State \textbf{if} {$g_q = \sfL$} \textbf{then}
					move $\{j\} \cup N^-_{J}(j)$ from $J$ to $K_\sfL$
				\textbf{else} 
					move $\{j\} \cup N^+_{J}(j)$ from $J$ to $K_\sfR$
				\State $q \gets q + 1$
			\EndWhile
			\State \Return $(J, K_\sfL, K_\sfR)$
		\end{algorithmic}
	\end{algorithm}
	\begin{obs}
		\label{obs:pushdown}
		Assume $I \in \calI_\sftop \cup \calI_\sfmid, K\subseteq J^\circ, |K| \leq m|I|, g \in \{\sfL, \sfR\}^{{p}}$ if $I \in \calI_\sftop$ and $g \in \{\sfL, \sfR\}^{{m|I|}}$ if $I \in \calI_\sfmid$.  Assume $\pushdown(I, K, g)$ returns $(J, K_\sfL, K_\sfR)$. Then the following statements hold.
		\begin{enumerate}[label=(\ref{obs:pushdown}\alph*)]
			\item $J, K_\sfL$ and $K_\sfR$ form a partition of $K$. \label{property:pushdown-partition}
			\item If $I \in \calI_\sftop$, then 
			$\Delta(J) \leq  \delta|J|+\delta'|I|$. \label{property:pushdown-chain-length}
			\item If $I \in \calI_\sfmid$, then $J = \emptyset$. \label{property:pushdown-middle-empty}
			\item The sequence $K_\sfL, J, K_\sfR$ respects the precedence constraints. \label{property:pushdown-precedence}
			\item If $K = K^*_I$ and $g^*_I$ is a prefix of $g$, then $J = J^*_I, K_\sfL = K^*_{\sfleft(I)}$ and $K_\sfR = K^*_{\sfright(I)}$. \label{property:pushdown-returns-star}
		\end{enumerate}
	\end{obs}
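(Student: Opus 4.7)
My plan is to read the five items off a single execution of $\pushdown$: (a) and (d) are invariants of the while loop, (b) and (c) follow from the loop's exit condition once termination within $|g|$ iterations is verified, and (e) is a coupling with $\cstrJStar$ proved by induction on the iteration. I will carry these out in this order.

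First, I would verify (a) and (d) as invariants. At initialization, $(J,K_\sfL,K_\sfR)=(K,\emptyset,\emptyset)$ and both are trivial. When iteration $q$ fires with $g_q=\sfL$, the set $\{j\}\cup N^-_J(j)$ moves from $J$ to $K_\sfL$; (a) is maintained because we merely shift ownership inside $K$, and (d) is maintained because transitivity of $\prec$ implies that any predecessor in $J$ of a job in $N^-_J(j)$ is itself in $N^-_J(j)$, so no element of the new $J$ precedes any element of the new $K_\sfL$, while the previously guaranteed absence of precedence from $K_\sfR$ to $J$ transfers to the additions since $K_\sfR$ is untouched. The $g_q=\sfR$ case is symmetric.

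Second, I would verify that the loop terminates before $q$ exceeds $|g|$, which simultaneously yields (b) and (c) via the exit test $\Delta(J)\leq \len(|J|)$. The argument mirrors Claim~\ref{claim:number-guesses-small} and uses no information about the guesses themselves: the choice rule at Step~\ref{step:pushdown-choose-j} guarantees $|\{j\}\cup N^-_J(j)|, |\{j\}\cup N^+_J(j)|\geq \len(|J|)/2$, so $|J|$ drops by at least $\len(|J|)/2$ in every iteration regardless of $g_q$. For $I\in\calI_\sftop$ this gives the geometric shrink $|J|\leq m|I|(1-\delta/2)^q$ while $|J|\geq \delta'|I|$, forcing termination within $p$ iterations, exactly the length of $g$; for $I\in\calI_\sfmid$ at least one job leaves $J$ per iteration, so at most $m|I|$ iterations occur, again matching $|g|$. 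The exit condition $\Delta(J)\leq \len(|J|)$ then gives (b) for $I\in\calI_\sftop$, and for $I\in\calI_\sfmid$ it forces $J=\emptyset$ since $\len\equiv 0$, giving (c).

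Finally, for (e), I would couple the run of $\pushdown(I,K^*_I,g)$ with the recursion $\cstrJStar(I)$ and prove by induction on the iteration that $(J,K_\sfL,K_\sfR)$ in $\pushdown$ equals $(J^*_I,K^*_{\sfleft(I)},K^*_{\sfright(I)})$ in $\cstrJStar$ at every step. The base case is the common initialization. For the inductive step, Step~\ref{step:pushdown-choose-j} and Step~\ref{step:cstrJstar-choose-j} apply the same deterministic tiebreaking to the same $J$, so they pick the same $j$; because $g^*_I$ is a prefix of $g$ and $(g^*_I)_q$ was appended in $\cstrJStar$ precisely to record whether $\sigma^*_j\in\sfleft(I)$ or $\sigma^*_j\in\sfright(I)$, the two procedures perform the same move. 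Termination at the same iteration is automatic, since the exit test depends only on the state. I expect the main obstacle to be exactly this last item: one must be careful that the shared deterministic rule referenced in Step~\ref{step:pushdown-choose-j} really is identical to the one used in Step~\ref{step:cstrJstar-choose-j}, and must invoke the previous step to confirm that $|g^*_I|\leq |g|$ so that the ``prefix'' hypothesis is even meaningful, after which the rest of the coupling is forced.
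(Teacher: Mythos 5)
Your proposal is correct and follows essentially the same reasoning the paper relies on: the paper states this as an unproved Observation, treating it as immediate from the preceding discussion of $\cstrJStar$ (the loop invariants for the partition and precedence properties, the exit condition giving the chain-length/emptiness bounds, Claim~\ref{claim:number-guesses-small} for the iteration count, and the shared deterministic choice rule for the coupling in (e)). Your write-up simply makes these implicit steps explicit, including the two genuine subtleties — termination within $|g|$ iterations so that indexing into $g$ is well-defined, and $|g^*_I|\leq|g|$ so the prefix hypothesis is meaningful — and handles them correctly.
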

	
	Although each $g^*_I$ is short, we can not afford to guess the combination of all $g^*_I$'s since there are too many intervals $I$. Later in each recursion of our algorithm, we  guess $g^*_I$'s only for a small set of intervals $I$.

\section{Virtually-Valid Schedules for Dyadic Systems}
	\label{sec:virtual}
	As discussed in the introduction, to break an instance over some top interval $I^*$ into two sub-instances, we need to ignore the precedence constraints incident to jobs assigned to $I^*$. This leads to the definition of \emph{virtually-valid} schedules in Section~\ref{subsec:virtually-valid}, which in turn requires us to define a window $(b^\bbJ_j, e^\bbJ_j]$ for every top job $j$ in a partial dyadic system $\bbJ$ in Section~\ref{subsec:windows}. In the next Section (Section~\ref{sec:conversion}), we show a two-direction connection between valid schedules and virtually-valid ones for $\bbJ$. 
	
	To use mathematical inductions later in Section~\ref{sec:algo}, we need to define virtually-valid schedules for \emph{partial} dyadic systems.  Let us revisit Definition~\ref{def:dyadic-system}: We can treat a partial dyadic system as a dyadic system restricted to some interval $I \in \calI^*$, plus some ancestor jobs ${J_\anc} \subseteq J^\circ$, each $j \in {J_\anc}$ associated with a $b^\anc_j$ and $e^\anc_j$ value. We shall elaborate more on the set ${J_\anc}$ in Section~\ref{sec:algo}. For this section, it is only used in Definition~\ref{def:virtually-valid} and can be ignored in this section.  Till the end of this section, we fix a {partial dyadic} system $\bbJ = ({J_\anc},  b^\anc, e^\anc, (J_I)_{I \in \calI[I^*]})$ over some $I^* \in \calI$. All the definitions, claims and lemmas are w.r.t to this $\bbJ$. 
	
	\subsection{Windows for Top Jobs}
	\label{subsec:windows}
	\begin{definition}[$b^\bbJ$ and $e^\bbJ$ values for top jobs]
		\label{def:b-e}
	 	Given a top job $j\in J_I$ for some  $I \in \calI_\sftop[I^*]$, we define the window for $j$ in $\bbJ$ to be $(b^\bbJ_j, e^\bbJ_j]$, where
		\begin{itemize}[leftmargin=*]
			\item $b^\bbJ_j$ is the minimum integer multiply $b$ of $\max\{2^{-h}|I|, 2^h\}$ in $(\sfbegin(I), \sfcenter(I)]$ such that there are no precedence constraints from $J_{\subseteq(b, \sfcenter(I)]}$ to $j$, and
			\item $e^\bbJ_j$ is the maximum integer multiply $e$ of $\max\{2^{-h}|I|, 2^h\}$ in $[\sfcenter(I), \sfend(I))$ such that there are no precedence constraints from $j$ to $J_{\subseteq(\sfcenter(I),e]}$.
		\end{itemize}
	\end{definition}
	Notice that $b^\bbJ_j$ and $e^\bbJ_j$ are well-defined since $\sfcenter(I)$ is a candidate for both $b$ and $e$. The following claims are easy to prove:
	\begin{claim}
		\label{claim:simple-properties-of-b-e}
		For any top job $j \in J_I, I \in \calI_\sftop$, we have $\sfbegin(I) < b^\bbJ_j \leq \sfcenter(I)\leq e^\bbJ_j < \sfend(I)$. Moreover, there are no precedence constraints from $j$ to $J_{\subseteq (0, e^\bbJ_j]}$, or from $J_{\subseteq (b^\bbJ_j, T]}$ to $j$.
	\end{claim}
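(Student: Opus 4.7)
The plan is to treat the two assertions separately. For the bounds $\sfbegin(I) < b^\bbJ_j \leq \sfcenter(I) \leq e^\bbJ_j < \sfend(I)$, the first step is to verify that $b^\bbJ_j$ and $e^\bbJ_j$ are well-defined by exhibiting $\sfcenter(I)$ itself as a valid candidate in both cases. On the one hand, $J_{\subseteq (\sfcenter(I),\sfcenter(I)]}$ is empty, so the precedence conditions in Definition~\ref{def:b-e} hold vacuously. On the other hand, since $I$ is a top interval we have $|I| \geq 2^{h+h'+1}$, from which one easily checks that $\sfcenter(I) = \sfbegin(I) + |I|/2$ is an integer multiple of both $2^{-h}|I|$ and $2^h$, hence of their maximum. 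Once $\sfcenter(I)$ is recognised as a feasible choice, the claimed bounds follow immediately from the half-open ranges $(\sfbegin(I), \sfcenter(I)]$ and $[\sfcenter(I), \sfend(I))$ specified in Definition~\ref{def:b-e}.

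For the precedence part, I will prove the statement about $j$ and $J_{\subseteq (0, e^\bbJ_j]}$; the other half then follows by a symmetric argument using $b^\bbJ_j$. The core is a case analysis on the position of an arbitrary $I'' \in \calI[I^*]$ with $I'' \subseteq (0, e^\bbJ_j]$ relative to $\sfcenter(I)$. Using the laminar structure of $\bfT$, I expect to show that exactly one of the following holds: (a) $I'' \subseteq (0, \sfcenter(I)]$, or (b) $I'' \subseteq (\sfcenter(I), e^\bbJ_j]$. The only alternative is that $I''$ contains $\sfcenter(I)$ strictly in its interior; laminarity then forces $I''$ to equal $I$ or to be a proper ancestor of $I$, yielding $\sfend(I'') \geq \sfend(I) > e^\bbJ_j$ and contradicting $I'' \subseteq (0, e^\bbJ_j]$.

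In case (a), I will observe that $\sfcenter(I'') < \sfend(I'') \leq \sfcenter(I)$, so $I'' \inorderless I$, and then Property~\ref{property:dyadic-precedence} of the (partial) dyadic system rules out any precedence from $j \in J_I$ to $J_{I''}$. In case (b), the defining property of $e^\bbJ_j$ directly gives that no precedence leaves $j$ toward $J_{\subseteq (\sfcenter(I), e^\bbJ_j]}$, which contains $J_{I''}$. Taking the union over all qualifying $I''$ establishes half (i) of the ``moreover''; the mirror argument, with the roles of left/right, $b^\bbJ_j$/$e^\bbJ_j$, and predecessors/successors swapped, yields half (ii).

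The only real, and still quite modest, obstacle is the laminar-tree case analysis that rules out an interval $I''$ strictly straddling $\sfcenter(I)$; once that is pinned down, everything else is a direct application of the definitions of $b^\bbJ_j$, $e^\bbJ_j$ and Property~\ref{property:dyadic-precedence} of the dyadic system.
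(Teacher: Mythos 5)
Your proof is correct and follows essentially the same route as the paper's: the bounds come directly from Definition~\ref{def:b-e} together with the observation (already noted after that definition) that $\sfcenter(I)$ is a feasible candidate, and the precedence part is the same case analysis splitting a qualifying interval $I''$ into $I'' \subseteq (0,\sfcenter(I)]$ (handled by Property~\ref{property:dyadic-precedence} via $I'' \inorderless I$) versus $I'' \subseteq (\sfcenter(I), e^\bbJ_j]$ (handled by the definition of $e^\bbJ_j$). The only difference is that you make explicit the laminarity argument ruling out an interval straddling $\sfcenter(I)$, which the paper leaves implicit; this is a correct and welcome elaboration, not a different approach.
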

	\begin{proof}
		The first statement simply follows from the definitions of $b^\bbJ_j$ and $e^\bbJ_j$. To prove the second statement, we focus on any job $j' \in J_{I'}$ with $I' \subseteq (0, e^\bbJ_j]$. Then, either  $I'$ is disjoint from $I$ with $I' \inorderless I$, or $I' \subseteq \sfleft(I)$, or $I' \subseteq (\sfcenter(I), e^\bbJ_j] \subseteq \sfright(I)$.  In the first two cases, we have $j \not\prec j'$ by Property~\ref{property:dyadic-precedence}. In the third case, we have $j \not\prec j'$ by the definition of $e^\bbJ_j$.  Thus, there are no precedence constraints from $j$ to $J_{\subseteq(0, e^\bbJ_j]}$. Similarly we can show that there are no precedence constraints from $J_{\subseteq(b^\bbJ_j, T]}$ to $j$.		
	\end{proof}
	
	The following lemma shows that $b^\bbJ$ and $e^\bbJ$ values respect the precedence constraints.
	\begin{lemma}
		\label{lemma:b-e-respects-precedence}
		Let $j$ and $j'$ be two top jobs with $j \prec j'$.  Then we have $b^\bbJ_j \leq b^\bbJ_{j'}$ and $e^\bbJ_j \leq e^\bbJ_{j'}$. 
	\end{lemma}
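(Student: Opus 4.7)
Let $I$ and $I'$ be the owning intervals of $j$ and $j'$ in $\bbJ$. Since $j \prec j'$, Property~\ref{property:dyadic-precedence} rules out $I' \inorderless I$, leaving four configurations for the pair: (A) $I = I'$; (B) $I' \subseteq \sfright(I)$ with $I' \neq I$; (C) $I \subseteq \sfleft(I')$ with $I \neq I'$; and (D) $I$ and $I'$ are disjoint with $I \inorderless I'$, whence $\sfend(I) \leq \sfbegin(I')$. I plan to verify both inequalities separately in each case, driven by transitivity of $\prec$ and the geometry of the dyadic tree.

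Case (D) will be immediate from Claim~\ref{claim:simple-properties-of-b-e}: $b^\bbJ_j \leq \sfcenter(I) < \sfend(I) \leq \sfbegin(I') < b^\bbJ_{j'}$, and similarly for $e$ since $e^\bbJ_j < \sfend(I) \leq \sfbegin(I') < b^\bbJ_{j'} \leq e^\bbJ_{j'}$. Case (A) follows by transitivity: I will plug $b = b^\bbJ_{j'}$ into the defining condition for $j$ and argue that any $k \in J_{\subseteq(b^\bbJ_{j'}, \sfcenter(I)]}$ with $k \prec j$ would give $k \prec j'$, contradicting the property of $b^\bbJ_{j'}$; minimality of $b^\bbJ_j$ then gives $b^\bbJ_j \leq b^\bbJ_{j'}$. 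The $e$-inequality is analogous, plugging $e = e^\bbJ_j$ into $j'$'s defining condition.

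For the nested cases (B) and (C) I will use a granularity observation: the window endpoints of $j$ and $j'$ are restricted to multiples of $c = \max\{2^{-h}|I|,2^h\}$ and $c' = \max\{2^{-h}|I'|,2^h\}$ respectively, and since both are powers of $2$, the smaller divides the larger. In Case (B), $|I'| \leq |I|$ gives $c' \mid c$, and the $b$-inequality is immediate from $b^\bbJ_j \leq \sfcenter(I) \leq \sfbegin(I') < b^\bbJ_{j'}$. For $e$, I plan to argue by contradiction: assume $e^\bbJ_j > e^\bbJ_{j'}$. Since $j \prec j'$ and $j' \in J_{I'}$, the property of $e^\bbJ_j$ forces $I' \not\subseteq (\sfcenter(I), e^\bbJ_j]$, hence $e^\bbJ_j < \sfend(I')$. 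So $e^\bbJ_j$ is a multiple of $c'$ (as $c' \mid c$) lying in $(e^\bbJ_{j'}, \sfend(I'))$; by the maximality of $e^\bbJ_{j'}$, some $k \in J_{\subseteq(\sfcenter(I'), e^\bbJ_j]}$ satisfies $j' \prec k$. But $(\sfcenter(I'), e^\bbJ_j] \subseteq (\sfcenter(I), e^\bbJ_j]$ and $j \prec j' \prec k$ place $k$ in $J_{\subseteq(\sfcenter(I), e^\bbJ_j]}$ with $j \prec k$, contradicting the defining property of $e^\bbJ_j$.

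Case (C) is the mirror image. Now $|I| \leq |I'|$ gives $c \mid c'$, and the $e$-inequality is immediate from $e^\bbJ_j < \sfend(I) \leq \sfcenter(I') \leq e^\bbJ_{j'}$. For $b$, I first observe that because $j \prec j'$ and $\sfend(I) \leq \sfcenter(I')$, the property of $b^\bbJ_{j'}$ forces $I \not\subseteq (b^\bbJ_{j'}, \sfcenter(I')]$, hence $\sfbegin(I) < b^\bbJ_{j'}$. If $b^\bbJ_{j'} > \sfcenter(I)$ then $b^\bbJ_j \leq \sfcenter(I) < b^\bbJ_{j'}$ trivially; otherwise $b^\bbJ_{j'} \in (\sfbegin(I), \sfcenter(I)]$ is a multiple of $c$, and the implication $k \prec j \Rightarrow k \prec j'$ combined with $J_{\subseteq(b^\bbJ_{j'}, \sfcenter(I)]} \subseteq J_{\subseteq(b^\bbJ_{j'}, \sfcenter(I')]}$ shows that $b^\bbJ_{j'}$ is a valid candidate in $j$'s defining condition, yielding $b^\bbJ_j \leq b^\bbJ_{j'}$. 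The main thing to be careful about throughout is that each use of a value as a ``candidate'' actually lies in the defining range and has the correct granularity; once that bookkeeping is set up, the argument is essentially transitivity of $\prec$ plus the containment structure of the dyadic intervals.
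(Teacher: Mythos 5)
Your proposal is correct and follows essentially the same route as the paper's proof: the same case split on the relative position of the owning intervals, the same use of transitivity of $\prec$ together with the containment $J_{\subseteq(b,\sfcenter(I)]}\subseteq J_{\subseteq(b,\sfcenter(I')]}$, and the same divisibility observation that lets a window endpoint of one job serve as a candidate endpoint for the other. The only cosmetic differences are that you write out the case $I'\subseteq\sfright(I)$ (which the paper dismisses as symmetric) and phrase it as a contradiction rather than a direct candidate argument.
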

	\begin{proof}
		Assume $j \in J_I$ and $j' \in J_{I'}$ for some $I, I' \in \calI_\sftop[I^*]$. If $I$ and $I'$ are disjoint, then the claim holds since $I$ must be to the left of $I'$ by Property~\ref{property:dyadic-precedence}, $b^\bbJ_j, e^\bbJ_j\in I$ and $b^\bbJ_{j'}, e^\bbJ_{j'} \in I'$. Now consider the case $I = I'$. Notice that $N^-_{J^\circ}(j) \subseteq N^-_{J^\circ}(j')$ and $N^+_{J^\circ}(j')\subseteq N^+_{J^\circ}(j)$. If there are no precedence constraints from $J_{\subseteq(b^\bbJ_{j'}, \sfcenter(I)]}$ to $j'$, then there are no such constraints to $j$ as well; thus $b^\bbJ_j \leq b^\bbJ_{j'}$.  If there are no precedence constraints from $j$ to $J_{\subseteq(\sfcenter(I),e^\bbJ_j]}$, then there are no such constraints from $j'$ as well; thus $e^\bbJ_{j'} \geq e^\bbJ_j$.
		
		Finally, we consider the case where one of the two intervals $I, I'$  is a strict sub-interval of the other. We only consider the case that $I \subsetneq I'$; the analysis for the other case is symmetric.  Since $I\inorderless I'$ by Property~\ref{property:dyadic-precedence}, we must have $I \subseteq \sfleft(I')$.  See Figure~\ref{fig:proof-of-time} for illustration of time points used in this proof.  Notice that $e^\bbJ_j <\sfend(I) \leq \sfcenter(I') \leq e^\bbJ_{j'}$.  Thus, it remains to prove that $b^\bbJ_j \leq b^\bbJ_{j'}$.
		\begin{figure}
			\centering
			\includegraphics[width=0.8\textwidth]{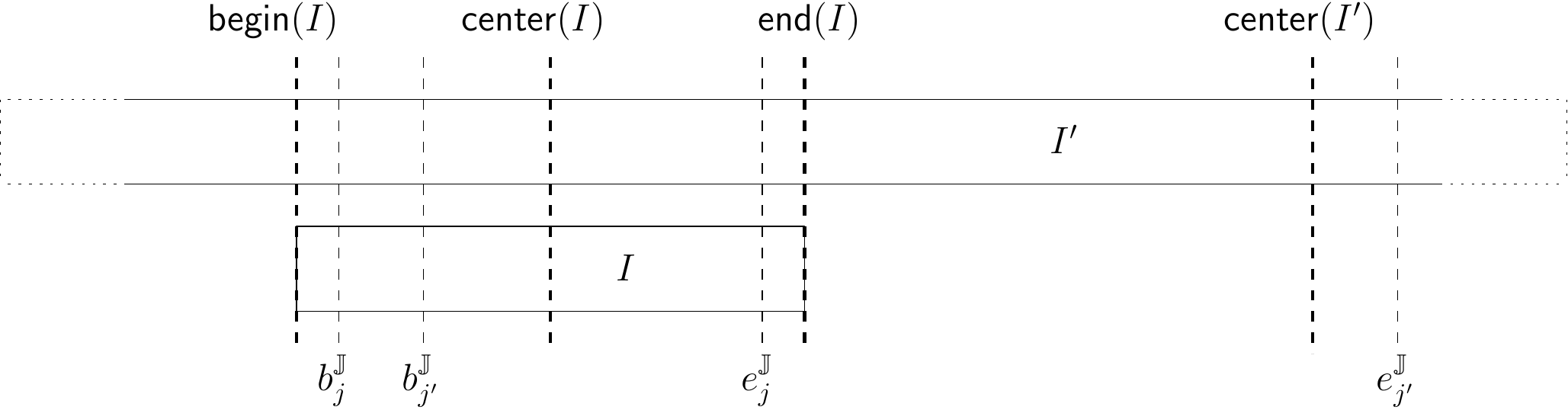}
			\caption{Time points used in the proof of Lemma~\ref{lemma:b-e-respects-precedence}.} \label{fig:proof-of-time}
		\end{figure}
				
		If $b^\bbJ_{j'} \geq \sfcenter(I)$, then we have $b^\bbJ_j \leq \sfcenter(I) \leq b^\bbJ_{j'}$ and we are done. So, assume $b^\bbJ_{j'} < \sfcenter(I)$.  Since $j \in J_I \subseteq J_{\subseteq (\sfbegin(I), \sfcenter(I')]}$ and $j \prec j'$, we have $b^\bbJ_{j'}  > \sfbegin(I)$ by its definition.\footnote{This holds regardless of whether $\sfbegin(I)$ is an integer multiply of $\max\{2^{-h}|I'|, 2^h\}$ or not.} So $b^\bbJ_{j'}$ is an integer multiply of $\max\{2^{-h}|I'|, 2^h\}$ strictly between $\sfbegin(I)$ and $\sfcenter(I)$. By the definition of $b^\bbJ_{j'}$, there are no precedence constraints from $J_{\subseteq (b^\bbJ_{j'}, \sfcenter(I')]}$ to $j'$. Since $J_{\subseteq(b^\bbJ_{j'}, \sfcenter(I')]} \supseteq J_{\subseteq(b^\bbJ_{j'}, \sfcenter(I)]}$, there will be no precedence constraints from $J_{\subseteq (b^\bbJ_{j'}, \sfcenter(I)]}$ to $j'$, implying that there will be no such constraints to $j$ as well.   As $b^\bbJ_{j'}$ is an integer multiply of $\max\{2^{-h}|I|, 2^h\}$ strictly between $\sfbegin(I)$ and $\sfcenter(I)$, we have that $b^\bbJ_j \leq b^\bbJ_{j'}$, by the definition of $b^\bbJ_j$. 
	\end{proof} 
	
\subsection{Virtually-Valid Schedules}
	\label{subsec:virtually-valid}
	With the windows for top jobs defined, we can now define what is a \emph{virtually-valid} schedule: 
	\begin{definition}
		\label{def:virtually-valid}
			We say $\sigma \in \big(I^* \cup \{\discarded\}\big)^{J_{\subseteq I^*} \cup {J_\anc}}$ is a \emph{virtually-valid schedule} for $\bbJ$ if it satisfies
		\begin{itemize}
			\item {\bf capacity constraints}: they are the same as in Definition~\ref{def:original-valid},
			\item {\bf precedence constraints for bottom jobs}: for every $j, j'\in \union_{I \in \calI_\sfbot[I^*]}J_I \setminus \sigma^{-1}(\discarded)$, we have $\sigma_j < \sigma_{j'}$,
			\item {\bf interval constraints for bottom jobs}: for every $j \in I, I \in \calI_\sfbot[I^*]$, we have $\sigma_j \in I \cup \{\discarded\}$,
			\item {\bf window constraints for top jobs}: for every $j \in \union_{I \in \calI_\sftop[I^*]}J_I$, we have $\sigma_j \in (b^\bbJ_j, e^\bbJ_j] \cup \{\discarded\}$, and
			\item {\bf window constraints for ancestor jobs}: for every $j \in {J_\anc}$, we have $\sigma_j \in (b^\anc_j, e^\anc_j] \cup \{\discarded\}$.
		\end{itemize}
	\end{definition}
		Again for intuition we first assume $\bbJ$ is a (non-partial) dyadic system; then the last set of constraints hold trivially. Compared to a valid one, in a virtually-valid schedule, we ignore the precedence constraints incident on top jobs. Instead, we require each top job $j$ is scheduled within its window $(b^\bbJ_j, e^\bbJ_j]$. Since $(b^\bbJ_j, e^\bbJ_j]$ is a sub-interval of the owning interval of $j$, the window constraint for a  $j$ implies the interval constraint for it;  thus we could require interval constraints to hold for both top and bottom jobs.  We remark that in spite of  the term ``virtually'', a valid schedule is not necessarily virtually-valid, as it may not satisfy the window constraints for top jobs.
		
		The first four sets of constraints in Definition~\ref{def:virtually-valid} naturally extend to partial dyadic systems.  The last set requires that each ancestor job $j\in {J_\anc}$ is either scheduled with its window $(b^\anc_j, e^\anc_j]$ or discarded.
\section{Conversions between Valid and Virtually-Valid Schedules}
\label{sec:conversion}
	In this section, we show that valid and virtually-valid schedules can be converted to each other, up to discarding a few jobs.  In Section~\ref{subsec:valid-to-virtually-valid}, we show that the valid schedule $\sigma^*$ for $\bbJ^*$ can be converted to a virtually-valid schedule $\sigma'^*$ with a small number of discarded jobs. In Section~\ref{subsec:virtually-valid-to-valid}, we show that a virtually-valid schedule $\sigma''$ for any {dyadic} system $\bbJ$ can be converted to a valid schedule $\sigma$, with a small number of extra  jobs discarded. In this section, we only need to consider (non-partial) dyadic systems.
	\subsection{From $\sigma^*$ to a Virtually Valid Schedule $\sigma'^*$}	
	\label{subsec:valid-to-virtually-valid}
	 For convenience, from now on, we use $b^*_j$ to denote $b^{\bbJ^*}_j$ and $e^*_j$ to denote $e^{\bbJ^*}_j$  for every top job $j$ in $\bbJ^*$. The following simple claim is needed in our analysis.
	\begin{claim}
		\label{lemma:in-extended-window}
		For every top job $j \in J^*_I$ for some $I \in \calI_\sftop$, we have $\sigma^*_j\in \big(b^*_j - \max\{2^{-h}|I|, 2^h\}, e^*_j + \max\{2^{-h}|I|, 2^h\}\big]$.
	\end{claim}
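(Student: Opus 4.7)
The plan is to show that the schedule time $\sigma^*_j$ is close to the window $(b^*_j, e^*_j]$ by arguing the lower and upper endpoints separately via symmetric case analyses. First I would let $s := \max\{2^{-h}|I|, 2^h\}$ and observe, from the dyadic structure of $\bfT$ together with the global parameter choices, that $s$ divides $|I|/2$ and that $\sfbegin(I), \sfcenter(I), \sfend(I)$ are all multiples of $s$. Consequently the smallest multiple of $s$ in $(\sfbegin(I), \sfcenter(I)]$ is $\sfbegin(I)+s$ and the largest multiple in $[\sfcenter(I), \sfend(I))$ is $\sfend(I)-s$; hence $b^*_j \geq \sfbegin(I)+s$ and $e^*_j \leq \sfend(I)-s$. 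This divisibility setup is what lets the case split below exhaust all possibilities cleanly.

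To prove the lower bound $\sigma^*_j > b^*_j - s$, I would split into two cases. In the easy case $b^*_j = \sfbegin(I)+s$, the fact that $\sigma^*$ is a valid schedule for $\bbJ^*$ and $j \in J^*_I$ gives $\sigma^*_j \in I$, so $\sigma^*_j > \sfbegin(I) = b^*_j - s$. Otherwise $b^*_j - s$ is itself a multiple of $s$ lying strictly inside $(\sfbegin(I), \sfcenter(I))$, so by the minimality in Definition~\ref{def:b-e} there must exist some $j' \prec j$ assigned to an interval $I' \subseteq (b^*_j - s, \sfcenter(I)]$ in $\bbJ^*$; validity of $\sigma^*$ for $\bbJ^*$ pins $\sigma^*_{j'}$ inside $I'$, so $\sigma^*_{j'} > b^*_j - s$, and combining with the precedence constraint $\sigma^*_{j'} < \sigma^*_j$ yields $\sigma^*_j > b^*_j - s$.

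The upper bound $\sigma^*_j \leq e^*_j + s$ is treated symmetrically. Either $e^*_j + s \geq \sfend(I)$, in which case the interval constraint gives $\sigma^*_j \leq \sfend(I) \leq e^*_j + s$; or else by the maximality in Definition~\ref{def:b-e} one obtains some $j \prec j'$ with $j' \in J^*_{I'}$ for an interval $I' \subseteq (\sfcenter(I), e^*_j + s]$, which forces $\sigma^*_{j'} \leq e^*_j + s$ and hence $\sigma^*_j < \sigma^*_{j'} \leq e^*_j + s$. I do not expect a substantive obstacle: the whole argument is just an unpacking of the extremality in the definitions of $b^*_j$ and $e^*_j$ against the interval constraints guaranteed by the validity of $\sigma^*$ for $\bbJ^*$. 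The only mild care needed is the initial divisibility check on $s$ so that stepping by $s$ on either side of the window lands either at another valid candidate in the definition or exactly at the boundary of $I$.
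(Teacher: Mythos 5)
Your proposal is correct and follows essentially the same argument as the paper's proof: a case split on whether $b^*_j$ equals the smallest candidate multiple (in which case the interval constraint $\sigma^*_j \in I$ suffices), versus the case where minimality of $b^*_j$ yields a predecessor $j'$ assigned to a sub-interval of $(b^*_j - \max\{2^{-h}|I|,2^h\}, \sfcenter(I)]$, whose interval and precedence constraints force the bound; the upper endpoint is symmetric. The explicit divisibility check on $\max\{2^{-h}|I|, 2^h\}$ is a harmless extra precaution that the paper leaves implicit.
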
 
	\begin{proof}
		We prove that $\sigma^*_j > b^*_j - \max\{2^{-h}|I|,2^h\}$.  By the definition of $b^*_j$, either $b^*_j = \sfbegin(I) + \max\{2^{-h}|I|,2^h\}$, or $b^*_j > \sfbegin(I) + \max\{2^{-h}|I|,2^h\}$ and there is a job $j' \in J^*_{(b^*_j - \max\{2^{-h}|I|,2^h\}, \sfcenter(I)]}$ such that $j' \prec j$. In the former case, $\sigma^*_j > \sfbegin(I) = b^*_j - \max\{2^{-h}|I|,2^h\}$.   In the latter case, we must have $\sigma^*_{j'} > b^*_j - \max\{2^{-h}|I|, 2^h\}$. Since $j' \prec j$, we have $\sigma^*_j > b^*_j - \max\{2^{-h}|I|, 2^h\}$. Similarly, we can show that $\sigma^*_j\leq e^*_j + \max\{2^{-h}|I|,2^h\}$. 
	\end{proof}
	
	 The goal of this section is to prove the following lemma:
	\begin{lemma}
		\label{lemma:valid-to-virtually-valid}
		There is a virtually-valid schedule $\sigma'^*$ for $\bbJ^*$ with at most $\frac{3\epsilon T}{4}$ jobs discarded.
	\end{lemma}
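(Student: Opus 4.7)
The natural plan is to set $\sigma'^*_j := \sigma^*_j$ for every bottom job $j$ and, for every top job $j \in J^*_I$, to keep $\sigma'^*_j := \sigma^*_j$ when $\sigma^*_j \in (b^*_j, e^*_j]$ and discard it otherwise. Verifying that $\sigma'^*$ is virtually-valid for $\bbJ^*$ is routine: the capacity constraints descend from $\sigma^*$; the bottom precedence and interval constraints carry over because bottom jobs are unchanged and $\sigma^*$ is valid for $\bbJ^*$; and the window constraints for top jobs hold by construction.

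The substance is to show that this procedure discards at most $\frac{3\epsilon T}{4}$ jobs. Write $u_I := \max\{2^{-h}|I|, 2^h\}$. By Claim~\ref{lemma:in-extended-window}, every discarded $j \in J^*_I$ has $\sigma^*_j \in (b^*_j - u_I, b^*_j]$ (``bad-left'') or $\sigma^*_j \in (e^*_j, e^*_j + u_I]$ (``bad-right''). The target is a per-interval bound $D_I \leq 2 m u_I$, where $D_I$ counts discards from $J^*_I$, with the two sides treated symmetrically. For bad-left: if $b^*_j = \sfbegin(I) + u_I$ (``Case A'') then $\sigma^*_j$ lies in the first $u_I$ slots of $I$, yielding at most $m u_I$ such jobs; otherwise, the minimality in Definition~\ref{def:b-e} provides a predecessor $j' \prec j$ with $\sigma^*_{j'} \in (b^*_j - u_I, b^*_j]$ (``Case B''). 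I would control Case B by a charging argument along the predecessor chain: Lemma~\ref{lemma:b-e-respects-precedence} forces any Case B predecessor that remains in $J^*_I$ to share the same $b^*$-value, so the $J^*_I$-portion of the chain lives in a single $u_I$-block of capacity $m u_I$; once the chain descends into some $J^*_{I'}$ with $I' \subsetneq I$, the cost of that descent is billed to $D_{I'}$ rather than to $D_I$.

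Granted the per-interval bound, the summation is straightforward. At levels $\ell$ with $|I| \geq 2^{2h}$ each $u_I = 2^{-h-\ell}T$ and $\sum_{I \in \calI_\ell} u_I = T/2^h$, so the total over such levels is at most $L T/2^h \leq \epsilon T/(8m)$ via $2^h \geq 8 m \log T/\epsilon$. At the remaining top levels $u_I = 2^h$ and $\sum_{I \in \calI_\ell} u_I = 2^{\ell + h}$ forms a geometric sum dominated at the deepest top level $L - h' - 1$, yielding at most $T/2^{h'} \leq \epsilon T/(4m)$ via $2^{h'} \geq 4m/\epsilon$. Multiplying by $2m$ and adding gives $\leq \frac{3\epsilon T}{4}$, as required.

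The main obstacle is the per-interval bound $D_I \leq 2 m u_I$: making the Case B predecessor-chain charging watertight is delicate, because chains can descend through several levels of the dyadic tree before being absorbed, and the bookkeeping must ensure that the descent cost is paid by exactly one interval's budget so as to avoid double-counting across the global sum over $\calI_\sftop$.
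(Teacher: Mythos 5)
Your construction of $\sigma'^*$ is genuinely different from the paper's, and the difference is exactly where your argument breaks. You keep each top job $j\in J^*_I$ at its original slot when $\sigma^*_j\in(b^*_j,e^*_j]$ and discard it otherwise, so you must bound the number of jobs whose \emph{original} position falls in the bad fringe $(b^*_j-u_I,b^*_j]\cup(e^*_j,e^*_j+u_I]$. The paper never needs such a bound: it \emph{reschedules} all of $J^*_I\cap\sigma^{*-1}(\sfleft(I))$ by a block-shifting procedure (partition $\sfleft(I)$ into blocks of length $u_I$, carry a buffer $\tilde J$ left to right, fill each block's slots with buffered jobs, then add that block's original jobs to the buffer). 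Every surviving job lands in a strictly later block than its original one, so Claim~\ref{lemma:in-extended-window} applied in the forward direction ($b^*_j-u_I\leq\sfbegin(I')$ for the original block $I'$) guarantees the new position is inside the window; the discards are just the final buffer, whose size is $\max_{I'}\sfcap(I')\leq mu_I$ per side, independently of how many jobs started outside their windows.

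For your discard-in-place strategy the per-interval bound $D_I\leq 2mu_I$ is not established, and the charging sketch has concrete errors. Bad-left jobs $j$ with a common value $b^*_j=b$ all sit in the single block $(b-u_I,b]$ and are bounded by $mu_I$, but jobs with \emph{distinct} $b^*$-values occupy distinct blocks, so a priori the total can be as large as $2^{h-1}\cdot mu_I=m|I|/2$; you need to rule out many distinct $b^*$-values each contributing, and nothing in your argument does so. Moreover, the predecessor $j'$ witnessing the minimality of $b^*_j$ in Case B is necessarily assigned to a \emph{strict sub-interval} of $I$ (it belongs to $J^*_{\subseteq(b^*_j-u_I,\sfcenter(I)]}$), never to $J^*_I$, so the clause about ``Case B predecessors that remain in $J^*_I$'' is vacuous; Lemma~\ref{lemma:b-e-respects-precedence} in any case yields only $b^*_{j'}\leq b^*_j$, not equality; and ``billing the descent to $D_{I'}$'' is not a valid accounting, since jobs assigned to lower intervals are not discarded by your procedure and carry no budget in the same currency. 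Your summation over levels is fine and matches the paper's, but the lemma is not proved without the per-interval bound; the simplest fix is to adopt the paper's shift-and-buffer rescheduling, which obtains $2mu_I$ per interval for free.
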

	\begin{proof} 
		In ${\sigma'^*}$, we schedule the bottom jobs in exactly the same way as they are in $\sigma^*$:  For every bottom  job $j$ in $\bbJ^*$, we have $\sigma'^*_j = \sigma^*_j$.  Then in $\sigma'^*$ the bottom jobs satisfy interval and precedence constraints. It remains to show how to schedule  top jobs  to satisfy capacity and window constraints. (Notice that $\bbJ^*$ is non-partial and has ${J_\anc} =\emptyset$.) To this end, we fix some $I \in \calI_\sftop$ and focus on the set $J^*_I$. We schedule $J^*_I \cap \sigma^{*-1}(\sfleft(I))$ in $\sfleft(I)$ and $J^*_I \cap \sigma^{*-1}(\sfright(I))$ in $\sfright(I)$ in the schedule $\sigma'^*$. That means, we do not change the sides of top jobs when converting $\sigma^*$ to $\sigma'^*$.  We only show how to schedule $J^*_I \cap \sigma^{*-1}(\sfleft(I))$ since the set $J^*_I \cap \sigma^{*-1}(\sfright(I))$ can be handled symmetrically.
		
		In ${\sigma'^*}$ we schedule $J^*_I  \cap {\sigma^{*-1}}(\sfleft(I))$, using the slots allocated for them in $\sigma^*$.  In other words, at any time $t \in \sfleft(I)$, we say there are $\sfcap(t) := |J^*_I  \cap {\sigma^{*-1}}(t) \big|$ available slots. In $\sigma'^*$,  we only schedule $J^*_I \cap {\sigma^{*-1}}(\sfleft(I))$ using the available slots.  Let $\calC$ be the partition of $\sfleft(I)$ by integer multiplies of $\max\{2^{-h}|I|, 2^h\}$: For $I \in \calI_{\leq L-h}$, we have $\calI = \calI_{h-1}[\sfleft(I)]$ and for $I \in \calI_{ > L-h}$, we have $\calI = \calI_\sfbot[\sfleft(I)]$.	For every  $I' \in \calC$, let $\sfcap(I') := \sum_{t \in I'}\sfcap(t) =  \big|J^*_I  \cap {\sigma^{*-1}}(I')\big|$ be the number of available slots in $I'$.  
		
		In ${\sigma'^*}$, we schedule $J^*_I  \cap {\sigma^{*-1}}(\sfleft(I))$, via a simple procedure. Initially, let $\tilde J \gets \emptyset$. For every $I' \in \calC$ from left to right, we do the following: schedule $\min\{|\tilde J|, \sfcap(I')\}$  jobs in $\tilde J$ using the $\sfcap(I')$ available slots in $I'$, remove the scheduled  jobs from $\tilde J$, and add the $\sfcap(I')$ jobs $J^*_I  \cap {\sigma^{*-1}}(I')$ to $\tilde J$. Then we discard $\tilde J$ in the end.  Notice that if some $j \in J^*_I  \cap {\sigma^{*-1}}(I'), I' \in \calC$ is not discarded by ${\sigma'^*}$, then it must be scheduled in some $I'' \in \calC$ to the right of $I'$. Notice that $e^*_j  \geq \sfcenter(I) \geq \sfend(I'')$.  By Lemma~\ref{lemma:in-extended-window}, we have $b^*_{j} - \max\{2^{-h}|I|,2^h\} \leq \sfbegin(I')$, which implies $b^*_{j} \leq \sfbegin(I'')$.  So the window constraint for $j$ is satisfied.
		
		To count the number of discarded jobs, we consider the change of $|\tilde J|$ during the process. If $|\tilde J| \geq \sfcap(I')$ at the beginning of iteration $I'$, then $|\tilde J|$ will not change in the iteration. Otherwise, $|\tilde J|$ will be changed to $\sfcap(I')$.  Thus, in the end, we have that $|\tilde J|$ is the maximum of $\sfcap(I')$ over all $I' \in \calC$, which is at most $\max\{2^{-h}|I|,2^h\}m$ since $\sigma^*$ is a valid schedule for $\bbJ^*$. Thus we showed that we discarded at most $\max\{2^{-h}|I|,2^h\}m$  jobs in $J^*_I  \cap \sigma^{*-1}(\sfleft(I))$; using a similar procedure, we can schedule  $J^*_I  \cap \sigma^{*-1}(\sfright(I))$ with at most $\max\{2^{-h}|I|,2^h\}m$ jobs discarded. Thus we discarded at most $\max\{2^{1-h}|I|,2^{h+1}\}m \leq 2^{1-h}m|I| + 2^{h+1}m$  jobs in $J^*_{I}$.  
		
		So the total number of jobs we discarded is at most $\sum_{I \in \calI_\sftop}2^{1-h}|I|m + \sum_{I \in \calI_\sftop}2^{h+1}m$. The first term is $(L-h') 2^{1-h}Tm \leq L2^{1-h}Tm \leq 2mLT \cdot \frac{\epsilon}{8m \log T} \leq \frac{\epsilon T}{4}$. The second term is exactly $2^{h+1}m (2^{L-h'}-1) \leq 2\cdot 2^{L+h-h'}m = 2mT \cdot 2^{-h'}  \leq 2mT \cdot \frac{\epsilon}{4m} = \frac{\epsilon T}{2}$.  Thus we discarded at most $\frac{3\epsilon T}{4}$ jobs overall. 	\end{proof}

\subsection{Converting a Virtually-Valid Schedule to a Valid One}
	\label{subsec:virtually-valid-to-valid}
	Throughout this section, we focus on one {dyadic} system $\bbJ = (J_I)_{I \in \calI}$ and  a virtually-valid schedule $\sigma''$ for $\bbJ$. We show that $\sigma''$ can be efficiently converted to a valid schedule $\sigma$ with a few extra discarded jobs. This is done in two steps.  In the first step, we convert $\sigma''$ into another virtually-valid schedule $\sigma'$ with some good properties, then we convert $\sigma'$ to a valid schedule $\sigma$. The two steps are captured by Lemma~\ref{lemma:canonicalize} and \ref{lemma:virtually-valid-to-valid} respectively. 
	
	Some definitions are needed in order to describe and prove the two lemmas.  We define $J^{\#}$ to be the set of top jobs that are scheduled in $\sigma''$.  These are the set of interesting jobs: The bottom jobs in $\bbJ$ and discarded jobs of $\sigma''$ are handled in the same way in $\sigma'$ and $\sigma$ as they are in $\sigma''$. Moreover, our schedule $\sigma'$ does not discard extra jobs: $\sigma'^{-1}(\discarded) = \sigma''^{-1}(\discarded)$. $\sigma$ may discard extra jobs. For every $I \in \calI_{\sftop}$, we define $J^{\#}_I = J^{\#}  \cap J_I$ to be the set of jobs in $J_I$ that are scheduled in $\sigma''$.   For every $j \in J^{\#}_I$, we define $I(j) = I$ to be its owning interval.
	
	Now we assume we are given any virtually-valid schedule $\sigma'$ for $\bbJ$ with $\sigma'^{-1}(\discarded) = \sigma''^{-1}(\discarded)$, and $\sigma'_j = \sigma''_j$ for every bottom job $j$ in $\bbJ$. We make some definitions that depend on $\sigma'$. For any $j \in J^{\#}$, we define $\side_{\sigma'}(j) = \sfL$ if $\sigma'_j \in \sfleft(I(j))$ and $\side_{\sigma'}(j) = \sfR$ if $\sigma'_j \in \sfright(I(j))$.  We then define a partial order $\sigmaless$ over $J^{\#}$, which depends on the function $\side_{\sigma'}$. We have $j \sigmaless j'$ if and only if $I(j) = I(j') = I$ for some $I \in \calI_\sftop$, $\side_{\sigma'}(j) = \side_{\sigma'}(j')$ and the following happens. 
	\begin{itemize}
		\item If $\side_{\sigma'}(j) = \side_{\sigma'}(j') = \sfL$, then $\langle b^\bbJ_j, \depth_{J^{\#}_{I}}(j)\rangle < \langle b^\bbJ_{j'}, \depth_{J^{\#}_{I}}(j')\rangle$,
		\item If $\side_{\sigma'}(j) = \side_{\sigma'}(j') = \sfR$, then $\langle e^\bbJ_j, \depth_{J^{\#}_{I}}(j)\rangle < \langle e^\bbJ_{j'}, \depth_{J^{\#}_{I}}(j')\rangle$.
	\end{itemize}
	
	So we group jobs $j$ in $J^{\#}$ according to their $I(j)$ and $\side_{\sigma'}(j)$ values and only jobs in a same group are comparable by $\sigmaless$.  Within a same group, the jobs $j$  are compared using $\langle b_j, \depth_{J^{\#}_{I(j)}}(j) \rangle$ or $\langle e_j, \depth_{J^{\#}_{I(j)}}(j) \rangle$, depending on whether the group is a left group or a right group. 
%
	%
	With the notations defined, we can now give our Lemma~\ref{lemma:canonicalize}.	It says that we can find a schedule $\sigma'$ which ``weakly'' respects the $\prec$ and $\sigmaless$ order:
	\begin{lemma}
		\label{lemma:canonicalize}
		We can efficiently find another virtually-valid schedule $\sigma'$ for $\bbJ$ with $\sigma'^{-1}(\discarded) = \sigma''^{-1}(\discarded)$ and $\sigma'_j = \sigma''_j$ for every bottom job $j$ in $\bbJ$. Moreover, for every two jobs $j, j' \in J^{\#}$, the following holds. 
		\begin{enumerate}[label=(\ref{lemma:canonicalize}\alph*)]
			\item If $j \prec j'$, then $\sigma'_{j} \leq \sigma'_{j'}$.
			\item If $j \sigmaless j'$, then $\sigma'_{j} \leq \sigma'_{j'}$.
		\end{enumerate}
	\end{lemma}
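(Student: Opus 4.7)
The plan is to build $\sigma'$ from $\sigma''$ by a sequence of time-swaps on pairs in $J^{\#}$. A single swap exchanges $\sigma'_j\leftrightarrow\sigma'_{j'}$; it leaves the multiset of scheduled times unchanged, so the capacity constraints, the discarded set, the bottom-job times, and the precedence/interval constraints for bottom jobs are all automatically preserved. Only the window constraints for the two top jobs $j,j'$ need to be rechecked at each swap.

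The first ingredient is that, whenever $\{j,j'\}\subseteq J^{\#}$ is an inversion, i.e.\ $\sigma'_j>\sigma'_{j'}$ together with either $j\prec j'$ or $j\sigmaless j'$, the swap preserves virtual validity. In the $\prec$-case, Lemma~\ref{lemma:b-e-respects-precedence} gives $b^\bbJ_j\le b^\bbJ_{j'}$ and $e^\bbJ_j\le e^\bbJ_{j'}$, which when combined with $b^\bbJ_{j'}<\sigma'_{j'}<\sigma'_j\le e^\bbJ_j$ immediately places $\sigma'_{j'}$ into $(b^\bbJ_j,e^\bbJ_j]$ and $\sigma'_j$ into $(b^\bbJ_{j'},e^\bbJ_{j'}]$. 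In the $\sigmaless$-case the two jobs sit in the same group, so both windows contain $\sfcenter(I(j))$, and both times already lie in the common half $\sfleft(I(j))$ or $\sfright(I(j))$, which is contained in each of the two windows; swapping cannot leave that common half.

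For termination I would run two phases. In Phase~1 I repeatedly swap $\prec$-inversions; applying Lemma~\ref{lemma:reduce-inversions} with $\Aless$ the restriction of $\prec$ to $J^{\#}$ shows that the $\prec$-inversion count strictly decreases with each swap, so Phase~1 terminates and the resulting $\sigma'$ satisfies (a). At this point $\side_{\sigma'}$ is frozen and $\sigmaless$ becomes a fixed strict partial order on $J^{\#}$, so in Phase~2 I swap $\sigmaless$-inversions within each group; each such swap stays inside one group and therefore does not alter any $\side_{\sigma'}(\cdot)$ or the relation $\sigmaless$ itself, and a second application of Lemma~\ref{lemma:reduce-inversions} bounds the number of swaps.

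The main obstacle is to show that Phase~2 does not re-create the $\prec$-inversions that Phase~1 resolved. Within one group this is automatic: for $j\prec j'$ with $I(j)=I(j')$ and the same side, Lemma~\ref{lemma:b-e-respects-precedence} combined with the strict monotonicity $\depth_{J^{\#}_{I(j)}}(j)<\depth_{J^{\#}_{I(j)}}(j')$ forces $j\sigmaless j'$, so sorting by $\sigmaless$ within a group automatically respects within-group $\prec$. The delicate configuration is a cross-group triple $j_3\prec j_1$ with $j_1\sigmaless j_2$, where a naive Phase~2 swap of $j_1,j_2$ could a priori push $\sigma'_{j_1}$ strictly below $\sigma'_{j_3}$. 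I would rule this out by exploiting (i) the alignment of $b^\bbJ$ and $e^\bbJ$ to integer multiples of $\max\{2^{-h}|I|,2^h\}$, and (ii) the laminar structure of $\calI$: together these force the time ranges $T_{g(j_3)}$ and $T_{g(j_1)}=T_{g(j_2)}$ either to be disjoint in the order dictated by $\inorderless$, or to be nested in a way that makes $\sigma'_{j_3}\le\sigma'_{j_1}$ robust under any within-group permutation of $g(j_1)$. Should that structural argument turn out too fragile in a corner case, the fallback is to interleave the two kinds of swaps under a lexicographic potential whose primary coordinate is the number of $\prec$-inversions and whose secondary coordinate is the number of $\sigmaless$-inversions, always choosing a swap that does not increase the primary coordinate; Lemma~\ref{lemma:reduce-inversions} applied to each coordinate in turn then yields termination with both (a) and (b).
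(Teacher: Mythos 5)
Your swapping step itself is sound and matches the paper's: for a $\prec$-inversion the chain $b^\bbJ_j \leq b^\bbJ_{j'} < \sigma'_{j'} < \sigma'_j \leq e^\bbJ_j \leq e^\bbJ_{j'}$ shows the swap preserves both windows, and similarly for a $\sigmaless$-inversion. (One small imprecision: in the $\sigmaless$-case the common half $\sfleft(I)$ is \emph{not} contained in the two windows; what is true is that both times lie in $(b^\bbJ_{j'}, \sfcenter(I)]$, which is contained in both windows since $b^\bbJ_j \leq b^\bbJ_{j'}$ and $\sfcenter(I) \leq e^\bbJ_j, e^\bbJ_{j'}$.) The genuine gap is exactly the obstacle you flag: a Phase-2 swap \emph{can} recreate a $\prec$-inversion, and neither your structural argument nor your fallback potential closes this. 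Concretely, take $\hat I \supsetneq I$ with $I \subseteq \sfright(\hat I)$, a job $j_3 \in J^{\#}_{\hat I}$ with $j_3 \prec j_1 \in J^{\#}_I$, and $j_1 \sigmaless j_2$ with $\sigma'_{j_2} < \sigma'_{j_3} \leq \sigma'_{j_1}$, all three times in $\sfleft(I)$. This is consistent with everything: the only constraints relating the windows are $b^\bbJ_{j_3}\leq b^\bbJ_{j_1}$ and $e^\bbJ_{j_3}\leq e^\bbJ_{j_1}$, and $e^\bbJ_{j_3}$ need only satisfy $e^\bbJ_{j_3} < \sfend(I)$, so $\sigma'_{j_3}$ can sit strictly between $\sigma'_{j_2}$ and $\sigma'_{j_1}$ inside $\sfleft(I)$. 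The windows of a job in a strict ancestor interval and of a job below it are nested/overlapping, not separated, so laminarity and grid-alignment do not make $\sigma'_{j_3}\leq\sigma'_{j_1}$ robust under within-group permutations. Swapping $j_1,j_2$ then creates the $\prec$-inversion $\{j_3,j_1\}$. Your fallback (primary coordinate $=$ number of $\prec$-inversions, only take swaps that do not increase it) can therefore get stuck: the last remaining $\sigmaless$-inversion may be exactly one whose swap increases the primary coordinate, and then the procedure terminates without establishing (b).

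The paper's proof repairs precisely this point by choosing a different primary potential: $\dif_1 = \sum_{j\in J^{\#}} |I(j)|\cdot|\sigma'_j - \sfcenter(I(j))|$, followed by $\dif_2$ (inversions of the \emph{side} function w.r.t.\ $\prec$) and $\dif_3$ (inversions of $\sigma'$ w.r.t.\ $\sigmaless$). The key features are: $\dif_1$ is invariant under any swap of two jobs owned by the same interval, and it strictly decreases under a $\prec$-swap across nested owning intervals (the weight $|I(j)|$ makes the larger interval's gain dominate); $\prec$-swaps within one interval across different sides decrease $\dif_2$ without touching $\dif_1$; same-interval same-side $\prec$-inversions are automatically $\sigmaless$-inversions; and $\sigmaless$-swaps decrease $\dif_3$ while fixing $\dif_1,\dif_2$. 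In particular, the problematic configuration above is harmless for that potential: the $\sigmaless$-swap leaves $\dif_1,\dif_2$ unchanged and decreases $\dif_3$, and the newly created cross-interval $\prec$-inversion, when later swapped, strictly decreases $\dif_1$. You would need to replace your primary coordinate with something of this kind (inversion counts alone will not do) for the interleaved procedure to terminate with both (a) and (b).
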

		

	\begin{proof}
		Let $\sigma' = {\sigma''}$ initially.  While there exist some $j, j'$ that do not satisfy one of the two conditions, we swap $\sigma'_j$ and $\sigma'_{j'}$. This makes the condition satisfied, without breaking the window constraints for $j$ and $j'$:
		\begin{itemize}
			\item If (\ref{lemma:canonicalize}a) is not satisfied, then by Lemma~\ref{lemma:b-e-respects-precedence} and the window constraints for $j$ and $j'$ in $\sigma'$, we have $b^\bbJ_j \leq b^\bbJ_{j'} < \sigma'_{j'} < \sigma'_j \leq e^\bbJ_j \leq e^\bbJ_{j'}$. After swapping we still have $\sigma'_j \in (b^\bbJ_j, e^\bbJ_j]$ and $\sigma'_{j'} \in (b^\bbJ_{j'}, e^\bbJ_{j'}]$.
			\item Suppose (\ref{lemma:canonicalize}b) is not satisfied.  Assume w.l.o.g that $j, j' \in J^{\#}_I$ for some $I \in \calI_\sftop, \sigma'_j, \sigma'_{j'} \in \sfleft(I)$ and $\sigma'_j > \sigma'_{j'}$. The case $\sigma'_j, \sigma'_{j'} \in \sfright(I)$ can be handled in the same way. Since $j \sigmaless j'$, we have $b^\bbJ_j \leq b^\bbJ_{j'} < \sigma'_{j'} < \sigma'_j \leq \sfcenter(I) \leq \min\{e^\bbJ_j, e^\bbJ_{j'}\}$ and after swapping we still have $\sigma'_{j} \in (b^\bbJ_j, e^\bbJ_j]$ and $\sigma'_{j'} \in (b^\bbJ_{j'}, e^\bbJ_{j'}]$.
		\end{itemize}
				
		It is trivial that the swapping operations do not violate capacity constraints, precedence and interval constraints for bottom jobs.  So it remains to show that the procedure of swapping operations will terminate. This is done by carefully defining a vector $\dif := \big(\dif_1, \dif_2, \dif_3\big)$ for $\sigma'$ and showing that its lexicographic rank strictly decreases after each swapping. $\dif_1, \dif_2$ and $\dif_3$ are defined as follows:
		\begin{align*}
			\dif_1  &:= \sum_{j \in J^{\#}} |I(j)| \cdot \left|\sigma'_j - \sfcenter(I(j))\right|,\\
			\dif_2 &:= \text{number of inversions in funciton $\mathsf{side}_{\sigma'}$ w.r.t the partial order}  \prec,\\
			\dif_3 &:= \text{number of inversions in function $\sigma'$ w.r.t the partial order} \sigmaless.
		\end{align*}
		Notice that the $\side_{\sigma'}$ function maps $J^{\#}$ to $\{\sfL, \sfR\}$. When defining $\dif_2$, we treat $\sfL$ as $0$ and $\sfR$ as $1$ and so $\sfL < \sfR$. 
		
		First assume that $j, j'$ do not satisfy condition (\ref{lemma:canonicalize}a).   If $j$ and $j'$ are assigned to two disjoint intervals, then the window constraints for $j$ and $j'$ will guarantee (\ref{lemma:canonicalize}a).   So we assume $I(j)$ and $I(j')$ overlap, $j \prec j'$ and $\sigma'_{j} > \sigma'_{j'}$. We consider three different cases.  
 		\begin{enumerate}[label=(\Alph*), leftmargin=*]
			\item $I(j) \subseteq \sfleft(I(j'))$. By Lemma~\ref{lemma:b-e-respects-precedence}, we have $b^\bbJ_j \leq b^\bbJ_{j'} < \sigma'_{j'} < \sigma'_j \leq e^\bbJ_j < \sfend(I(j)) \leq \sfcenter(I(j')) \leq e^\bbJ_{j'}$.  Then, swapping $\sigma'_{j'}$ and $\sigma'_j$ will decrease $\big|\sigma'_{j'} - \sfcenter(I(j'))\big|$ by $|\sigma'_{j'} - \sigma'_j|$, and increase $\big|\sigma'_j - \sfcenter(I(j))\big|$ by at most $|\sigma'_{j'} - \sigma'_j|$. Thus $\dif_1$ will decrease since $ |I(j')| > |I(j)|$.
			\item  $I(j') \subseteq \sfright(I(j))$. The analysis is symmetric to that for (A). We have $b^\bbJ_j \leq \sfcenter(I(j))\leq \sfbegin(I(j')) < b^\bbJ_{j'} < \sigma'_{j'} < \sigma'_j \leq e^\bbJ_j \leq e^\bbJ_{j'}$. The swap decreases $|\sigma'_j - \sfcenter(I(j))|$ by $|\sigma'_j - \sigma'_{j'}|$ and increases $\big|\sigma'_{j'} - \sfcenter(I(j'))\big|$ by at most $|\sigma'_j - \sigma'_{j'}|$. Since $|I(j)|> |I(j')|$, $\dif_1$ will decrease.   
			\item $I(j) = I(j') = I$ for some $I \in \calI_\sftop$. In this case, swapping $\sigma'_j$ and $\sigma'_{j'}$ does not change $\dif_1$. Consider three cases.
			\begin{enumerate}[label=(C\roman*),leftmargin=20pt]
				\item $\sigma'_{j'} \in \sfleft(I)$ and $\sigma'_j \in \sfright(I)$.  In this case, $j \prec j'$, $\side_{\sigma'}(j) = \sfR$ and $\side_{\sigma'}(j')= \sfL$.  Swapping $\sigma'_j$ and $\sigma'_{j'}$ will swap $\side_{\sigma'}(j)$ and $\side_{\sigma'}(j')$.  Thus, by Lemma~\ref{lemma:reduce-inversions}, $\dif_2$ will strictly decrease.
				
				\item $\sigma'_{j}, \sigma'_{j'}\in \sfleft(I)$. In this case, we have $b^\bbJ_j \leq b^\bbJ_{j'}$ by Lemma~\ref{lemma:b-e-respects-precedence} and $\depth_{J^{\#}_{I}}(j) < \depth_{J^{\#}_{I}}(j')$, which implies $\langle b^\bbJ_j, \depth_{J^{\#}_{I}}(j)\rangle  < \langle b^\bbJ_{j'}, \depth_{J^{\#}_{I}}(j')\rangle$. Thus, we have $j\sigmaless j'$ and  this case will be covered by Condition~(\ref{lemma:canonicalize}b). 
				
				\item $\sigma'_{j}, \sigma'_{j'}\in \sfright(I)$. Similarly in this case we have $\langle e^\bbJ_j, \depth_{J^{\#}_{I}}(j)\rangle  < \langle e^\bbJ_{j'}, \depth_{J^{\#}_{I}}(j')\rangle$. So, we have $j\sigmaless j'$ and  the case  will be covered by Condition~(\ref{lemma:canonicalize}b).
			\end{enumerate}
		\end{enumerate}
		
		Now we assume Condition~(\ref{lemma:canonicalize}b) is not satisfied. That is $j \sigmaless j'$ and $\sigma'_j > \sigma'_{j'}$. So, $I(j) = I(j')$ and $\side_{\sigma'}(j) =\side_{\sigma'}(j')$.  Then swapping $\sigma'_j$ and $\sigma'_{j'}$ does not change $\dif_1, \dif_2$. Also it does not change the relation $\sigmaless$ itself since the $\side_{\sigma'}$ function is unchanged. By Lemma~\ref{lemma:reduce-inversions}, $\dif_3$ will strictly decrease.		
		
		Thus, we proved that each swapping operation will decrease the lexicographic rank of the vector $\dif$. Since $\dif_1, \dif_2$ and $\dif_3$ are non-negative integers upper bounded by $\textrm{poly}(n)$, the procedure will terminate in $\textrm{poly}(n)$ iterations.  Thus, the procedure is efficient and eventually the virtually-valid schedule $\sigma'$ will satisfy (\ref{lemma:canonicalize}a) and (\ref{lemma:canonicalize}b).
	\end{proof}
	
	The second lemma shows how to convert $\sigma'$ to a valid schedule $\sigma$ for $\bbJ$. 
	\begin{lemma}
		\label{lemma:virtually-valid-to-valid}
		Given a virtually-valid schedule $\sigma'$ for $\bbJ$ satisfying  properties in Lemma~\ref{lemma:canonicalize}, we can efficiently construct a valid schedule $\sigma$ for $\bbJ$ with $|\sigma(\discarded) \setminus \sigma'(\discarded)| \leq \frac{\epsilon T}{4}$.
	\end{lemma}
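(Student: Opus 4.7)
The plan is to keep $\sigma$ identical to $\sigma'$ on all bottom jobs and on the jobs $\sigma'$ already discards, and to reschedule only the scheduled top jobs $J^{\#}$, handling each top interval $I \in \calI_\sftop$ and each of its two halves $\sfleft(I), \sfright(I)$ independently. For each pair $(I,\mathsf{s})$ with $\mathsf{s}\in\{\sfL,\sfR\}$ I rebuild the placement of $J^{\#}_{I,\mathsf{s}}:=\{j\in J^{\#}_I:\side_{\sigma'}(j)=\mathsf{s}\}$ inside $\sfleft(I)$ or $\sfright(I)$ respectively. Keeping each such job in the same half of the same owning top interval is essential: at every time $t$ the contribution of $\sigma$ to the load at $t$ is then dominated slot-by-slot by that of $\sigma'$, and since $\sigma'$ has load $\leq m$ everywhere, so does $\sigma$.

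For the left half I apply the list-scheduling algorithm of Lemma~\ref{lemma:list-scheduling} to $J^{\#}_{I,\sfL}$ inside $\sfleft(I)$, using the capacity function $\sfcap(t):=|J^{\#}_{I,\sfL}\cap\sigma'^{-1}(t)|$ inherited from $\sigma'$ and, in addition, refusing to place any job $j$ at a time $\leq b^\bbJ_j$. Lemma~\ref{lemma:canonicalize}(a)--(b) guarantees that $\sigma'$ orders $J^{\#}_{I,\sfL}$ non-decreasingly by $\pi:=\langle b^\bbJ_j,\depth_{J^{\#}_I}(j)\rangle$, and that this order refines the restriction of $\prec$ to $J^{\#}_{I,\sfL}$; I feed the list scheduler the order $\pi$. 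The right half is handled symmetrically, using $\langle e^\bbJ_j,\depth_{J^{\#}_I}(j)\rangle$ and a reversed-time convention inside $\sfright(I)$.

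The main technical obstacle is to show that this window-aware list scheduling still discards at most $m\Delta(J^{\#}_{I,\sfL})\leq m\Delta(J_I)$ jobs. The intuition is that the release-time restriction $t>b^\bbJ_j$ never binds past what the capacity already forces: the total $\sfcap$-mass on $(\sfbegin(I), b^\bbJ_j]$ equals $|\{j'\in J^{\#}_{I,\sfL}:\sigma'_{j'}\leq b^\bbJ_j\}|$, and since each such $j'$ satisfies $b^\bbJ_{j'}<\sigma'_{j'}\leq b^\bbJ_j$, it comes strictly before $j$ in $\pi$. A careful chain-length accounting then shows that a job is pushed past $b^\bbJ_j$ (harmless) or past $\sfcenter(I)$ (discarded) only at the expense of a chain of predecessors in $J^{\#}_{I,\sfL}$ having consumed those slots, matching the standard list-scheduling argument behind Lemma~\ref{lemma:list-scheduling} and yielding the claimed bound. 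Window preservation $\sigma_j\in(b^\bbJ_j,\sfcenter(I)]\subseteq(b^\bbJ_j,e^\bbJ_j]$ then holds by construction (the upper bound uses Claim~\ref{claim:simple-properties-of-b-e}).

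Once windows are preserved, validity of $\sigma$ is routine: capacity and interval constraints hold by construction; precedence among bottom jobs is inherited from $\sigma'$; precedence between two top jobs in the same owning interval is enforced by list scheduling; and all remaining precedence constraints (between top jobs in different owning intervals, and between top and bottom jobs) follow from Property~\ref{property:dyadic-precedence} combined with the window/interval case analysis sketched at the beginning of Section~\ref{subsec:valid-to-virtually-valid}. Summing the discard bound over all pairs $(I,\mathsf{s})$ and invoking Property~\ref{property:dyadic-chain-length},
\begin{align*}
\sum_{I\in\calI_\sftop}2m\Delta(J_I)
\;\leq\;2m\delta\sum_{I\in\calI_\sftop}|J_I|+2m\delta'\sum_{I\in\calI_\sftop}|I|
\;\leq\;2m\delta\cdot n+2m\delta'\cdot L\cdot T,
\end{align*}
and plugging in $\delta=\epsilon/(16\cdot 2^h m^2)$, $\delta'=1/(2\cdot 2^{2h})$, $n\leq mT$, $L=\log T-h$, and $2^h=\Theta(m\log T/\epsilon)$ makes each of the two summands at most $\epsilon T/8$, giving the desired bound $|\sigma(\discarded)\setminus\sigma'(\discarded)|\leq\epsilon T/4$.
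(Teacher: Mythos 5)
Your construction has a genuine gap in the precedence argument. You reschedule each group $J^{\#}_{I,\mathsf{s}}$ independently over the \emph{entire} half-interval $\sfleft(I)$ or $\sfright(I)$, and then claim that precedence constraints between top jobs with different owning intervals "follow from Property~\ref{property:dyadic-precedence} combined with the window/interval case analysis." They do not. Take $j \prec j'$ with $I(j) \subsetneq I(j')$, so $I(j) \subseteq \sfleft(I(j'))$, and suppose $\side_{\sigma'}(j') = \sfL$. The definition of $b^\bbJ_{j'}$ only forces $b^\bbJ_{j'} > \sfbegin(I(j))$, not $b^\bbJ_{j'} \geq \sfend(I(j))$, so the windows of $j$ and $j'$ can overlap substantially. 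Property~(\ref{lemma:canonicalize}a) gives $\sigma'_j \leq \sigma'_{j'}$, but your two list-scheduling passes (one for $J^{\#}_{I(j),\mathsf{s}}$ inside $I(j)$, one for $J^{\#}_{I(j'),\sfL}$ inside $\sfleft(I(j'))$) permute jobs within each group independently: $j$ can be pushed to a later slot of its group while $j'$ is pulled to an earlier slot of its group, yielding $\sigma_{j'} \leq \sigma_j$ and a violated precedence constraint. This is exactly the failure mode the paper's proof is designed to avoid: there, every scheduled top job is kept inside the \emph{bottom interval} that $\sigma'$ assigned it, so that (i) cross-bottom-interval precedence is inherited directly from (\ref{lemma:canonicalize}a), and (ii) precedence among the top jobs sharing one bottom interval $I'$ (possibly from different owning intervals) is enforced by running Lemma~\ref{lemma:list-scheduling} jointly on $J^{\#}\cap\sigma'^{-1}(I')$. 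The price of that finer granularity is the chain-length accounting via Lemma~\ref{lemma:length-add-up}, which costs a factor of $2^{h-1}$ (the number of distinct $b^\bbJ_j$ values) plus an additive $|\calI_\sfbot[\sfleft(I)]|$ term; the parameters $\delta,\delta'$ are chosen precisely to absorb this. Your per-owning-interval bound of $m\Delta(J_I)$ is stronger than what the paper obtains per interval, which is a signal that something was given up — namely the cross-interval precedence.

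A secondary, smaller concern: your discard bound for the "window-aware" list scheduling (list scheduling with release times $t > b^\bbJ_j$) is only sketched. The standard argument behind Lemma~\ref{lemma:list-scheduling} charges each underfull time step to a decrease of $\Delta$ of the remaining jobs, and release times break this charging because a slot can go unused while ready jobs are merely blocked by their release times; your observation that the $\sfcap$-mass on $(\sfbegin(I), b^\bbJ_j]$ is matched by jobs preceding $j$ in $\pi$ points in the right direction but does not by itself complete the accounting, since those earlier jobs need not occupy those early slots. Even if this could be repaired, the precedence gap above is fatal to the construction as stated.
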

	
	\begin{proof}
		As we mentioned, $\sigma$ discards all jobs in $\sigma'^{-1}(\discarded) = \sigma''^{-1}(\discarded)$ and schedule every  bottom job $j$ at time $\sigma'_j = \sigma''_j$.  So $\sigma$ satisfies the precedence and interval constraints for bottom jobs.  Also, $\sigma$ will not change the scheduling bottom intervals of jobs in $J^{\#}$: For every $j \in J^{\#}$, if $\sigma'_j \in I'$ for some $I' \in \calI_\sfbot$, we must have $\sigma_j \in I' \cup \{\discarded\}$.  Any such $\sigma$ will satisfy interval constraints for top jobs, and precedence constraints  between top and bottom jobs, as well as precedence constraints between top jobs scheduled in different bottom intervals. To see this focus on any job $j \in J^{\#}$ with $\sigma'_j, \sigma_j \in I' \in \calI_\sfbot$. 
		\begin{itemize}
			\item We must have $\sigma'_j \in I' \subseteq (b^\bbJ_j, e^\bbJ_j] \subseteq I(j)$. $I' \subseteq (b^\bbJ_j, e^\bbJ_j]$ holds since $b^\bbJ_j$ and $e^\bbJ_j$ are multiplies of $2^h$. Thus $\sigma_j \in I(j)$ and the interval constraint for $j$ is satisfied.
		
			\item Notice that $I' \subseteq (b^\bbJ_j, e^\bbJ_j]$.   By Claim~\ref{claim:simple-properties-of-b-e}, there are no precedence constraints from $j$ to $J_{\subseteq (0, e_j]} \supseteq J_{\subseteq (0, \sfend(I')]}$, and there are no precedence constraints from $J_{\subseteq (b_j, T]} \supseteq J_{\subseteq (\sfbegin(I'),T]}$ to $j$.  So, if $j$ is scheduled in $I'$, then all the precedence constraints between $j$ and bottom jobs are satisfied.   
			
			\item By Property (\ref{lemma:canonicalize}a) for $\sigma'$, for every $I', I'' \in \calI_\sfbot$ with $I' \inorderless I''$, there are no precedence constraints from $\sigma'^{-1}(I'') \cap J^{\#}$ to $\sigma'^{-1}(I') \cap J^{\#}$.  Thus, $\sigma$ will satisfy the precedence constraints between top jobs scheduled in different bottom intervals.		
		\end{itemize}
		
		Thus it suffices for us to guarantee the precedence constraints among top jobs scheduled in the same bottom interval in $\sigma$, while maintaining the capacity constraints. For each $I'\in \calI_\sfbot$, we schedule jobs in $J^{\#} \cap \sigma'^{-1}(I')$ in $I'$ in $\sigma$ using Lemma~\ref{lemma:list-scheduling}: 
		By the lemma, we only need to discard at most $m\Delta(J^{\#} \cap \sigma'^{-1}(I'))$ jobs in $J^{\#} \cap \sigma'^{-1}(I')$. Thus, the total number of discarded jobs is at most $m\sum_{I'\in \calI_\sfbot}\Delta(J^{\#} \cap \sigma'^{-1}(I'))$. Therefore, the remaining task in the proof is to show
		\begin{align}
			\sum_{I'\in \calI_\sfbot}\Delta(J^{\#} \cap \sigma'^{-1}(I')) \leq \frac{\epsilon T}{4m}. \label{inequ:total-discard}
		\end{align}
		
		We fix $I \in \calI_\sftop$ and focus on the set $J^{\#}_I\cap\sigma'^{-1}(\sfleft(I))$ of jobs in $J^{\#}_I$ scheduled in $\sfleft(I)$ in $\sigma'$.   By Property~(\ref{lemma:canonicalize}b) of $\sigma'$, the jobs $j$ in the set are scheduled in non-decreasing order of $\langle b^\bbJ_j, \depth_{J^{\#}_{I}}(j)\rangle$. 
		Applying Lemma~\ref{lemma:length-add-up} with $J$ being $J^{\#}_I, c(j)$ being $b^\bbJ_j$, and the sequence being $(J^{\#}_I \cap \sigma'^{-1}(I'))_{I' \in \calI_\sfbot[\sfleft(I)]}$, we have 
		\begin{align}
			&\quad \sum_{I' \in \calI_\sfbot[\sfleft(I)]}\Delta(J^{\#}_I\cap \sigma'^{-1}(I'))
			\leq 2^{h-1}\Delta(J^{\#}_I) +\big|\calI_\sfbot[\sfleft(I)]\big|
			\leq 2^{h-1}(\delta|J_I|+\delta'|I|) + 2^{-{h-1}}|I|,
			\label{inequ:Delta-left}
		\end{align}
		where the first inequality is by Lemma~\ref{lemma:length-add-up} and that there are at most $2^{h-1}$ different values in $\{b^\bbJ_j: j \in J^{\#}_I\}$ (all the values are integer multiplies of $2^{-h}|I|$ in $(\sfbegin(I), \sfcenter(I)]$), and the second inequality is by that $\Delta(J^{\#}_I) \leq \Delta(J_I) \leq \delta|J_I|+\delta'|I|$, and $\big|\calI_\sfbot[\sfleft(I)]\big| = 2^{-h-1}|I|$.
		
		Using a similar argument for right jobs in $J^{\#}_I$ w.r.t $\sigma'$, we can show that  $\sum_{I' \in \calI_\sfbot[\sfright(I)]}\Delta(J^{\#}_I\cap \sigma'^{-1}(I'))\leq 2^{h-1}(\delta|J_I|+\delta'|I|) + 2^{-{h-1}}|I|$. 
		Together the two inequalities imply 
		\begin{align}
			\sum_{I' \in \calI_\sfbot[I]}\Delta(J^{\#}_I\cap \sigma'^{-1}(I'))\leq 2^h(\delta|J_I|+\delta'|I|) + 2^{-h}|I|. \label{inequ:discard-for-I}
		\end{align}
	
		Now summing up the bound over all $I \in \calI_\sftop$, we have 
		\begin{align*}
			&\quad \sum_{I' \in \calI_\sfbot}\Delta\left(J^{\#}\cap \sigma'^{-1}(I')\right) 
			\leq \sum_{I \in \calI_\sftop, I' \in \calI_\sfbot[I]}\Delta(J^{\#}_I \cap \sigma'^{-1}(I'))\leq \sum_{I \in \calI_\sftop}\left(2^h(\delta|J_I|+\delta'|I|) + 2^{-h}|I|\right)\\
			&\leq 2^h\delta mT + \sum_{\ell = 0}^{L-h' - 1}\sum_{I \in \calI_\ell}\left(2^h\delta'|I| + 2^{-h}|I|\right)
			= 2^h\delta mT + \sum_{\ell=0}^{L-h'-1}(2^h\delta'+2^{-h})T \leq 2^h\delta mT+ (2^h\delta'+2^{-h})LT\\
			&= 2^h \cdot \frac{\epsilon}{16\cdot 2^hm^2} mT + \frac32\cdot 2^{-h}LT \leq \frac{\epsilon T}{16m} + \frac32 \cdot \frac{\epsilon}{8m\log T}LT \leq \frac{\epsilon T}{16m} + \frac{3\epsilon T}{16m} = \frac{\epsilon T}{4m}.
		\end{align*}
		The first inequality in the first line used the subadditivity of $\Delta$, and the second inequality is by Inequality~\eqref{inequ:discard-for-I}. The first inequality in the second line used that $J_I$'s for top intervals $I$ are disjoint and $|J^\circ|\leq mT$. The inequalities in the third line used the definitions of $h, \delta$ and $\delta'$. This finishes the proof of \eqref{inequ:total-discard}.
	\end{proof}

\section{Construction of Dyadic System and Virtually-Valid Schedule}
\label{sec:algo}
	With the (partial) dyadic systems and virtually-valid schedules defined, we can now describe the final algorithm for the scheduling problem. By Lemma~\ref{lemma:canonicalize} and \ref{lemma:virtually-valid-to-valid},  it suffices for us to construct a {dyadic} system $\bbJ^\best$ and a virtually-valid schedule $\sigma^\best$ for it with a small number of discarded jobs.  The section is organized as follows: We give the recursive algorithm in Section~\ref{subsec:algo-recursive} and analyze its correctness and running time in Section~\ref{subsec:algo-correctness} and \ref{subsec:algo-running-time} respectively.

	Now it is the time to discuss the set ${J_\anc}$ of ancestor jobs and the vectors $b^\anc, e^\anc$ in a partial dyadic system.   They are indeed top jobs passed from upper levels, and $b^\anc_j$ and $e^\anc_j$ for a job $j \in {J_\anc}$ give the values of $b^\bbJ_j$ and $e^\bbJ_j$ in the dyadic system $\bbJ$ we try to construct.   Therefore in the definition of a virtually-valid schedule, we treat them in the same way as top jobs: we require the window constraints for ancestor jobs to hold, but ignore the precedence constraints incident on them. The partial dyadic systems are introduced so that we can conduct mathematical inductions easily.

\subsection{Recursive Algorithm}
\label{subsec:algo-recursive}
	At a high level, our recursive algorithm $\schedule$ (described in Algorithm~\ref{alg:schedule}) tries to guess $\bbJ^*$ and $\sigma'^*$. We show that with the information on the top $h$ levels of intervals below the scheduling interval $I^*$, we can seamlessly break the instance into two sub-instances correspondent to the left and right half and $I^*$. 
	
	Before describing the algorithm, it is convenient to make one more definition: 
	\begin{definition}
		\label{def:multisets-equiv}
		Given two multi-sets of $\calW$ and $\calW'$ of intervals and an interval $I\in\calI$, we say $\calW$ and $\calW'$ are \emph{equivalent} within $I$, denoted as $\calW \equiv_{I} \calW'$, if $\{ W \cap I: W \in \calW \} = \{W \cap I: W \in \calW'\}$, where both sets are treated as multi-sets.
	\end{definition}
	
	
		\begin{algorithm}
			\caption{$\schedule\big(I^*,  J_\anc,   \tilde b^\anc,  \tilde e^\anc, (\tilde J_{I})_{I \in \calI_{<h-1}[I^*]}, (\tilde K_{I})_{I \in \calI_{h-1}[I^*]}\big)$
			}\label{alg:schedule}
			\textbf{Output:} a dyadic system $\bbJ^\best$ and a virtually-valid schedule $\sigma^\best$ for $\bbJ^\best$\\
			\textbf{Remark:} When $I^*$ is below level $L-h$, then $\calI_{<h-1}[I^*] = \calI[I^*]$ and $\calI_{h-1}[I^*] = \calI_{\geq h-1}[I^*] =  \emptyset$.
				
				\begin{algorithmic}[1]
				\If{$|\tilde J_\anc| > m|I^*|$ or  $|\tilde J_{\subseteq I^*}| + |\tilde K_{\subseteq I^*}| > m|I^*|$} \Return $(\bot, \bot)$ \EndIf  \label{step:schedule-check-size}
				\If{$I^* \in \calI_\sfbot$}
					\Return $\big(\bbJ^\best := (\tilde J_\anc, \tilde b^\anc, \tilde e^\anc, (\tilde J_{I^*})), \text{best virtually-valid schedule $\sigma^\best$ for } \bbJ^\best\big)$  \label{step:schedule-terminating-start}
					\label{step:schedule-terminating-end}
				\EndIf
				\State $\sigma^\best \gets \bot$, copy $\tilde b^\anc$, $\tilde e^\anc$, $\tilde J_I$'s and $\tilde K_I$'s to $b^\anc, e^\anc$, $J_I$'s  and $K_I$'s \label{step:schedule-init}
				\For{every possible vector $(g_I)_{I \in \calI_{h-1}[I^*] \setminus \calI_\sfbot}$ s.t $g_I \in \{\sfL, \sfR\}^p$ if $I \in \calI_\sftop$ and $g_I \in \{\sfL, \sfR\}^{m|I|}$ if $I \in \calI_\sfmid$} \label{step:schedule-enumerate-g}
					\For{every $I \in \calI_{h-1}[I^*]$}
					\State \textbf{if} $I\in \calI_\sftop \cup \calI_\sfmid$ \textbf{then} $\big( J_I,  K_{\sfleft(I)},  K_{\sfright(I)}\big) \gets \pushdown(I, K_I, g_{I})$ \textbf{else} $J_I\gets K_I$ \label{step:schedule-call-pushdown}
					\EndFor
					\For{every $j \in  J_{I^*}$} \label{step:schedule-extend-b-e-up} define
						\begin{itemize}[leftmargin=25pt]
							\item $b^\anc_j$ to be the minimum integer multiply $b$ of $\max\{2^{-h}|I^*|,2^h\}$ in $(\sfbegin(I^*),\sfcenter(I^*)]$ such that there are no precedence constraints from $J_{\subseteq (b, \sfcenter(I^*)]} \cup K_{\subseteq (b, \sfcenter(I^*)]}$ to $j$, and
							\item $e^\anc_j$ to be the maximum integer multiply $e$ of $\max\{2^{-h}|I^*|, 2^h\}$ in $[\sfcenter(I^*),\sfend(I^*))$ such that there are no precedence constraints from $j$ to $J_{\subseteq ( \sfcenter(I^*), e]} \cup K_{\subseteq (\sfcenter(I^*), e]}$.
						\end{itemize}
					\EndFor
					\For{every partition of $ {J_\anc} \cup  J_{I^*}$ into $ J_\anc^\sfL, J_\anc^\sfR$ and $J_\discarded$, keeping only one partition in every equivalence class defined in Remark~\ref{remark:equivalence}} \label{step:schedule-partition-Janc}
						\State $(\bbJ^\sfL, \sigma^\sfL)\gets \schedule\big(\sfleft(I^*),  J_\anc^\sfL,    b^\anc|_{ J_\anc^\sfL}, e^\anc|_{ J_\anc^\sfL}, (J_{I})_{I \in \calI_{<h-1}[\sfleft(I^*)]}, ( K_{I})_{I \in \calI_{h-1}[\sfleft(I^*)]}\big)$ \label{step:schedule-recurse-left}
						\State $(\bbJ^\sfR, \sigma^\sfR) \gets\schedule \big(\sfright(I^*),  J_\anc^\sfR,    b^\anc|_{ J_\anc^\sfR}, e^\anc|_{ J_\anc^\sfR}, (J_{I})_{I \in \calI_{<h-1}[\sfright(I^*)]}, (K_{I})_{I \in \calI_{h-1}[\sfright(I^*)]}\big)$ \label{step:schedule-recurse-right}
						\If{$\sigma^\sfL, \sigma^\sfR\neq \bot$ and \big($\sigma=\bot$ or $|(\sigma^\sfL)^{-1}(\sfleft(I^*))|+|(\sigma^\sfR)^{-1}(\sfright(I^*))|  > |(\sigma)^{-1}(I^*)|$\big)} 
						\label{step:schedule-check-if-better}
							\State $J^\best_{I^*} \gets J_{I^*}$; $J^\best_I \gets J^\sfL_I, \forall I \in \calI[\sfleft(I^*)]$; $J^\best_I \gets J^\sfR_I, \forall I \in \calI[\sfright(I^*)]$ \label{step:schedule-update}
							\State let $\sigma^\best$ be obtained by merging $\sigma^\sfL$ and $\sigma^\sfR$ and discard $J_\discarded$	
							 \label{step:schedule-update-1}
						\EndIf
					\EndFor
				\EndFor
				\State \textbf{if} $\sigma \neq \bot$ \textbf{then} \Return $\Big(\bbJ^\best:=\big(J_\anc, \tilde b^\anc,  \tilde e^\anc, (J^\best_I)_{I \in \calI[I^*]}\big), \sigma^\best\Big)$ \textbf{else} \Return $(\bot, \bot)$
			\end{algorithmic}
		\end{algorithm}
		
		\begin{algorithm}
			\caption{The Main Algorithm}
			\label{alg:main}
			\begin{algorithmic}[1]
				\State $\sigma^\best \gets$ schedule discarding all jobs in $J^\circ$, $K_{[T]} \gets J^\circ$
				\For{every $(g_I \in \{\sfL, \sfR\}^{{p}})_{I \in \calI_{< h - 1}}$} \label{step:main-guess-g}
					\For{every $I \in \calI_{<h-1}$ from top to bottom}
						\State $(J_I, K_{\sfleft(I)}, K_{\sfright(I)}) \gets \pushdown(I, K_I, g_I)$
						\State \textbf{if} $|K_{\sfleft(I)}|> m|I|/2$ or  $|K_{\sfright(I)}| >m|I|/2$ \textbf{then continue} to the next iteration of Loop~\ref{step:main-guess-g} \label{step:main-check-size}
					\EndFor
					\State $(\bbJ,\sigma) \gets \schedule([T],\emptyset, (), (), (J_I)_{I \in \calI_{<h-1}}, (K_I)_{I \in \calI_{h-1}})$ \label{step:main-call-schedule}
					\State \textbf{if} $|\sigma^{-1}([T])| > |(\sigma^\best)^{-1}([T])|$ \textbf{then} $\bbJ^\best \gets \bbJ, \sigma^\best \gets \sigma$
				\EndFor
				\State \Return $(\bbJ^\best, \sigma^\best)$
			\end{algorithmic}
		\end{algorithm}
	
	\begin{figure*}
		\centering
		\includegraphics[width=0.85\textwidth]{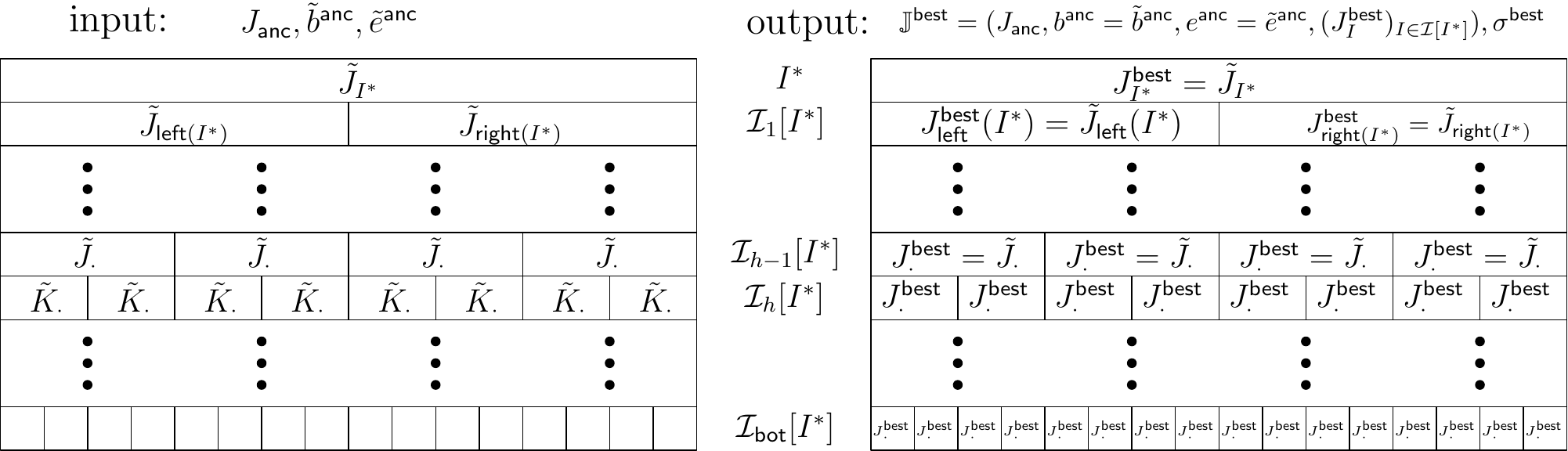}
		\caption{Input and Output of a Recursion of $\schedule$.} \label{fig:schedue}
	\end{figure*}
		
	In the algorithm $\schedule$, we are given an interval $I^* \in \calI$, a set ${J_\anc}$ of jobs, two vectors $b^\anc$ and $e^\anc$, a set $J_I$ of jobs for every $I\in \calI_{<h-1}[I^*]$ and a set $\tilde K_I$ of jobs for every $I \in \calI_{h-1}[I^*]$; notice that if $I^*$ is below level $L-h + 1$, then $\calI_{<h-1}[I^*] = \calI[I^*]$ and $\calI_{h-1}[I^*] = \emptyset$.  Our goal is to construct a dyadic system $\bbJ$ over $I^*$ and a virtually-valid schedule $\sigma$ for $\bbJ$.  $J_\anc, \tilde b^\anc$ and $\tilde e^\anc$ give the set of ancestor jobs in $\bbJ$ and their $b^\bbJ_j$ and $e^\bbJ_j$ values.  Each $\tilde J_I$ specifies the jobs assigned to $I$ in $\bbJ$, and each $\tilde K_I$ specifies the set of jobs assigned to sub-intervals of $I$. Therefore, we know exactly how jobs are assigned to the first $h-1$ levels of the tree rooted at $I^*$; for each job assigned to $\calI_{\geq h-1}[I^*]$, we only know the super interval in $\calI_{h-1}[I^*]$ of its owning interval. 
	
	The goal of the procedure is to construct the system $\bbJ^\best$ satisfying the requirements,  and construct a virtually-valid schedule $\sigma^\best$ for $\bbJ^\best$. See Figure~\ref{fig:schedue} for the illustration of the input and output for a recursion of $\schedule$.  In the algorithm, all variables in the procedure except the ones defined in Section~\ref{sec:prelim} are local. To avoid confusions in the analysis, we never change the input variables in the process; that is, they are read-only. 
	
	In Step~\ref{step:schedule-check-size} of Algorithm~\ref{alg:schedule}, we check if both the number of ancestors jobs and total number of top and bottom jobs are at most $m|I^*|$; if not, we return $(\bot, \bot)$ immediately, where $\bot$ stands for  ``not defined''. If $I^*\in \calI_\sfbot$, then $\bbJ^\best$ is decided.  We then find the best virtually-valid schedule $\sigma^\best$ for $\bbJ^\best$ by enumeration and return immediately (Step~\ref{step:schedule-terminating-start}). So we assume $I^* \in \calI_\sftop  \cup \calI_\sfmid$. 
	
	We initialize some variables in Step~\ref{step:schedule-init}. Then in Loop~\ref{step:schedule-enumerate-g}, we try to guess $g^*_I$ for all $I \in \calI_{h-1}[I^*] \setminus \calI_\sfbot$ (more precisely, we shall extend each $g^*_I$ so that it has length ${p}$ or $m|I|$ and guess the extension). Based on our guesses,  we expand the information about $\bbJ^*$ by one more level: for every $I \in \calI_{h-1}[I^*]$, in Step~\ref{step:schedule-call-pushdown}, we partition $K_I$ into $J_I, K_{\sfleft(I)}$ and $K_{\sfright(I)}$ by calling $\pushdown$; if $I$ is a bottom interval, we simply set $J_I = K_I$. The information we have now is sufficient to define the $b^\bbJ$ and $e^\bbJ$ values for jobs in $J_{I^*}$.   We then compute these values in Step~\ref{step:schedule-extend-b-e-up}: For every $j \in J_{I^*}$, $b^\anc_j$ and $e^\anc_j$ will be the same as the $b^\bbJ_j$ and $e^\bbJ_j$ values for the constructed partial dyadic system $\bbJ^\best$.  
	
	In Loop~\ref{step:schedule-partition-Janc}, we guess how jobs in ${J_\anc} \cup J_{I^*}$ are split into $\sfleft(I^*)$ and $\sfright(I^*)$. Since we only focus on virtually-valid schedules, we can then ignore the precedence  constraints incident to ${J_\anc} \cup J_{I^*}$. This is crucial in reducing the number of possibilities.  In the loop, we only keep one partition in every equivalence class defined as follows:
	\begin{remark}
		\label{remark:equivalence}
		In Step~\ref{step:schedule-partition-Janc}, we say two partitions $( J_\anc^\sfL,  J_\anc^\sfR, J_\discarded)$ and $( J_\anc'^\sfL,  J_\anc'^\sfR, J'_\discarded)$ of $ {J_\anc}\cup  J_{I^*}$ are equivalent if  $\{(b^\anc_j, e^\anc_j] : j \in J_\anc^\sfL\} \equiv_{\sfleft(I^*)} \{(b^\anc_j, e^\anc_j]: j \in J_\anc'^\sfL\}$ and $\{(b^\anc_j, e^\anc_j]: j \in J_\anc^\sfR\} \equiv_{\sfright(I^*)} \{(b^\anc_j, e^\anc_j]: j \in J_\anc'^\sfR\}$ where all sets are treated as multi-sets (thus, $| J_\anc^\sfL| = | J_\anc'^\sfL|$ and $| J_\anc^\sfR| = | J_\anc'^\sfR|$). 
	\end{remark}
	 Later we show that the number of equivalence classes is small.  Once we made the guess, we recursively and independently call $\schedule$ for $\sfleft(I^*)$ and $\sfright(I^*)$ (Step~\ref{step:schedule-recurse-left} and \ref{step:schedule-recurse-right}).   We assume the job sets in the two constructed systems $\bbJ^\sfL$ and $\bbJ^\sfR$ are automatically named $J^\sfL_I$'s and $J^\sfR_I$'s.   We maintain the best solution constructed so far (Step~\ref{step:schedule-update} and \ref{step:schedule-update-1}) and return it in the end.

	In the main algorithm (Algorithm~\ref{alg:main}), we enumerate all possible combinations of $(g_I)_{I \in \calI_{<h-1}}$ and use each of the combinations to obtain $(J_I)_{I \in \calI_{<h-1}}$ and $(K_I)_{I \in \calI_{h-1}}$. Then we use the information to call the algorithm $\schedule$ and return the best schedule constructed so far. 
	
\subsection{Analysis of Correctness}
	\label{subsec:algo-correctness}
	We now analyze the correctness of the algorithm. The following claim gives some simple properties about the input to each recursion of $\schedule$.
		\begin{claim}
			\label{claim:schedule-properties}
			The input parameters of $\schedule$ for some $I^* \in \calI$ satisfy the following.
			\begin{enumerate}[label=(\ref{claim:schedule-properties}\alph*)]
				\item All sets in $\{{J_\anc}\} \cup \{\tilde J_I\}_{I \in \calI_{<h-1}[I^*]}\cup\{\tilde K_I\}_{I \in \calI_{h-1}[I^*]}$ are mutually disjoint. \label{property:schedule-input-partition}
				\item For every $I \in \calI_{<h-1}[I^*] \cap \calI_\sftop[I^*]$, we have $\Delta(\tilde J_I) \leq \delta|\tilde J_I| + \delta'|I|$. \label{property:schedule-input-chain-length}
				\item For every $I \in \calI_{<h-1}[I^*] \cap \calI_\sfmid[I^*]$, we have $\tilde J_I = \emptyset$. \label{property:schedule-input-middle-empty}
				\item The sequence $(\tilde J_{I} \text{ or }  \tilde K_{I})_{I \in \calI_{\leq h-1}[I^*]}$ according to the order $\inorderless$ respects the precedence constraints, where $(\tilde J_{I} \text{ or }  \tilde K_{I})$ indicates either $\tilde J_I$ or $\tilde K_I$ depending on which one is given in the input. \label{property:schedule-input-precedence}
				\item For every $j \in {J_\anc}$, we have that $\tilde b^\anc_j $ and $\tilde e^\anc_j$ are integer multiplies of $\max\{2^{-h+1}|I^*|, 2^h\}$; moreover, for every $j \in {J_\anc}$, we have either $\tilde b^\anc_j \leq \sfbegin(I^*)$ or $\tilde e^\anc_j \geq \sfend(I^*)$. \label{property:schedule-input-b-e-multiplies}
			\end{enumerate}
		\end{claim}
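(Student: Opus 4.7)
The plan is to prove the claim by induction on the depth of the $\schedule$ recursion, with the base case being the initial invocation made by Algorithm~\ref{alg:main}. In that base case, ${J_\anc} = \emptyset$, so property~\ref{property:schedule-input-b-e-multiplies} is vacuous, and properties \ref{property:schedule-input-partition}--\ref{property:schedule-input-precedence} follow by unfolding the sequence of $\pushdown$ calls from top to bottom in Algorithm~\ref{alg:main} and invoking Observation~\ref{obs:pushdown}: the partition property \ref{property:pushdown-partition} gives disjointness, \ref{property:pushdown-chain-length} and \ref{property:pushdown-middle-empty} give the chain-length and middle-empty requirements, and \ref{property:pushdown-precedence} gives the precedence-consistent ordering (using the fact that the input $K_{[T]} = J^\circ$ trivially respects all precedence constraints).

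For the inductive step, suppose the claim holds for the current call on $I^*$; I will verify it for the recursive call on $\sfleft(I^*)$ in Step~\ref{step:schedule-recurse-left} (the right call in Step~\ref{step:schedule-recurse-right} being symmetric). Property~\ref{property:schedule-input-partition} follows from the induction hypothesis together with Observation~\ref{obs:pushdown}\ref{property:pushdown-partition}, which ensures each $\pushdown(I, K_I, g_I)$ invocation partitions $K_I$ into $(J_I, K_{\sfleft(I)}, K_{\sfright(I)})$; combined with $J_\anc^\sfL \subseteq {J_\anc} \cup J_{I^*}$ and the disjointness of $J_{I^*}$ from the other newly produced sets, all sets passed to the left recursion remain mutually disjoint. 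Properties~\ref{property:schedule-input-chain-length} and \ref{property:schedule-input-middle-empty} for the freshly populated level (absolute level $\mathrm{level}(I^*)+h-1$) come directly from Observation~\ref{obs:pushdown}\ref{property:pushdown-chain-length} and~\ref{property:pushdown-middle-empty}, and at shallower levels they are inherited. Property~\ref{property:schedule-input-precedence} combines the inductive ordering on $(\tilde J_I \text{ or } \tilde K_I)_{I \in \calI_{\leq h-1}[I^*]}$ with Observation~\ref{obs:pushdown}\ref{property:pushdown-precedence}, which guarantees each refinement $(K_{\sfleft(I)}, J_I, K_{\sfright(I)})$ respects precedence; restricting the refined sequence to intervals inside $\sfleft(I^*)$ preserves the precedence-consistent order.

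Property~\ref{property:schedule-input-b-e-multiplies} splits by origin of the ancestor job. For $j \in {J_\anc} \cap J_\anc^\sfL$, the values $\tilde b^\anc_j$ and $\tilde e^\anc_j$ are unchanged and, by induction, are integer multiples of $\max\{2^{-h+1}|I^*|, 2^h\}$; a short case analysis depending on whether $|\sfleft(I^*)| \geq 2^{2h-1}$ (i.e.\ whether $2^{-h+1}|\sfleft(I^*)|$ or $2^h$ dominates) shows $\max\{2^{-h+1}|I^*|, 2^h\}$ is itself a multiple of $\max\{2^{-h+1}|\sfleft(I^*)|, 2^h\}$, so the multiplicity requirement transfers. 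The disjunction $\tilde b^\anc_j \leq \sfbegin(I^*)$ or $\tilde e^\anc_j \geq \sfend(I^*)$ immediately yields the analogous disjunction for $\sfleft(I^*)$, since $\sfbegin(\sfleft(I^*)) = \sfbegin(I^*)$ and $\sfend(\sfleft(I^*)) = \sfcenter(I^*) \leq \sfend(I^*)$. For $j \in J_{I^*} \cap J_\anc^\sfL$, the values $b^\anc_j, e^\anc_j$ set in Step~\ref{step:schedule-extend-b-e-up} are multiples of $\max\{2^{-h}|I^*|, 2^h\} = \max\{2^{-h+1}|\sfleft(I^*)|, 2^h\}$, which matches the required granularity exactly; moreover $e^\anc_j \geq \sfcenter(I^*) = \sfend(\sfleft(I^*))$, so the disjunction holds.

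The main delicate point is property~\ref{property:schedule-input-b-e-multiplies}: the quantity $\max\{2^{-h+1}|I^*|, 2^h\}$ is not simply halved when $|I^*|$ halves, because of the $2^h$ floor, so one must handle separately the regime where $|I^*|$ is large enough that $2^{-h+1}|I^*|$ dominates and the regime where the $2^h$ term is active. The remainder of the argument is essentially bookkeeping over the partitions produced by $\pushdown$ and the relationship between the endpoints of $I^*, \sfleft(I^*), \sfright(I^*)$.
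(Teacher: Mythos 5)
Your proposal is correct and follows essentially the same route as the paper's proof: properties (a)--(d) are traced back to Properties~\ref{property:pushdown-partition}--\ref{property:pushdown-precedence} of $\pushdown$, and property~\ref{property:schedule-input-b-e-multiplies} is traced back to the fact that $\tilde b^\anc_j, \tilde e^\anc_j$ are set in Step~\ref{step:schedule-extend-b-e-up} of a strict ancestor recursion at granularity $\max\{2^{-h}|\hat I|,2^h\}$ with $b^\anc_j \le \sfcenter(\hat I) \le e^\anc_j$. The paper states this non-inductively by jumping directly to the defining ancestor $\hat I \supsetneq I^*$, whereas you propagate the invariant one level at a time; the content, including the case analysis around the $2^h$ floor, is the same.
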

		\begin{proof}
			Properties~\ref{property:schedule-input-partition}, \ref{property:schedule-input-chain-length}, \ref{property:schedule-input-middle-empty} 
			and~\ref{property:schedule-input-precedence} hold due to Properties \ref{property:pushdown-partition}, \ref{property:pushdown-chain-length}, \ref{property:pushdown-middle-empty}
			and~\ref{property:pushdown-precedence} for $\pushdown$. 
			For Property~\ref{property:schedule-input-b-e-multiplies}, note that the $\tilde b^\anc_j$ and $\tilde e^\anc_j$ values for each $j \in {J_\anc}$ must be computed in ancestor recursions of $\schedule$, for some interval $\hat I \in \calI, \hat I \supsetneq I^*$. Then $\tilde b^\anc_j$ and $\tilde e^\anc_j$ are both integer multiplies of $\max\{2^{-h}|\hat I|, 2^h\}$, which are integer multiplies of $\max\{2^{-h+1}|I^*|,2^h\}$. Also, if $I^* \subseteq \sfleft(\hat I)$, then $\tilde e^\anc_j \geq \sfcenter(\hat I) \geq \sfend(I^*)$; if $I^* \subseteq \sfright(\hat I)$, then $\tilde b^\anc_j \leq \sfcenter(\hat I) \leq \sfbegin(I^*)$.
		\end{proof}
	
	The following lemma shows the validity of the output for each recursion of $\schedule$.	 
		\begin{lemma}
			\label{lemma:schedule-valid}
			Suppose some recursion of $\schedule$ takes $\big(I^*,  {J_\anc},    \tilde b^\anc,  \tilde e^\anc, ( \tilde J_{I})_{I \in \calI_{<h-1}[I^*]}, (\tilde K_{I})_{I \in \calI_{h-1}[I^*]}\big)$ as input and returns $\big(\bbJ^\best,\sigma^\best\big) \neq (\bot, \bot)$.  Then $\bbJ^\best = (J_\anc, \tilde b^\anc, \tilde e^\anc, (J^\best_I)_{I \in \calI[I^*]})$ is a {partial dyadic} system over $I^*$ and $\sigma^\best$ is a virtually-valid schedule for $\bbJ^\best$.  Moreover, $J^\best_I = \tilde J_I$ for every $I \in \calI_{<h-1}[I^*]$, and $J^\best_{\subseteq I} = \tilde K_I$ for every $I \in \calI_{h-1}[I^*]$.
		\end{lemma}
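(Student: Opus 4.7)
The proof is by induction on the depth of $I^*$ in $\bfT$, starting from the bottom.  The base case $I^* \in \calI_\sfbot$ is immediate: the algorithm returns at Step~\ref{step:schedule-terminating-end}; since $\calI_\sftop[I^*] = \calI_\sfmid[I^*] = \emptyset$, Properties~\ref{property:dyadic-chain-length}--\ref{property:dyadic-precedence} of Definition~\ref{def:dyadic-system} hold vacuously, Property~\ref{property:dyadic-disjoint} follows from Claim~\ref{claim:schedule-properties}\ref{property:schedule-input-partition}, and $\sigma^\best$ is virtually-valid by exhaustive enumeration.  For the inductive step, assume $I^* \in \calI_\sftop \cup \calI_\sfmid$ and the lemma holds for both children.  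Since $(\bbJ^\best, \sigma^\best) \neq (\bot, \bot)$, some iteration of Loops~\ref{step:schedule-enumerate-g} and~\ref{step:schedule-partition-Janc} succeeded with recursive outputs $(\bbJ^\sfL, \sigma^\sfL)$ and $(\bbJ^\sfR, \sigma^\sfR)$; the induction hypothesis applies to both.

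Verifying the five partial-dyadic-system properties of $\bbJ^\best$ is a case analysis combining three ingredients: the input properties of Claim~\ref{claim:schedule-properties}, the $\pushdown$ properties of Observation~\ref{obs:pushdown}, and the inductive claims on $\bbJ^\sfL,\bbJ^\sfR$.  Disjointness combines the three directly.  Property~\ref{property:dyadic-chain-length} is checked separately for $I \in \calI_{<h-1}[I^*] \cap \calI_\sftop$ (via~\ref{property:schedule-input-chain-length}), $I \in \calI_{h-1}[I^*] \cap \calI_\sftop$ (via~\ref{property:pushdown-chain-length}), and $I$ strictly inside one of the children (by induction); Property~\ref{property:dyadic-middle-empty} is analogous.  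Property~\ref{property:dyadic-precedence} is routine except in the cross-child case (one interval in $\calI[\sfleft(I^*)]$, another in $\calI[\sfright(I^*)]$), which reduces to the input precedence between $J_{\subseteq \sfleft(I^*)} \cup K_{\subseteq \sfleft(I^*)}$ and $J_{\subseteq \sfright(I^*)} \cup K_{\subseteq \sfright(I^*)}$; by the inductive version of the last statement of the lemma these equal $J^\best_{\subseteq \sfleft(I^*)}$ and $J^\best_{\subseteq \sfright(I^*)}$.

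The last statement of the lemma propagates inductively: $J^\best_{I^*} = J_{I^*} = \tilde J_{I^*}$ directly; for $I \in \calI_{<h-1}[I^*]$ strictly inside a child, $J^\best_I$ equals $J^\sfL_I$ (or $J^\sfR_I$), which by induction equals the $\tilde J_I$ passed into the child; for $I \in \calI_{h-1}[I^*]$, $J^\best_{\subseteq I}$ equals the $K_I$ produced by $\pushdown$ at Step~\ref{step:schedule-call-pushdown}, which by Observation~\ref{obs:pushdown}\ref{property:pushdown-partition} totals the original $\tilde K_I$.  To check that $\sigma^\best$ is virtually-valid for $\bbJ^\best$, the capacity, bottom-job precedence and interval constraints, ancestor-window constraints, and top-job window constraints for jobs in $J^\best_{\subseteq \sfleft(I^*)} \cup J^\best_{\subseteq \sfright(I^*)}$ all follow from the induction hypothesis on $\sigma^\sfL, \sigma^\sfR$ together with the disjointness of $\sfleft(I^*), \sfright(I^*)$ as time intervals, noting that the ancestor windows $(\tilde b^\anc_j, \tilde e^\anc_j]$ are copied verbatim into the children.

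The only genuinely technical piece --- and the main obstacle --- is the window constraints for the top jobs $j \in J^\best_{I^*}$ introduced at this level.  Each such $j$ is either discarded (if $j \in J_\discarded$) or routed to one of the child calls as an ancestor, with $b^\anc_j, e^\anc_j$ computed in Step~\ref{step:schedule-extend-b-e-up}; the child then enforces $\sigma^\best_j \in (b^\anc_j, e^\anc_j]$.  I must show that these computed values coincide with $b^{\bbJ^\best}_j, e^{\bbJ^\best}_j$ as prescribed by Definition~\ref{def:b-e}.  This reduces to the set identity $J^\best_{\subseteq W} = J_{\subseteq W} \cup K_{\subseteq W}$ for every sub-interval $W$ of $\sfleft(I^*)$ or $\sfright(I^*)$ whose endpoints are integer multiples of $\max\{2^{-h}|I^*|,2^h\}$.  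The endpoint alignment of $W$ at the granularity of $\calI_h[I^*]$-intervals guarantees that each interval in $\calI_\ell[I^*]$ with $\ell \leq h$ is either wholly inside or wholly outside $W$, so the $J_I$'s (at levels $<h$ below $I^*$) and $K_I$'s (at level $h$) in the RHS enumerate cleanly; by the inductive last statement each level-$h$ $K_I$ equals $J^\sfL_{\subseteq I}$ or $J^\sfR_{\subseteq I}$, and the contributions match the LHS term by term.  Carrying out this alignment argument across the $\pushdown$ boundary at level $h-1$ is what requires care.
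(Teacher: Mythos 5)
Your proof is correct and follows essentially the same route as the paper's: bottom-up induction, viewing $\bbJ^\best$ as the merge of $\bbJ^\sfL$ and $\bbJ^\sfR$ (with $J_\discarded$ added to the ancestors and $\tilde J_{I^*}$ promoted to $J^\best_{I^*}$), and verifying each property from the input properties, the $\pushdown$ guarantees, and the induction hypothesis. You actually spell out in more detail the one step the paper asserts tersely — that the $b^\anc_j, e^\anc_j$ computed in Step~\ref{step:schedule-extend-b-e-up} coincide with $b^{\bbJ^\best}_j, e^{\bbJ^\best}_j$ via the alignment identity $J^\best_{\subseteq W} = J_{\subseteq W} \cup K_{\subseteq W}$ — and your argument there is sound (the alignment is only needed at relative level $h$, not at all levels $\ell \le h$ as you phrase it, but this does not affect the argument).
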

		\begin{proof}
			Notice that $J^\best_I = \tilde J_I$ for every $I \in \calI_{<h-1}[I^*]$ trivially holds. We prove the other two statements by induction from bottom to top. Consider the case $I^* \in \calI_\sfbot$.  Then $\bbJ^\best$ is trivially a partial dyadic system and $\sigma^\best$ is a virtually-valid schedule for $\bbJ^\best$ (notice that the schedule that discards all jobs is always virtually-valid). Thus the lemma holds. 
			
			Now we assume $I^* \in \calI_\sftop \cup \calI_\sfmid$. Consider the last iteration of the two nested loops in which Step \ref{step:schedule-update} and \ref{step:schedule-update-1} are executed. Then the final $\bbJ^\best$ and $\sigma^\best$ are constructed in this iteration.  All the notations we use are w.r.t this the moment at the end of the iteration. The induction hypothesis for the sub-recursions of $\schedule$ made in Step~\ref{step:schedule-recurse-left} and~\ref{step:schedule-recurse-right} says
			\begin{itemize}
				\item $\bbJ^\sfL=\big(J_\anc^\sfL,    b^\anc|_{J_\anc^\sfL},  e^\anc|_{J_\anc^\sfL}, (J^\sfL_I)_{I \in \calI[\sfleft(I^*)]}\big)$ is a {partial dyadic} system over $\sfleft(I^*)$ and $\sigma^\sfL$ is a valid schedule for $\bbJ^\sfL$. Moreover $J^\sfL_{\subseteq I} =  K_{I}$ for every $I \in \calI_{h-1}[\sfleft(I^*)]$. 
				\item $\bbJ^\sfR =\big(J_\anc^\sfR,    b^\anc|_{J_\anc^\sfL},  e^\anc|_{J_\anc^\sfR}, (J^\sfR_I)_{I \in \calI[\sfright(I^*)]}\big)$ is a {partial dyadic} system over $\sfright(I^*)$ and $\sigma^\sfR$ is a valid schedule for $\bbJ^\sfR$. Moreover $J^\sfR_{\subseteq I} =  K_{I}$ for every $I \in \calI_{h-1}[\sfright(I^*)]$. 
			\end{itemize}
			We first  prove $J^\best_{\subseteq I} = \tilde K_I$ for every $I \in \calI_{h-1}[I^*]$. First consider the case that $I^* \in \calI_{\leq L-h}$.  Focusing on some $I \in \calI_{h-2}[\sfleft(I^*)] \subseteq \calI_{h-1}[I^*]$, and $\sfleft(I), \sfright(I) \in \calI_{h-1}[\sfleft(I^*)]$, we have
			\begin{align*}
				J^\sfL_{\subseteq I} &=  J^\sfL_I \ \cup J^\sfL_{\subseteq \sfleft[I]} \cup J^\sfL_{\subseteq \sfright[I]}
				=  J_I \cup  K_{\sfleft(I)} \cup  K_{\sfright(I)} =  K_I = \tilde K_I
			\end{align*}
			where the second equality used the induction hypothesis and the third equality used Property~\ref{property:pushdown-partition} for $\pushdown$.  Notice that $J^\sfL_I= J_I$ since $J_I$ is passed to the sub-recursion for constructing $\bbJ^\sfL$; and we copied $\tilde K_I$ to $K_I$. Similarly,  for every $I \in \calI_{h-2}[\sfright(I^*)] \subseteq \calI_{h-1}[I^*]$, we have $J^\sfR_{\subseteq I} =  \tilde K_I$.  Therefore, for every $I \in \calI_{h-1}[I^*]$ we have $J^\best_{\subseteq I} =  \tilde K_I$.
			
			Now consider the case $I^* \in \calI_{L-h + 1}$.  Then $\calI_{h-1}[\sfleft(I^*)] = \calI_{h-1}[\sfright(I^*)]=\emptyset$. For every $I\subseteq \calI_{h-1}[I^*]\subseteq \calI_\sfbot$, we have $J^\best_{\subseteq I} = J^\best_I = \tilde K_I$ by step~\ref{step:schedule-call-pushdown}.  Finally, if $I^* \in \calI_{>L-h + 1}$, then $\calI_{h-1}[I^*] = \emptyset$ and there is nothing to prove. 
			
			Then we prove that $\bbJ^\best$ is a partial dyadic system and $\sigma^\best$ is a virtually-valid schedule for $\bbJ^\best$. Notice that $\bbJ^\best$ can be viewed as obtained by merging $\bbJ^\sfL$ and $\bbJ^\sfR$, adding $J_\discarded$ to the set of ancestor jobs, and moving $\tilde J_{I^*}$ from ancestor jobs to $J^\best_{I^*}$.  $\sigma^\best$ is obtained by merging $\sigma^\sfL$ and $\sigma^\sfR$ and discarding all jobs in $J_\discarded$.   So, $\bbJ $ is indeed a {partial dyadic} system since $\bbJ^\sfL$ and $\bbJ^\sfR$ are both {partial dyadic} systems: Property~\ref{property:dyadic-disjoint} for  $\bbJ^\best$
			is implied by the same property for $\bbJ^\sfL$ and $\bbJ^\sfR$. Property~\ref{property:dyadic-chain-length} and \ref{property:dyadic-middle-empty} are implied by the properties for $\bbJ^\sfL$ and $\bbJ^\sfR$ and that  $\Delta(\tilde J_{I^*}) \leq \delta|\tilde J_{I^*}| + \delta'|I^*|$ if $I^* \in \calI_\sftop$ and $\tilde J_{I^*} = \emptyset$ if $I^* \in \calI_\sfmid$. Property~\ref{property:dyadic-precedence} is implied by the property for $\bbJ^\sfL$ and $\bbJ^\sfR$ and that the sequence $\tilde J_{\subseteq \sfleft(I^*)} \cup \tilde K_{\subseteq \sfleft(I^*)}, \tilde J_{I^*}, \tilde J_{\subseteq \sfright(I^*)} \cup \tilde K_{\subseteq \sfright(I^*)}$ respects the precedence constraints (implied by Property~\ref{property:schedule-input-precedence}).
			
			Notice that capacity constraints, precedence and interval constraints for bottom jobs for $\sigma^\best$ are implied by the same properties for $\sigma^\sfL$ and $\sigma^\sfR$.  For every top job $j$ in $\bbJ^\sfL$, we have $\tilde b^{\bbJ}_j = \tilde b^{\bbJ^\sfL}_j$, and for every top job $j$ in $\bbJ^\sfR$, we have $\tilde b^{\bbJ}_j = \tilde b^{\bbJ^\sfR}_j$.  Jobs in $J_\anc \cap J_\anc^\sfL$ and $J_\anc \cap J_\anc^\sfR$ will have consistent $\tilde b^\anc$ and $\tilde e^\anc$ values in the three {partial dyadic} systems.  For every $j \in  \tilde J_{I^*}$ has $\tilde b^{\bbJ}_j = \tilde b^\anc_j$ and $\tilde e^{\bbJ}_j = \tilde e^\anc_j$. This holds by our definitions of $ \tilde b^\anc_j$ and $ \tilde e^\anc_j$ in Step~\ref{step:schedule-extend-b-e-up}.  Thus, the virtual-validity of $\sigma^\best$ is implied by the virtual-validity of $\sigma^\sfL$ and $\sigma^\sfR$. 
		\end{proof}
		
		Now it remains to show that the number of discarded jobs in the returned virtually-valid schedule is small. This is guaranteed by the existence of $\bbJ^*$  and $\sigma'^*$.  Recall that $b^*_j = b^{\bbJ^*}_j$ and $e^*_j = b^{\bbJ^*}_j$ for every top job $j$ in $\bbJ^*$. The following lemma says that if our guesses about $\bbJ^*$ and $\sigma'^*$ are correct, then the number of discarded jobs in the returned schedule is small.
		\begin{lemma}
			\label{lemma:schedule-good}
			Suppose at the beginning of some recursion of $\schedule$, we have $\{(\tilde b^\anc_j, \tilde e^\anc_j] : j \in  {J_\anc}\} \equiv_{I^*} \{(b^*_j, e^*_j]: j \in \sigma'^{*-1}(I^*) \cap J^*_{\supsetneq I^*}\}$, $\tilde J_I = J^*_I$ for every $I\in \calI_{<h-1}[I^*]$ and $\tilde K_I = K^*_I =  J^*_{\subseteq I}$ for every $I \in \calI_{h-1}[I^*]$.  Then the returned schedule $\sigma^\best$ has at least $|\sigma'^{*-1}(I^*)|$ jobs scheduled.
		\end{lemma}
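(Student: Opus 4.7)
The plan is to prove the lemma by induction on the height of $I^*$ in $\bfT$, with the induction hypothesis asserting the conclusion for every proper descendant of $I^*$. For the base case $I^* \in \calI_\sfbot$, we have $\bbJ^\best = (\tilde J_\anc, \tilde b^\anc, \tilde e^\anc, (\tilde J_{I^*}))$ with $\tilde J_{I^*} = J^*_{I^*}$, and the algorithm returns the \emph{best} virtually-valid schedule for it by enumeration; hence it suffices to exhibit one virtually-valid schedule $\hat\sigma$ scheduling at least $|\sigma'^{*-1}(I^*)|$ jobs. The window-equivalence hypothesis supplies a bijection $\phi\colon \tilde J_\anc \to \sigma'^{*-1}(I^*) \cap J^*_{\supsetneq I^*}$ with $(\tilde b^\anc_j, \tilde e^\anc_j] \cap I^* = (b^*_{\phi(j)}, e^*_{\phi(j)}] \cap I^*$; I would set $\hat\sigma_j := \sigma'^*_{\phi(j)}$ for $j \in \tilde J_\anc$ and $\hat\sigma_j := \sigma'^*_j$ for $j \in \tilde J_{I^*}$. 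Capacity holds because $\sigma'^{*-1}(t) \subseteq J^*_{I^*} \cup J^*_{\supsetneq I^*}$ for $t \in I^*$ (as $I^*$ has no descendants), yielding $|\hat\sigma^{-1}(t)| = |\sigma'^{*-1}(t)| \leq m$; the window, interval, and bottom-precedence constraints transfer directly from $\sigma'^*$.

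For the inductive step with $I^* \in \calI_\sftop \cup \calI_\sfmid$, the plan is to exhibit a specific combination of guesses in Loops~\ref{step:schedule-enumerate-g} and~\ref{step:schedule-partition-Janc} that makes both sub-recursions satisfy their hypotheses, and then invoke induction. For Loop~\ref{step:schedule-enumerate-g} pick each $g_I$ to be an arbitrary extension of $g^*_I$; by Property~\ref{property:pushdown-returns-star} of $\pushdown$ this produces $J_I = J^*_I$ for every $I \in \calI_{h-1}[I^*]$ and $K_I = K^*_I$ for every $I \in \calI_h[I^*]$, which together with the input identities yields $J_I = J^*_I$ for all $I \in \calI_{\leq h-1}[I^*]$ and $K_I = K^*_I$ for all $I \in \calI_{h-1}[I^*] \cup \calI_h[I^*]$. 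For Loop~\ref{step:schedule-partition-Janc} define the partition by tracking $\sigma'^*$: place into $J_\anc^\sfL$ both the $\phi$-preimages of $\sigma'^{*-1}(\sfleft(I^*)) \cap J^*_{\supsetneq I^*}$ and the jobs of $J_{I^*} \cap \sigma'^{*-1}(\sfleft(I^*))$, symmetrically for $J_\anc^\sfR$, and send $J_{I^*} \cap \sigma'^{*-1}(\discarded)$ to $J_\discarded$. The required hypothesis $\{(b^\anc_j, e^\anc_j] : j \in J_\anc^\sfL\} \equiv_{\sfleft(I^*)} \{(b^*_j, e^*_j] : j \in \sigma'^{*-1}(\sfleft(I^*)) \cap J^*_{\supsetneq \sfleft(I^*)}\}$ then splits in two: for $\phi$-preimages it follows by restricting the parent's multiset equivalence from $I^*$ to $\sfleft(I^*)$, while for newly-promoted jobs $j \in J_{I^*}$ it reduces to $b^\anc_j = b^*_j$ and $e^\anc_j = e^*_j$.

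The main obstacle is establishing those window identities, namely that the algorithm's formula in Step~\ref{step:schedule-extend-b-e-up}, which sees only the partial information $J_I$ for $I \in \calI_{\leq h-1}[I^*]$ and $K_I$ for $I \in \calI_{h-1}[I^*] \cup \calI_h[I^*]$, reproduces Definition~\ref{def:b-e} applied to the full $\bbJ^*$. I would show that for every candidate boundary $b$ the union $J_{\subseteq(b,\sfcenter(I^*)]} \cup K_{\subseteq(b,\sfcenter(I^*)]}$ equals $J^*_{\subseteq(b,\sfcenter(I^*)]}$ under correct guesses, so the two minimisation conditions coincide. The key is that the step size $\max\{2^{-h}|I^*|,2^h\}$ is deliberately chosen so that $b$ aligns with interval boundaries at the finest level where the algorithm has \emph{complete} information: when $|I^*| \geq 2^{2h}$ the step is $2^{-h}|I^*|$, matching level-$h$ boundaries (and every deeper job is then collected through the $K^*_I = J^*_{\subseteq I}$ tagged to a level-$h$ interval lying in $(b,\sfcenter(I^*)]$); when $|I^*| < 2^{2h}$ the step is $2^h$, which divides $2^{-(h-1)}|I^*|$ and so matches level-$(h-1)$ boundaries. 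Once the window identities are secured, the inductive hypothesis yields $|(\sigma^\sfL)^{-1}(\sfleft(I^*))| \geq |\sigma'^{*-1}(\sfleft(I^*))|$ and $|(\sigma^\sfR)^{-1}(\sfright(I^*))| \geq |\sigma'^{*-1}(\sfright(I^*))|$, and the comparison in Step~\ref{step:schedule-check-if-better} retains a merged $\sigma^\best$ with at least $|\sigma'^{*-1}(\sfleft(I^*))| + |\sigma'^{*-1}(\sfright(I^*))| = |\sigma'^{*-1}(I^*)|$ scheduled jobs, completing the induction.
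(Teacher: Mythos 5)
Your proof is correct and follows essentially the same route as the paper's: bottom-to-top induction, with the base case transplanting $\sigma'^*$ via the equivalence bijection, and the inductive step selecting the iteration where each $g_I$ extends $g^*_I$, invoking Property~\ref{property:pushdown-returns-star}, verifying $b^\anc_j = b^*_j$ and $e^\anc_j = e^*_j$, and building the partition of $J_\anc \cup J_{I^*}$ by tracking where $\sigma'^*$ places each job. You in fact spell out more carefully than the paper why the window boundaries (multiples of $\max\{2^{-h}|I^*|,2^h\}$) align with interval boundaries at the finest level where the algorithm has complete information, which the paper dismisses with "by the definitions"; the only small omission is the paper's opening observation that Step~\ref{step:schedule-check-size} does not trigger, since $|J_\anc|$ and $|\tilde J_{\subseteq I^*}|+|\tilde K_{\subseteq I^*}|$ are bounded by $m|I^*|$ under the hypotheses.
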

		\begin{proof}
			Notice in Step~\ref{step:schedule-check-size} of $\schedule$, we will not return $(\bot, \bot)$ immediately since $|{J_\anc}| = |\sigma'^{*-1}(I^*) \cap J^*_{\supsetneq I^*}|\leq |\sigma'^{*-1}(I^*)| \leq m|I^*|$ and $|J_{\subseteq I^*}| + |K_{\subseteq I^*}| = |K^*_{I^*}| \leq m|I^*|$. We prove the lemma by induction from bottom to top.  First consider the case $I^* \in \calI_\sfbot$. $\sigma'^{*-1}(I^*)$ was scheduled in $I^*$ in the schedule $\sigma'^*$.  Then the schedule $\sigma$ obtained from the $\sigma'^*$ restricted on $I^*$, with $\sigma'^{*-1}(I^*) \cap J^*_{\supsetneq I^*}$ replaced by jobs in ${J_\anc}$ using the equivalence mapping,  is a candidate schedule.  So, we now assume that $I^* \in \calI_\sftop \cup \calI_\sfmid$.
			
			Consider the iteration of Loop~\ref{step:schedule-enumerate-g} in which $g^*_I$ is a prefix of $g_I$ for every $I \in \calI_{h-1}[I^*] \setminus \calI_\sfbot$; such an iteration exists since the length of $g^*_I$ is at most ${p}$ of $I\in \calI_\sftop$ and at most $|K^*_I| \leq m|I|$ if $I \in \calI_\sfmid$.  By Property~\ref{property:pushdown-returns-star} for the procedure $\pushdown$, we have $ J_I = J^*_I$ for every $I \in \calI_{h-1}[I^*]$ and $ K_I = J^*_{\subseteq I}$ for every $I \in \calI_h[I^*]$ after Step~\ref{step:schedule-call-pushdown}.  Then after Step~\ref{step:schedule-extend-b-e-up}, we have that for every $j \in  J_{I^*} = J^*_{I^*}$, $b^*_j = b^\anc_j$ and $e^*_j = e^\anc_j$; this holds by the definitions of $b^*_j, e^*_j$, the definitions of $b^\anc_j$ and $e^\anc_j$ in the step,  and the conditions of the lemma. 
			
			Now we focus on Loop~\ref{step:schedule-partition-Janc}. By the conditions of the lemma and that $b^*_j = b^\anc_j, e^*_j = e^\anc_j$ for every $j \in J$, we have 
			\begin{align*}
				&\qquad \big\{(b^\anc_j, e^\anc_j]: j \in  {J_\anc} \cup J^*_{I^*}\}\equiv_{I^*} \{(b^*_j, e^*_j]: j \in \sigma'^{*-1}(I^*) \cap J^*_{\supsetneq I^*} \cup J^*_{I^*}\big\}\\
				&\equiv_{I^*} \Big\{(b^*_j, e^*_j]: j \in \sigma'^{*-1}(I^*) \setminus \big(K^*_{\sfleft(I^*)} \cup K^*_{\sfright(I^*)}\big) \cup \big(\sigma'^{*-1}(\discarded) \cap J^*_{I^*}\big)\Big\}\\
				&\equiv_{I^*} \Big\{(b^*_j, e^*_j]: j \in \left[\sigma'^{*-1}(\sfleft(I^*)) \setminus K^*_{\sfleft(I^*)}\right] \bigcup \left[ \sigma'^{*-1}(\sfright(I^*)) \setminus K^*_{\sfright(I^*)}\right] \cup \left[\big(\sigma'^{*-1}(\discarded) \cap J^*_{I^*}\big)\right]\Big\}.
			\end{align*}
			Then there is a partition $(J_\anc^\sfL, J_\anc^\sfR, J_\discarded)$ of ${J_\anc} \cup J^*_{I^*}$ such that
			\begin{enumerate}[label=(\roman*)]
				\item $\big\{(b^\anc_j, e^\anc_j]: j \in  J_\anc^\sfL\big\} \equiv_{I^*} \big\{(b^*_j, e^*_j]: j \in \sigma'^{*-1}(\sfleft(I^*)) \setminus K^*_{\sfleft(I^*)}\big\}$,
				\item $\big\{(b^\anc_j, e^\anc_j]: j \in  J_\anc^\sfR\big\} \equiv_{I^*} \big\{(b^*_j, e^*_j]: j \in \sigma'^{*-1}(\sfright(I^*)) \setminus K^*_{\sfright(I^*)}\big\}$, and 
				\item $\big\{(b^\anc_j, e^\anc_j]: j \in  J_\discarded\big\} \equiv_{I^*} \big\{(b^*_j, e^*_j]: j \in \sigma'^{*-1}(\discarded) \cap J^*_{I^*}\big\}$.
			\end{enumerate}
			Notice that (i) and (ii) hold with $\equiv_{I^*}$ replaced by $\equiv_{\sfleft(I^*)}$ and $\equiv_{\sfright(I^*)}$ respectively. By the definition of equivalence of partitions in Remark~\ref{remark:equivalence},  in Loop~\ref{step:schedule-partition-Janc} there will be an iteration where the two conditions hold. Therefore, in the iteration, the conditions of the lemma for the two sub-recursions of $\schedule$ hold. So we have that $\sigma^\sfL$ has at least $|\sigma'^{*-1}(\sfleft(I^*))|$ jobs scheduled, and $\sigma^\sfR$ has at least $|\sigma'^{*-1}(\sfright(I^*))|$ jobs scheduled. Thus, at the end of the iteration, we have that $\sigma^\best$ has least $|\sigma'^{*-1}(\sfleft(I^*))| + |\sigma'^{*-1}(\sfright(I^*))| = |\sigma'^{*-1}(I^*)|$ jobs scheduled.	 In the end, 	$\sigma^\best$ schedules at least $|\sigma'^{*-1}(I^*)|$ jobs.
		\end{proof}
		Now in the main algorithm, consider the iteration of Loop~\ref{step:main-guess-g} in which $g^*_I$ is a prefix of $g_I$ for every $I \in \calI_{<h-1}$. Then we shall have $J_I = J^*_I$ for every $I \in \calI_{<h-1}$ and $K_I = K^*_I$ for every $I \in \calI_{h-1}$.  Then the $\sigma$ returned in Step~\ref{step:main-call-schedule} will have at least $|\sigma'^{*-1}([T])|$ jobs scheduled.  So in the end, the main algorithm will return a {partial dyadic} system $\bbJ^\best$ and a virtually-valid schedule $\sigma$ for $\bbJ^\best$ with at least $|\sigma'^{*-1}([T])|$ jobs scheduled.

\subsection{Analysis of Running Time}
	\label{subsec:algo-running-time}
		Finally we analyze the running time of the algorithm.
		\begin{lemma}
			\label{lemma:schedule-run-time}
			The running time of $\schedule$ for $I^* = [T]$ is at most $\exp\left(O\left(\frac{m^4}{\epsilon^3}\log^3n\log \log n\right)\right)$.
		\end{lemma}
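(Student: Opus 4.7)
The plan is to unroll the doubly-nested enumeration inside a single call of $\schedule$ and combine it with the depth $L+1 = O(\log T)$ of the recursion. Let $R(\ell)$ denote the worst-case running time of $\schedule$ on an $I^*$ at level $\ell$. I would establish the recurrence $R(\ell) \leq 2F\cdot R(\ell+1) + F\cdot \poly(n)$, where $F$ is a uniform upper bound on the product of the two loop sizes in a single call, and then take $\log_2 R(0) \leq L\cdot \log_2(2F) + \log_2 R(L)$.

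For Loop~\ref{step:schedule-enumerate-g}, each guess vector $g_I$ has length at most $p = \Theta\big(m^3\log T\log\log T/\epsilon^2\big)$: for top $I$ this is by definition, and for middle $I$ we have $m|I| \leq m\cdot 2^{h+h'} = O(m^3\log T/\epsilon^2) \leq p$. Since $|\calI_{h-1}[I^*]| \leq 2^{h-1} = O(m\log T/\epsilon)$, this loop contributes a factor $2^{p\cdot 2^{h-1}}$, whose logarithm is $O\big(m^4\log^2 T\log\log T/\epsilon^3\big)$.

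The key step, and the main obstacle, is bounding Loop~\ref{step:schedule-partition-Janc}. A naive count over all partitions of $J_\anc \cup J_{I^*}$ into three parts is $3^{m|I^*|}$, which is catastrophically large. The saving comes from the coarse granularity in Remark~\ref{remark:equivalence}: equivalence classes record only the multi-sets $\{W\cap \sfleft(I^*):j\in J_\anc^\sfL\}$ and $\{W\cap \sfright(I^*):j\in J_\anc^\sfR\}$. For every $j\in J_\anc \cup J_{I^*}$ the window boundaries $b^\anc_j,e^\anc_j$ are integer multiples of $\max\{2^{-h}|I^*|,2^h\}$---this holds for ancestors by Claim~\ref{claim:schedule-properties}\ref{property:schedule-input-b-e-multiplies} and for $j \in J_{I^*}$ directly from the construction in Step~\ref{step:schedule-extend-b-e-up}. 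Hence the possible shapes of $W\cap \sfleft(I^*)$ and of $W\cap \sfright(I^*)$ are each bounded by $O(2^h)$, so an equivalence class is specified by a pair of count-vectors indexed by at most $O(2^h)$ window-shapes with entries in $[0,m|I^*|]$. This gives at most $(m|I^*|+1)^{O(2^h)} \leq (mT)^{O(2^h)}$ classes, contributing a factor whose logarithm is $O\big(2^h \log(mT)\big) = O\big(m\log^2 T/\epsilon\big)$. Enumerating one representative per class in $\poly(n)$ time is routine (iterate over count-vectors, check feasibility greedily), so the per-call cost incurs only a $\poly(n)$ overhead which is absorbed.

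Adding the two logarithms gives $\log_2 F = O\big(m^4\log^2 T\log\log T/\epsilon^3\big)$, since the partition term is dominated by the guessing term. The leaf case $I^*\in \calI_\sfbot$ is handled by brute-force enumeration over schedules: at most $2m\cdot 2^h$ jobs, each with $2^h+1$ slot options, hence $\log_2 R(L) = O\big(m^2\log T\log\log T/\epsilon\big)$, which is absorbed. Unrolling across $L = O(\log T)$ levels yields $\log_2 R(0) = O\big(L\cdot \log_2 F\big) = O\big(m^4\log^3 T\log\log T/\epsilon^3\big)$, and since $T\leq n$ this is the claimed bound. The only subtlety worth flagging beyond the equivalence-class count is verifying that both loops in $\schedule$ only interact multiplicatively, and that the pseudo-polynomial $\poly(n)$ factors at each level accumulate only to an additional $n^{O(L)}$, which is absorbed into the stated $\exp(\cdot)$.
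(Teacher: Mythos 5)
Your overall strategy is the same as the paper's: bound the per-level branching factor as the product of the number of guess vectors and the number of partition equivalence classes, then multiply across the $L = O(\log n)$ levels of the recursion. Your uniform bound of $2^{h-1}\cdot p$ on the total guess-vector length (using $m|I| = O(p)$ for the middle intervals in $\calI_{h-1}[I^*]$) is a harmless simplification of the paper's case split, and the leaf case is handled identically. The one step that is not justified as written is the count of $O(2^h)$ possible shapes for $W\cap\sfleft(I^*)$ (and for $W\cap\sfright(I^*)$). The fact that $b^\anc_j$ and $e^\anc_j$ are integer multiples of $\max\{2^{-h}|I^*|,2^h\}$ only bounds the number of distinct windows by $O(2^{2h})$, and with that weaker count the partition term would be $O(2^{2h}\log n) = O(m^2\log^3 n/\epsilon^2)$ per level, hence $O(m^2\log^4 n/\epsilon^2)$ after summing over levels --- this is \emph{not} dominated by the guessing term and would overshoot the claimed bound. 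To get $O(2^h)$ you also need the second half of Property~\ref{property:schedule-input-b-e-multiplies} (and, for $j\in J_{I^*}$, the fact that the window computed in Step~\ref{step:schedule-extend-b-e-up} straddles $\sfcenter(I^*)$): every relevant window satisfies $b^\anc_j\le\sfbegin(\sfleft(I^*))$ or $e^\anc_j\ge\sfend(\sfleft(I^*))$, so its intersection with each half-interval is a prefix or a suffix of that half, determined by a single endpoint, giving $2\cdot 2^{h-1}$ shapes per side. With this observation inserted your argument is complete and coincides with the paper's proof.
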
 
		\begin{proof}
			For every $\ell \in [0, L)$, we define $R_\ell$ to be the maximum number of times we call $\schedule$ as sub-recursions in a recursion of $\schedule$ for some $I^* \in \calI_\ell$.  Let $R_{L}$ be the worst case running time for a recursion of $\schedule$ for some $I^* \in \calI_{L} = \calI_\sfbot$.   Notice that the running time of recursion of $\schedule$ for some $I^*\in \calI_\ell$, not counting the running time for sub-recursions, is at most $\poly(n) R_\ell$.  Then it is easy to see that the running time of $\schedule$ for $I^* = [T]$ is at most $\poly(n)\prod_{\ell=0}^{L}R_\ell = \poly(n)\exp\left(\sum_{\ell=0}^L \ln R_{\ell}\right)$.
			
			First, we bound $R_\ell$ for $\ell < L$ by focusing on any $I^* \in \calI_\ell$. If $\ell \leq L-h' - h$, $(g_I)_{I \in \calI_{h-1}[I^*]\setminus \calI_\sfbot}$ has total length $2^{h-1}\cdot {p}$.  If $\ell \geq L-h'-h+1$ but $\ell \leq L-h$, the total length is $2^{-\ell} mT$ since $\sum_{I \in \calI_{h-1}[I^*]}m|I| = m|I^*| = 2^{-\ell}mT$. If $\ell \geq L-h+1$, then the length is $0$. Now we consider the number of different ways to split ${J_\anc} \cup J_{I^*}$ into $J_\anc^\sfL, J_\anc^\sfR$ and $J_\discarded$. By Property~\ref{property:schedule-input-b-e-multiplies},  $b^\anc_j$ and $e^\anc_j$ values for $j \in {J_\anc} \cup J_{I^*}$ will be integer multiplies of $2^{-h}|I^*|$. Moreover, for each $j \in {J_\anc} \cup J_{I^*}$, we have $b^\anc_j \leq \sfbegin(\sfleft(I^*))$ or $e^\anc_j \geq \sfend(\sfleft(I^*))$, and we also have $b^\anc_j \leq \sfbegin(\sfright(I^*))$ or $e^\anc_j \geq \sfend(\sfright(I^*))$. Thus, there are at most $4\cdot 2^{h-1} = 2^{h+1}$ distinct elements in $\{(b^\anc_j, e^\anc_j] \cap \sfleft(I^*): j \in {J_\anc} \cup J_{I^*}\} \cup \{(b^\anc_j, e^\anc_j] \cap \sfright(I^*): j \in {J_\anc} \cup J_{I^*}\}$. Therefore, we have at most $n^{\cdot2^{h+1}}$ distinct equivalence classes for partitions $(J_\anc^\sfL, J_\anc^\sfR, J_\discarded)$. So, if $\ell \leq L-h' - h$, we have 
			\begin{align*}
				\log R_\ell &\leq 1  + 2^{h-1}\cdot {p} + 2^{h+1} \log n \leq O\left(2^h {p}\right)
				\leq O\left(\frac{m\log T}{\epsilon}\cdot \frac{m^3\log T\log \log T}{\epsilon^2}\right) = O\left(\frac{m^4\log^2n\log\log n}{\epsilon^3}\right).
			\end{align*}
			If $\ell \geq L - h' -h +1 = \log T - 2h - h' + 1$, we have 
			\begin{align*}
				\log R_\ell &\leq 1 + 2^{-\ell}mT + 2^{h+1} \log n \leq  1 + 2^{2h+h'-1}m + 2^{h+1} \log n \leq O(2^{2h+h'}m) = O\left(\frac{m^4\log^2n}{\epsilon^3}\right).
			\end{align*}
		
			Now we bound $\log R_{L}$ and focus on any $I^* \in \calI_\sfbot$. Since in Step~\ref{step:schedule-check-size} of $\schedule$, we guaranteed the sizes of ${J_\anc}$ and $J_{I^*}$ are at most  $m|I^*| = m2^h$. We have 
			\begin{align*}
				\log R_L \leq  O\left(m2^h\log 2^h\right) = O\left(\frac{m^2\log n \log\log n}{\epsilon}\right).
			\end{align*}		
			Thus, 
			\begin{align*}
				\sum_{\ell=0}^{L} \log R_\ell &\leq O\left(L\cdot \frac{m^4\log^2n\log\log n}{\epsilon^3}  + \frac{m^2\log n\log\log n}{\epsilon}\right) = O\left(\frac{m^4\log^3n\log\log n}{\epsilon^3}\right).
			\end{align*}
			This finishes the proof of lemma.
		\end{proof}
				
		Then the running time of the main algorithm is at most $\exp\left(O(1)\cdot 2^h \cdot {p}\right) \leq \exp\left(O\left(\frac{m^4\log^2n\log \log n}{\epsilon^3}\right)\right)$ times that of $\schedule$ for $[T]$. So overall the running time of main is at most $\exp\left( O\left(\frac{m^4}{\epsilon^3}\log^3n\log \log n\right)\right) = n^{O\left(\frac{m^4}{\epsilon^3}\cdot\log^2n\cdot \log \log n\right)}$.
		
		\paragraph{Wrapping Up} Running the main algorithm, we can obtain a {partial dyadic} system $\bbJ^\best$ and a virtually-valid schedule $\sigma''$ for $\bbJ^\best$ with  $|\sigma''^{-1}(\discarded)| \leq |\sigma'^{*-1}(\discarded)| \leq \frac{3\epsilon T}{4}$ by Lemma~\ref{lemma:valid-to-virtually-valid}. By Lemma~\ref{lemma:canonicalize} and \ref{lemma:virtually-valid-to-valid}, we can convert $\sigma'$ to a valid schedule $\sigma$ for $\bbJ^\best$ with $|\sigma^{-1}(\discarded) \setminus \sigma'^{-1}(\discarded)| \setminus \frac{3\epsilon T}{4}$. Thus, we have  $|\sigma^{-1}(\discarded)| \leq \frac{\epsilon T}{4} + \frac{3\epsilon T}{4} = \epsilon T$. By inserting the at most $\epsilon T$ jobs back to $\sigma$ using the procedure in Appendix~\ref{appendix:manipulations} we can obtain a schedule for all jobs in $J^\circ$ with makespan at most $(1+\epsilon)T$. The running time of the algorithm is $n^{O\left(\frac{m^4}{\epsilon^3}\cdot\log^2n\cdot \log \log n\right)}$. This finishes the proof of Theorem~\ref{thm:main} with running time replaced by $n^{O\left(\frac{m^4}{\epsilon^3}\cdot\log^2n\cdot \log \log n\right)}$.

\bibliographystyle{plain}
\bibliography{reflist}

\appendix
	\section{Simple Transformations}
	\label{appendix:manipulations}
	We show that without loss of generality, we can assume the optimum makespan $T$ is an integer power of $2$. Let $T'$ be the smallest integer power of $2$ that is at least $T$. We add $m(T'-T)$ jobs $\tilde J$ to the input set and add precedence constraints from all jobs in $J^\circ$ to all jobs in $\tilde J$. It is easy to see that the optimum makespan for the new instance is $T'$ and any schedule with makespan $(1+\epsilon)T'$ for the new instance can be converted to a schedule for the original instance with makespan $(1+\epsilon)T' - (T' - T) = T + \epsilon T' \leq (1+2\epsilon)T$. Thus a $(1+\epsilon)$-approximation for the new instance implies a $(1+2\epsilon)$-approximation for the original instance. 
	
	Then we show a valid schedule $\sigma$ of makespan $T$ with $a$ jobs discarded can be converted to a schedule of makespan $T+a$ for all jobs.  For every discarded job $j$, we insert $j$ to the schedule using the following procedure. We find the predecessor of $j$ that is scheduled latest in $\sigma$ and assume it is scheduled at time $t$; if no predecessor of $j$ is scheduled in $\sigma$, let $t = 0$. We then shift the schedule $\sigma$ starting from time $t+1$ to the right by 1 unit time. This will leave all the machines idle at time $t+1$. We then schedule $j$ at time $t+1$. All the precedence constraints to $j$ are satisfied by the definition of $t$.  All the precedence constraints from $j$ are satisfied since before inserting $t$, any successor of $j$ must be scheduled at time $t+1$ or later.   So inserting $j$ will increase the makespan of $\sigma$ by 1. The final schedule has makespan at most $T+a$.

\section{Proofs of Helper Lemmas}
	\label{appendix:proof-helper}
	\listscheduling*
	\begin{proof}
		We can assume $\sum_{t \in I} \sfcap(t) = |J|$ by decreasing the $\sfcap$ values.  We use a simple greedy algorithm to schedule the jobs.  For every $t$ from $\sfbegin(I)+1$ to $\sfend(I)$,  we try to schedule as many jobs as possible at time $t$ using any strategy. The number of jobs we can schedule at time $t$ is the minimum of the following two numbers: (a) the number of unscheduled jobs whose predecessors were all scheduled before $t$, and (b) the number $\sfcap(t)$ of available slots at time $t$.  In the end, all the jobs that are not scheduled will be discarded. 
		
		At any time, we say a job is a min-job if all its predecessors are scheduled and itself is not scheduled yet; these are the jobs that are ready for scheduling. During any time $t$, either all the $\sfcap(t)$ slots are being used, or all the min-jobs at the beginning of time $t$ are scheduled.  In the latter case, $\Delta(J_{\textsf{rem}})$ will be decreased by 1, where $J_{\textsf{rem}}$ denotes the set of jobs that are not scheduled yet; this holds since we will never run out of jobs. Thus the total number of time steps in which the latter case happens is at most $\Delta(J)$. There will be at most $m\Delta(J)$ available slots that are not used for scheduling. So, we discarded at most $m\Delta(J)$ jobs.
	\end{proof}

	\lengthaddup*
	\begin{proof}
		Focus on any maximum-length precedence chain of jobs in $J_i$ for any $i \in [k]$. The jobs must have distinct $\depth_J(\cdot)$ values: Two jobs with the same $\depth_J(\cdot)$ value can not have a precedence constraint between them. So, $\big\{\langle c(j), \depth_J(j)\rangle: J \in J_i\big\}$ must contain $\Delta(J_i)$ different vectors. Moreover, any vector in $\big\{\langle c(j),\depth_J(j)\rangle: j \in J_i\big\}$ is less than or equal to any vector in $\big\{\langle c(j), \depth_J(j)\rangle: J \in J_{i'}\big\}$ for $i' > i$.  It is straightforward to prove that the number of different vectors in $\{(c(j), \depth_J(j)): j \in J\}$ should be at least $\Delta(J_1) + \Delta(J_2) + \cdots + \Delta(J_k) - (k-1)$. Thus, we have $|Z|\cdot \Delta(J) \geq \Delta(J_1) + \Delta(J_2) + \cdots + \Delta(J_k) - (k-1)$, which finishes the proof of the lemma.
	\end{proof}

	\reduceinversions*
	\begin{proof}
		W.l.o.g we assume $a \Aless b$ and $\pi(a) > \pi(b)$. Let $M$ and $M'$ be the number of inversions in $\pi$ and $\pi'$ respectively.  $\{a, b\}$ contributed 1 to $M$ but not to $M'$.  Let $c \in A \setminus \{a, b\}$ we are going to consider how $\{a, c\}$ and $\{b, c\}$ contribute to $M$ and $M'$.  
		
		If $c \Aless a \Aless b$ or $a \Aless b \Aless c$, then $\{a, c\}$ is an inversion in $\pi$ if and only if $\{b, c\}$ is an inversion in $\pi'$, and $\{b, c\}$ is an inversion in $\pi$ if and only if $\{a, c\}$ is an inversion in $\pi'$.  In the two cases they contribute the same to $M$ and $M'$. 
		
		So assume we are not in the two cases; that is, we have $c \notAless a$ and $b\notAless c$.   Focus on the pair $\{a, c\}$. If we can not compare $a$ and $c$ using the order $\Aless$ then $\{a, c\}$ does not contribute to $M$ and $M'$. Otherwise we have $a \Aless c$. Since $\pi(a) > \pi(b) = \pi'(a)$, then $\pi'(a) > \pi'(c)$ implies $\pi(a) > \pi(c)$. Therefore if $\{a, c\}$ contributed 1 to $M'$, it must have contributed 1 to $M$ as well.   We can make the same argument for $\{b, c\}$. 
		
		Therefore we showed that the contribution of $\{a, c\}$ and $\{b, c\}$ makes to $M'$ is at most that to $M$. Therefore, $M' \leq M - 1$; that is, the number of inversions in $\pi'$ is at most the number of inversions in $\pi$ minus 1. 
	\end{proof}

\section{Improved Running Time: Modified Dyadic Systems, Valid and Virtually-Valid Schedules}
\label{sec:improved}

In this section and the next one, we show how the running time of the algorithm can be improved to $n^{O\left(\frac{m^4}{\epsilon^3}\log^3\log n\right)}$.  We first give some intuition on how this is done. Consider Lemma~\ref{lemma:valid-to-virtually-valid} on the existence of the virtually-valid schedule $\sigma'^*$ on $\bbJ^*$ with a few discarded jobs.  In the proof, we discard at most $m\cdot \max\{2^{1-h}|I|, 2^{1+h}\}$ jobs in $J^*_{I}$ for each top interval $I$.   For simplicity, let us only consider the case where $2^{-h}|I| \geq 2^h$. Roughly speaking, the $2^{-h}|I|$ term comes from the precision of the $b^*_j$ and $e^*_j$ values for $j \in J^*_I$: they are multiplies of $2^{-h}|I|$ and thus we do not need to know how jobs are assigned $h$ levels below the level of $I$. 

If we relate the number of jobs discarded in $J^*_{I}$ to $|J^*_{I}|$ (instead of $|I|$), then we can afford to discard $\Omega(\frac{\epsilon}{m}|J^*_{I}|)$ jobs in $J^*_I$ since $J^*_I$'s are disjoint.  With this in mind, we do not need to partition $I$ into sub-intervals of length $2^{-h}|I|$.  Instead, we only partition $I$ into a collection $\calC$ of intervals using integer multiplies of $2^{-h}|I|$ as cutting points, such that every interval in $\calC$ either has length $2^{-h}|I|$ or covers at most  $\Theta(\frac{\epsilon}{m}|J^*_{I}|)$ jobs in $J^*_I$, where a job is covered by $I'$ if its owning interval is a subset of $I'$.   Now we define $b^{\bbJ^*}_j$ and $e^{\bbJ^*}_j$ so that there are the cutting points we used to from $\calC$. One can modify the proof of Lemma~\ref{lemma:valid-to-virtually-valid} slightly to show that  the number of discarded jobs is still $O(\epsilon T)$.   But now $\calC$ only needs to contain $O(m/\epsilon)$ intervals and thus we have much less information to guess; this saves us a factor of $\log n$ in the exponent of the running time. Of course we have to guess the partition $\calC$ itself but there are not too many possibilities for $\calC$. At the same time, the number of different $b^\bbJ_j$ and $e^\bbJ_j$ values over all jobs $j$ assigned to an interval $I$ in a dyadic system $\bbJ$ is reduced to $O_{m, \epsilon}(1)$. This bound was used in the proof of Lemma~\ref{lemma:virtually-valid-to-valid} and eventually contributed to the exponent in the running time. Thus with the improvement, we can further remove the other factor of $\log n$ in the exponent.  A small technicality is that the partition $\calC$ for $I$ should be a refinement of the partition used for its parent, which incurs a factor of $h = O(\log \log T)$ in the size of $\calC$.  

For the sake of formality, we walk through the whole analysis again, but omit some details if they can be easily extended from the arguments for the basic algorithm.  

\paragraph{Global Parameters} We use a different set of global parameters now. $h = \ceil{\log \frac{16m \log T}{\epsilon}} = \log \log T + \log \frac{m}{\epsilon} + O(1)$, $L = \log T - h$, $h' = \ceil{\log\frac{16m}{\epsilon}}$, $\delta = \frac{\epsilon}{8shm^2}, \delta' = \frac{1}{2^hsh}$, $p = \floor{\frac2\delta\ln\frac{m}{\delta'}} + 1, s= \frac{16m}{\epsilon}$ (we assume $8m/\epsilon$ is an integer) and $\rho \leq \frac{\epsilon}{(Lsh)^2}$ is the largest number such that $1/\rho$ is an integer power of $2$.

Above, $h, L$ and $h'$ are defined the same as they were in the basic algorithm, except  with different constants.  $\delta$ and $\delta'$ are defined differently. In particular, $\delta$ is now of order $\Theta_{m,\epsilon}(\log \log T)$. As a result, $p = \floor{\frac2\delta\ln\frac{m}{\delta'}} + 1$ now becomes of order $O_{m, \epsilon}(\log \log T)$, which will lead to our improved running time.  The difference in the definition of $\delta'$ is not crucial since eventually only $\log (1/\delta')$ will appear in the exponent of the running time.   $s$ and $\rho$ are new variables we introduce in the improved algorithm. 

\subsection{Modified Dyadic Systems}
	This section corresponds to Section~\ref{sec:dyadic} in the basic algorithm. We define what is a (partial) \emph{modified dyadic system}. In addition to the parameters specified in Definition~\ref{def:dyadic-system} for a (partial) dyadic system, we further need to specify a set $S_I$ for every $I \in \calI_\sftop[I^*]$; this defines  the partition for $I$ we use to decide the $b^\bbJ_j$ and $e^\bbJ_j$ values of jobs assigned to $I$.
	\begin{definition}
		\label{def:modified-dyadic-system}[Counterpart of Definition~\ref{def:dyadic-system}]
		Given an interval ${I^*} \in \calI$, a \emph{{partial modified dyadic} system} $\bbJ$ over ${I^*}$ is a tuple $\big({J_\anc},  b^\anc \in [0, T]^{J_\anc}, e^\anc \in [0, T]^{J_\anc}, (S_{I})_{I \in \calI_\sftop[I^*]}, (J_I)_{I \in \calI[I^*]}\big)$ satisfying all the four properties in Definition~\ref{def:dyadic-system} (Property~\ref{property:dyadic-disjoint}-\ref{property:dyadic-precedence}) and 
		\begin{enumerate}[label=(\ref{def:modified-dyadic-system}\alph*)]
			\item for every $I \in \calI_\sftop[I^*]$, we have $\sfcenter(I) \in S_I \subseteq (\sfbegin(I), \sfend(I))$,  $|S_I| \leq s - 1$ and every number in $S_I$ is an integer multiply of $\max\{2^{-h}|I|, 2^h\}$.  \label{property:modified-dyadic-S}
		\end{enumerate}	
		Jobs being assigned to intervals, owning intervals, top, bottom and ancestor jobs are defined as in Definition~\ref{def:dyadic-system}.
		Again, we simply say $\bbJ$ is a modified dyadic system, if $I^* = [T]$, ${J_\anc} = \emptyset$ and $J_{\subseteq [T]} = J^\circ$. We use $\bbJ = ((S_{I})_{I \in \calI_\sftop}, (J_I)_{I \in \calI})$ to denote a modified dyadic system.
	\end{definition}
	
	The following observation is easy to see since each $S_{\hat I}$ in a partial modified dyadic system only contains integer multiplies of $2^{-h} |\hat I|$:
	\begin{obs}
		\label{obs:cutting-small}
		Assume $(S_I)_{I \in \calI_\sftop[I^*]}$ satisfies Property~\ref{property:modified-dyadic-S}.  If $(\sfbegin(I), \sfend(I)) \cap S_{\hat I} \neq \emptyset$ for some $I \in \calI[I^*]$ and some ancestor $\hat I \in \calI_\sftop[I^*]$ of $I$, then $\hat I$ is at most $h-1$ levels above $I$. 
	\end{obs}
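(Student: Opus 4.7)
The plan is a short proof by contradiction: suppose $\hat I$ is at least $h$ levels above $I$, i.e., $|\hat I| \geq 2^h |I|$, and derive a contradiction with $(\sfbegin(I), \sfend(I)) \cap S_{\hat I} \neq \emptyset$. The key observation is that under this assumption, every element of $S_{\hat I}$ is forced to be an integer multiple of $|I|$; but $\sfbegin(I)$ and $\sfend(I) = \sfbegin(I) + |I|$ are two consecutive integer multiples of $|I|$, so the open interval $(\sfbegin(I), \sfend(I))$ contains no integer multiple of $|I|$, contradicting nonemptiness of the intersection.

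To verify that each element of $S_{\hat I}$ is a multiple of $|I|$, I would invoke Property~\ref{property:modified-dyadic-S}, which says the elements are integer multiples of $\max\{2^{-h}|\hat I|, 2^h\}$. First I would note that $|I| \geq 2^h$, since every interval of $\bfT$ has length at least that of a bottom interval, namely $2^h$; combined with $|\hat I| \geq 2^h|I|$, this gives $2^{-h}|\hat I| \geq |I| \geq 2^h$, so $\max\{2^{-h}|\hat I|, 2^h\} = 2^{-h}|\hat I|$. Then since $|I|$ and $2^{-h}|\hat I|$ are both integer powers of two, and $|I| \leq 2^{-h}|\hat I|$, $|I|$ divides $2^{-h}|\hat I|$; hence every element of $S_{\hat I}$ is an integer multiple of $|I|$, as needed. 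I do not anticipate any real obstacle: the argument is purely a divisibility check on powers of two, the only mildly subtle point being that the $\max\{\cdot, 2^h\}$ in Property~\ref{property:modified-dyadic-S} collapses to $2^{-h}|\hat I|$, which is automatic from the bottom-interval length being $2^h$.
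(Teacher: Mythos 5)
Your proof is correct and matches the paper's (one-sentence) justification, which simply notes that $S_{\hat I}$ consists of integer multiples of $2^{-h}|\hat I|$; you have filled in the routine details — that the $\max$ collapses to $2^{-h}|\hat I|$ because $|I|\geq 2^h$, that $|I|$ then divides $2^{-h}|\hat I|$ as both are powers of two, and that the open dyadic interval $(\sfbegin(I),\sfend(I))$ contains no multiple of $|I|$. No gaps.
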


	A valid-schedule for $\bbJ$ is defined in the same way as it was in the basic algorithm, since the sets $S_I$ do not play a role in the definition:
	\begin{definition}[Counterpart of Definition~\ref{def:valid}]
		\label{def:valid-modifed}
		Given a modified {dyadic} system $\bbJ = ((S_I)_{I \in \calI_\sftop}, (J_I)_{I \in \calI})$, a vector $\sigma \in \big({[T]} \cup \{\discarded\}\big)^{J^\circ}$  is said to be a valid schedule for $\bbJ$, if it  is valid to the input instance, and satisfies the interval constraints as in Definition~\ref{def:valid}.
	\end{definition}
		
\subsubsection{Modified Dyadic System from the Optimum Solution}
	This section corresponds to Section~\ref{subsec:optimum-dyadic} for the basic algorithm. We define a modified dyadic system $\bbJ^*$ from the valid schedule $\sigma^* \in [T]^{J^\circ}$ without discarded jobs.  As before, we construct $(J^*_I)_{I \in \calI}, (K^*_I:= J^*_{\subseteq I})_{I \in \calI}$ and $(g^*_I)_{I \in \calI_\sftop}$ by running $K^*_{[T]} \gets J^\circ$ and $\cstrJStar([T])$ (as in Algorithm~\ref{alg:cstrJstar}). The procedure $\pushdown$ is defined as in Algorithm~\ref{alg:pushdown}. Our modified system $\bbJ^*$ is then $\big((J^*_I)_{I \in \calI}, (S^*_I)_{I \in \calI_\sftop}\big)$ for some vector $(S^*_I)_{I \in \calI_\sftop}$ of sets. Then $\sigma^*$ is a valid schedule for $\bbJ^*$. 
	
	It remains  to specify the sets $S^*_I$ for all top intervals $I$. They are defined so that the number of discarded jobs is small when we convert $\sigma^*$ to a virtually-valid schedule $\sigma'^*$ for $\bbJ^*$ later. Focus on each $I \in \calI_\sftop$. We construct a partition $\calW_\sfL$ of $\sfleft(I)$ into many sub-intervals (which are not necessarily in $\calI$) using integer multiplies $\max\{2^{-h}|I|, 2^h\}$ as cutting points, via the following procedure. Initially, the partition $\calW_\sfL$ is the most refined one: It contains $\min\{2^{h-1}, 2^{-1-h}|I|\}$ intervals of length $\max\{2^{-h}|I|, 2^h\}$. While there are two adjacent intervals $W, W'$ in $\calW_\sfL$ such that $\big|J^*_I \cap \sigma^{*-1}(W \cup W') \big| \leq \frac{\epsilon}{8m}|J^*_I|$,  we merge $W$ and $W'$ in the $\calW_\sfL$. The procedure ends when no such $W$ and $W'$ can be found. 

We now make two simple observations about the partition $\calW_\sfL$. First, every interval in $W \in \calW_\sfL$ has either $|W| = \max\{2^{-h}|I|, 2^h\}$ or $\big|J^*_I \cap \sigma^{*-1}(W)\big| \leq \frac{\epsilon}{8m}|J^*_I|$: If $|W| >\max\{2^{-h}|I|, 2^h\}$, then it must be constructed by merging two intervals and the merging can be done only if $\big|J^*_I \cap \sigma^{*-1}(W)\big| \leq \frac{\epsilon}{8m}|J^*_I|$.  Second, every two adjacent intervals $W$ and $W'$ in $\calW_\sfL$ has  $\big|J^*_I \cap \sigma^{*-1}(W \cap W')\big| > \frac{\epsilon}{8m}|J^*_I|$ since otherwise they would have been merged.  This implies that $|\calW_\sfL| \leq 2 \cdot \ceil{\frac{|J^*_I \cap \sigma^{*-1}(\sfleft(I))|}{\epsilon|J^*_I|/(8m)}} - 1$.

Then we use the same procedure to obtain a partition $\calW_\sfR$ of $\sfright(I)$. Combining $\calW_\sfL$ and $\calW_\sfR$ gives us a partition $\calW$ of $I$. Then our $S^*_I$ contains the $|\calW|-1$ cutting points that form $\calW$. 
\begin{claim}
	\label{claim:properties-of-Sstar-I}
	For every $I \in \calI_\sftop$, the following holds.
	\begin{enumerate}[label=(\ref{claim:properties-of-Sstar-I}\alph*)]
		\item $S^*_I$ only contains integer multiplies of $\max\{2^{-h}|I|, 2^h\}$ in $(\sfbegin(I), \sfend(I))$, and $\sfcenter(I) \in S^*_I$.
		\item For every two adjacent numbers $t < t'$ in $S^*_I$, either $t' - t = \max\{2^{-h}|I|, 2^h\}$, or $\big| J^*_I \cap  \sigma^{*-1}((t', t])\big| \leq \frac{\epsilon}{8m}|J^*_I|$. \smallskip
		\item $|S^*_I| \leq \frac{16m}{\epsilon} - 1 = s - 1$.
	\end{enumerate}
\end{claim}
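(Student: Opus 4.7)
The plan is to verify the three properties directly from the construction of $S^*_I$, with essentially all of the combinatorial content concentrated in part~(c).

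For part~(a), I would observe that $\calW_\sfL$ is initialized as a partition of $\sfleft(I)$ whose boundary points are integer multiples of $\max\{2^{-h}|I|, 2^h\}$ in $(\sfbegin(I),\sfcenter(I))$, and each merging step only deletes a boundary point; the analogous statement holds for $\calW_\sfR$. Hence every cutting point of the combined partition $\calW$ is an integer multiple of $\max\{2^{-h}|I|, 2^h\}$ lying in $(\sfbegin(I),\sfend(I))$. The midpoint $\sfcenter(I)$ belongs to $S^*_I$ because $\calW_\sfL$ and $\calW_\sfR$ are built independently on the two halves of $I$, so the boundary at $\sfcenter(I)$ is never a candidate for merging.

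For part~(b), two adjacent cutting points $t<t'$ in $S^*_I$ correspond to a single block $W=(t,t']$ of either $\calW_\sfL$ or $\calW_\sfR$, since $\sfcenter(I)\in S^*_I$ prevents $W$ from straddling the midpoint. If $W$ has the minimum length $\max\{2^{-h}|I|, 2^h\}$ we are done; otherwise $W$ was produced by at least one merge, and the merge condition at the step that created $W$ is exactly $|J^*_I\cap \sigma^{*-1}(W)|\leq \frac{\epsilon}{8m}|J^*_I|$, which is what~(b) requires.

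For part~(c), let $a=|J^*_I\cap \sigma^{*-1}(\sfleft(I))|$ and $b=|J^*_I\cap \sigma^{*-1}(\sfright(I))|$; since $\sigma^*$ discards no jobs, $a+b=|J^*_I|$. The termination condition of the merging procedure guarantees that every adjacent pair $(W_i,W_{i+1})$ of blocks of $\calW_\sfL$ satisfies $|J^*_I\cap \sigma^{*-1}(W_i\cup W_{i+1})|>\frac{\epsilon}{8m}|J^*_I|$. Considering the disjoint pairs $(W_1,W_2),(W_3,W_4),\ldots$, their union is contained in $\sfleft(I)$ and so supports at most $a$ jobs of $J^*_I$; this forces $\lfloor |\calW_\sfL|/2\rfloor \cdot \frac{\epsilon}{8m}|J^*_I| < a$, hence $|\calW_\sfL|\leq 2\lceil \frac{8ma}{\epsilon|J^*_I|}\rceil - 1$, and symmetrically $|\calW_\sfR|\leq 2\lceil \frac{8mb}{\epsilon|J^*_I|}\rceil - 1$. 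The main obstacle is to combine these two bounds \emph{tightly}: naively summing the ceilings loses a $+2$ and yields only $|S^*_I|\leq s+1$. The fix is to exploit integrality: since $\frac{8m}{\epsilon}$ is an integer by assumption and $\frac{8ma}{\epsilon|J^*_I|}+\frac{8mb}{\epsilon|J^*_I|}=\frac{8m}{\epsilon}$ is an integer, the inequality $\lceil x\rceil+\lceil y\rceil\leq x+y+1$ for $x+y\in\Z$ applies, giving $|\calW_\sfL|+|\calW_\sfR|\leq 2(\frac{8m}{\epsilon}+1)-2=s$, and thus $|S^*_I|=|\calW|-1\leq s-1$.
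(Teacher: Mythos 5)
Your proposal is correct and follows essentially the same route as the paper: the paper dismisses (a) and (b) as "easy to see" (your block-merging argument is exactly what is intended), and for (c) it states the same two bounds $|\calW_\sfL|\leq 2\lceil a/(\epsilon|J^*_I|/(8m))\rceil-1$, $|\calW_\sfR|\leq 2\lceil b/(\epsilon|J^*_I|/(8m))\rceil-1$ and adds them, implicitly using the same integrality fact $\lceil x\rceil+\lceil y\rceil\leq x+y+1$ for $x+y\in\Z$ that you make explicit. The only quibble is the parenthetical remark that the naive summation gives $|S^*_I|\leq s+1$ (it gives $|\calW|\leq s+1$, hence $|S^*_I|\leq s$), but this does not affect the actual argument.
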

\begin{proof}
	The first two statements are easy to see. Note that $|\calW_\sfL| \leq 2 \cdot \ceil{\frac{|J^*_I \cap \sigma^{*-1}(\sfleft(I))|}{\epsilon|J^*_I|/(8m)}} - 1$  and $|\calW_\sfR| \leq 2 \cdot \ceil{\frac{|J^*_I \cap \sigma^{*-1}(\sfright(I))|}{\epsilon|J^*_I|/(8m)}} - 1$. Adding the two inequalities we have $|\calW|\leq 2\cdot \left(\frac{|J^*_I|}{\epsilon|J^*_I|/(8m)}+1\right) - 2  = \frac{16m}{\epsilon} =s$. This implies $|S^*_I| \leq  s-1$.
\end{proof}

\subsection{Virtually Valid Solutions }
	This section corresponds to Section~\ref{subsec:virtually-valid}.  We assume we are given a partial modified dyadic system $\bbJ = \big({J_\anc},  b^\anc \in [0, T]^{J_\anc}, e^\anc \in [0, T]^{J_\anc}, (J_I)_{I \in \calI[I^*]}, (S_{I})_{I \in \calI_\sftop[I^*]}\big)$; the definitions and lemmas are w.r.t this $\bbJ$. We define a window $(b^\bbJ_j, e^\bbJ_j]$ for each top job $j$ in $\bbJ$. The definition depends on the sets $S_I$, which is a key difference between the improved algorithm and the basic one.   We also define an ``extended window'' $(b'^\bbJ_j, e'^\bbJ_j]$ for $j$.
	\begin{definition}[$b^\bbJ_j$, $b'^\bbJ_j, e^\bbJ_j$ and $e'^\bbJ_j$ values]
		For every top job $j \in J_I$ for some $I \in \calI_\sftop$, we define the window and extended window for $j$ to be $(b^\bbJ_j, e^\bbJ_j]$ and $(b'^\bbJ_j, e'^\bbJ_j]$, where
		\begin{itemize}
			\item $b^\bbJ_j$ is the minimum number $b \in (\sfbegin(I), \sfcenter(I)] \cap S_{\supseteq I}$ such that there are no precedence constraints from $J_{\subseteq (b, \sfcenter(I)]}$ to $j$,
			\item $e^\bbJ_j$ is the maximum number $e \in [\sfcenter(I), \sfend(I)) \cap S_{\supseteq I}$ such that there are no precedence constraints from $j$ to $J_{\subseteq (\sfcenter(I), e]}$, 
			\item $b'^\bbJ_j$ is the maximum number in $((\sfbegin(I), \sfcenter(I)) \cap S_{\supseteq I} \cup \{\sfbegin(I)\}$ that is smaller than $b^\bbJ_j$, and
			\item $e'^\bbJ_j$ is the minimum number in $(\sfcenter(I), \sfend(I)) \cap S_{\supseteq I} \cup \{\sfend(I)\}$ that is larger than $e^\bbJ_j$.
		\end{itemize}
	\end{definition}
	Notice that $\sfcenter(I) \in S_I$, so $b^\bbJ_j$ and $e^\bbJ_j$ are well-defined. As $b^\bbJ_j > \sfbegin(I)$ and $e^\bbJ_j < \sfend(I)$ and thus $b'^\bbJ_j$ and $e'^\bbJ_j$ are also well-defined.  If we assume that $S_{\supseteq I}$ contain all integer multiples of $\max\{2^{-h}|I|, 2^h\}$ in $(\sfbegin(I), \sfend(I))$ (though this is impossible for large enough $T$ by Observation~\ref{obs:cutting-small}), then $b^\bbJ_j$ and $e^\bbJ_j$ coincide with the $b^\bbJ_j$ and $e^\bbJ_j$ in Definition~\ref{def:b-e}, $b'^\bbJ_j = b^\bbJ_j - \max\{2^{-h}|I|, 2^h\}$ and $e'^\bbJ_j = e^\bbJ_j + \max\{2^{-h}|I|, 2^h\}$.
	
	Claim~\ref{claim:simple-properties-of-b-e} still holds with the same proof and we simply copy it here:
	\begin{claim}[Copy of Claim~\ref{claim:simple-properties-of-b-e}]
			\label{claim:simple-properties-of-b-e-modified}
			For any top job $j \in J_I, I \in \calI_\sftop$, we have $\sfbegin(I) < b^\bbJ_j \leq \sfcenter(I)\leq e^\bbJ_j < \sfend(I)$. Moreover, there are no precedence constraints from $j$ to $J_{\subseteq (0, e^\bbJ_j]}$, or from $J_{\subseteq (b^\bbJ_j, T]}$ to $j$.
		\end{claim}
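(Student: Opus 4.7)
My plan is that this claim is almost identical to Claim~\ref{claim:simple-properties-of-b-e}, so I would essentially copy its proof; the only subtlety is checking that the switch in the candidate set for $b^\bbJ_j, e^\bbJ_j$ (from integer multiples of $\max\{2^{-h}|I|,2^h\}$ to $S_{\supseteq I}$) does not break anything. Property~\ref{property:modified-dyadic-S} gives $\sfcenter(I) \in S_I \subseteq (\sfbegin(I),\sfend(I))$, which makes $\sfcenter(I)$ a valid candidate for both $b^\bbJ_j$ and $e^\bbJ_j$; so they are well-defined and, since every element of $S_{\supseteq I} \cap I$ lies strictly inside $I$, the chain $\sfbegin(I) < b^\bbJ_j \leq \sfcenter(I) \leq e^\bbJ_j < \sfend(I)$ is immediate. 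Note that for a strict ancestor $\hat I \supsetneq I$ the only way a number in $S_{\hat I}$ can land in $(\sfbegin(I),\sfcenter(I)]$ or $[\sfcenter(I),\sfend(I))$ is if it still respects these open-interval endpoints, so the first statement requires no further work.

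For the second statement I would argue symmetrically for the two halves; I will write out only the $e^\bbJ_j$ side. Fix $j' \in J_{I'}$ with $I' \subseteq (0, e^\bbJ_j]$ and split on how $I'$ sits relative to $I$. If $I' \supsetneq I$, then $\sfend(I') \geq \sfend(I) > e^\bbJ_j$, contradicting $I' \subseteq (0,e^\bbJ_j]$; if $I' = I$, then $\sfend(I) \leq e^\bbJ_j$, again contradicting $e^\bbJ_j < \sfend(I)$. Otherwise, $I'$ is either disjoint from $I$ (and then $\sfend(I') \leq e^\bbJ_j < \sfend(I)$ forces $\sfcenter(I') < \sfcenter(I)$, i.e., $I' \inorderless I$), or $I' \subseteq \sfleft(I)$ (in which case again $\sfcenter(I') \leq \sfend(\sfleft(I)) = \sfcenter(I)$ strictly, hence $I' \inorderless I$), or $I' \subseteq \sfright(I)$ (in which case $I' \subseteq \sfright(I) \cap (0,e^\bbJ_j] = (\sfcenter(I), e^\bbJ_j]$). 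In the first two sub-cases Property~\ref{property:dyadic-precedence} applied to $I' \inorderless I$ gives $j \not\prec j'$; in the last sub-case $j \not\prec j'$ holds directly by the defining condition of $e^\bbJ_j$.

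The argument that there are no precedence constraints from $J_{\subseteq (b^\bbJ_j, T]}$ to $j$ is the mirror image: an interval $I' \supsetneq I$ forces $\sfbegin(I') < b^\bbJ_j$, a contradiction; $I' = I$ gives $\sfbegin(I) \geq b^\bbJ_j > \sfbegin(I)$, also a contradiction; and the remaining three sub-cases are handled either by Property~\ref{property:dyadic-precedence} (after verifying $I \inorderless I'$) or by the defining condition of $b^\bbJ_j$. I do not anticipate a real obstacle here: the only thing worth double-checking is that $I' \subseteq \sfleft(I)$ truly implies $I' \inorderless I$, which holds because every interval contained in $\sfleft(I) = (\sfbegin(I), \sfcenter(I)]$ has center strictly less than $\sfcenter(I)$, and the \inorderless{} relation is governed by $\sfcenter$.
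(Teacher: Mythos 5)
Your proposal is correct and follows exactly the route the paper takes: the paper states that Claim~\ref{claim:simple-properties-of-b-e} ``still holds with the same proof,'' and that original proof is precisely your case analysis (first statement from the definitions since $\sfcenter(I)\in S_I$ is always a candidate, second statement by splitting $I'$ into disjoint-and-left, $I'\subseteq\sfleft(I)$, and $I'\subseteq(\sfcenter(I),e^\bbJ_j]$, using Property~\ref{property:dyadic-precedence} in the first two cases and the definition of $e^\bbJ_j$ in the third). Your extra explicit handling of $I'\supseteq I$ and the check that replacing integer multiples by $S_{\supseteq I}$ preserves well-definedness are fine and consistent with the paper.
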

	
	Claim~\ref{lemma:b-e-respects-precedence} also holds, but requires a slight change in the proof to accommodate the new definition of windows.
	\begin{lemma}[Copy of Claim~\ref{lemma:b-e-respects-precedence}]
		\label{lemma:b-e-respects-precedence-modified}
		Let $j$ and $j'$ be two top jobs with $j \prec j'$.  Then we have $b^\bbJ_j \leq b^\bbJ_{j'}$ and $e^\bbJ_j \leq e^\bbJ_{j'}$. 
	\end{lemma}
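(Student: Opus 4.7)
The plan is to follow verbatim the case analysis used in the proof of Lemma~\ref{lemma:b-e-respects-precedence}, splitting according to the relationship between the owning intervals $I \ni j$ and $I' \ni j'$: either (i) $I$ and $I'$ are disjoint, (ii) $I = I'$, or (iii) one is a strict sub-interval of the other. Cases (i) and (ii) go through without any change: in (i), Property~\ref{property:dyadic-precedence} forces $I \inorderless I'$, and since by Claim~\ref{claim:simple-properties-of-b-e-modified} we have $b^\bbJ_j, e^\bbJ_j \in I$ and $b^\bbJ_{j'}, e^\bbJ_{j'} \in I'$, the inequalities are immediate; in (ii), $N^-_{J^\circ}(j) \subseteq N^-_{J^\circ}(j')$ and $N^+_{J^\circ}(j') \subseteq N^+_{J^\circ}(j)$ imply that any $b$ (resp.~$e$) witnessing the definition of $b^\bbJ_{j'}$ (resp.~$e^\bbJ_j$) also witnesses the corresponding bound for $j$ (resp.~$j'$), so $b^\bbJ_j \le b^\bbJ_{j'}$ and $e^\bbJ_j \le e^\bbJ_{j'}$.

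For case (iii), I would, without loss of generality, take $I \subsetneq I'$, which together with Property~\ref{property:dyadic-precedence} forces $I \subseteq \sfleft(I')$. The inequality $e^\bbJ_j \le e^\bbJ_{j'}$ is then free, since $e^\bbJ_j < \sfend(I) \le \sfcenter(I') \le e^\bbJ_{j'}$. For $b^\bbJ_j \le b^\bbJ_{j'}$, if $b^\bbJ_{j'} \ge \sfcenter(I)$ then the bound $b^\bbJ_j \le \sfcenter(I) \le b^\bbJ_{j'}$ is immediate. Otherwise $b^\bbJ_{j'} < \sfcenter(I)$, and the definition of $b^\bbJ_{j'}$ together with $j \in J_I \subseteq J_{\subseteq(\sfbegin(I),\sfcenter(I')]}$ and $j \prec j'$ forces $b^\bbJ_{j'} > \sfbegin(I)$, so $b^\bbJ_{j'} \in (\sfbegin(I), \sfcenter(I))$.

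The only new ingredient, which is where the modified definition of windows matters, is to verify that $b^\bbJ_{j'}$ is a legal candidate in the minimization defining $b^\bbJ_j$, i.e.\ that $b^\bbJ_{j'} \in S_{\supseteq I}$. In the original Lemma~\ref{lemma:b-e-respects-precedence} this was done by checking that $b^\bbJ_{j'}$ is an integer multiple of $\max\{2^{-h}|I|,2^h\}$. Here I would instead use the obvious containment $S_{\supseteq I'} \subseteq S_{\supseteq I}$, valid because any ancestor of $I'$ is also an ancestor of $I$ whenever $I \subseteq I'$. Since $b^\bbJ_{j'} \in S_{\supseteq I'}$ by definition, we get $b^\bbJ_{j'} \in S_{\supseteq I}$. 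The non-precedence requirement in the definition of $b^\bbJ_j$ then transfers in two steps: the condition ``no precedence constraints from $J_{\subseteq(b^\bbJ_{j'},\sfcenter(I')]}$ to $j'$'' already holds by definition of $b^\bbJ_{j'}$, implies the same condition with $\sfcenter(I')$ replaced by $\sfcenter(I)$ since $J_{\subseteq(b^\bbJ_{j'},\sfcenter(I)]} \subseteq J_{\subseteq(b^\bbJ_{j'},\sfcenter(I')]}$, and then propagates from $j'$ to $j$ using $N^-_{J^\circ}(j) \subseteq N^-_{J^\circ}(j')$. Thus $b^\bbJ_{j'}$ is a valid candidate and $b^\bbJ_j \le b^\bbJ_{j'}$.

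I do not expect a genuine obstacle: the heart of the argument is identical to the original, and the single new step, swapping the ``multiple-of-$\max\{2^{-h}|I|,2^h\}$'' check for the containment $S_{\supseteq I'} \subseteq S_{\supseteq I}$, is essentially a definitional observation. The only place where care is needed is keeping straight which interval's cutting set a value lies in, which is why I would explicitly state $S_{\supseteq I'} \subseteq S_{\supseteq I}$ as the pivotal fact before invoking it.
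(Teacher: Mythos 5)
Your proposal is correct and matches the paper's proof essentially verbatim: the same three-case split, the same reduction of case (iii) to showing $b^\bbJ_{j'} \in (\sfbegin(I),\sfcenter(I))$ is a legal candidate for $b^\bbJ_j$, and the same pivotal observation $S_{\supseteq I'} \subseteq S_{\supseteq I}$ replacing the ``integer multiple'' check from the original lemma. No gaps.
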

	\begin{proof}
		Assume $j \in J_I$ and $j' \in J_{I'}$ for some $I, I' \in \calI_\sftop[I^*]$.  The analysis for the  cases where $I$ and $I'$ are disjoint and where $I = I'$ is the same as that in the proof of Lemma~\ref{lemma:b-e-respects-precedence}.  So, as before,  we only need to consider the case that $I \subseteq \sfleft(I')$ and only need to prove $b^\bbJ_j \leq b^\bbJ_{j'}$. Again refer to Figure~\ref{fig:proof-of-time} for illustration of time points used in this proof.
		
		If $b^\bbJ_{j'} \geq \sfcenter(I)$, then we have $b^\bbJ_j \leq \sfcenter(I) \leq b^\bbJ_{j'}$ and we are done. So, assume $b^\bbJ_{j'} < \sfcenter(I)$.  Since $j \in J_I \subseteq J\left[(\sfbegin(I), \sfcenter(I')]\right]$ and $j \prec j'$, we have $b^\bbJ_{j'}  > \sfbegin(I)$ by its definition. So $b^\bbJ_{j'}$ is an integer in $S_{\supseteq I'}\cap \big(\sfbegin(I), \sfcenter(I)\big)$. By the definition of $b^\bbJ_{j'}$, there are no precedence constraints from $J\left[(b^\bbJ_{j'}, \sfcenter(I')]\right]$ to $j'$. Since $J\left[(b^\bbJ_{j'}, \sfcenter(I')]\right] \supseteq J\left[(b^\bbJ_{j'}, \sfcenter(I)]\right]$, there will be no precedence constraints from $J\left[(b^\bbJ_{j'}, \sfcenter(I)]\right]$ to $j'$, implying that there will be no such constraints to $j$ as well.   As $b^\bbJ_{j'}$ is an integer in $S_{\supseteq I'} \subseteq S_{\supseteq I}$ strictly between $\sfbegin(I)$ and $\sfcenter(I)$, we have that $b^\bbJ_j \leq b^\bbJ_{j'}$ by its definition.
	\end{proof}

	The following claim is new in the improved result:
	\begin{claim}
		\label{coro:possibilities-for-b-e}
		For every $I \in \calI_\sftop$, we have that $\big|\{b^\bbJ_j: j \in J_I\}\big| + \big|\{e^\bbJ_j: j \in J_I\}\big| \leq sh$.
	\end{claim}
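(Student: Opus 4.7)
The plan is to bound the sets $\{b^\bbJ_j: j \in J_I\}$ and $\{e^\bbJ_j: j \in J_I\}$ by noting they both live inside $(\sfbegin(I), \sfend(I)) \cap S_{\supseteq I}$, and then to bound the size of the latter set directly using Observation~\ref{obs:cutting-small} and Property~\ref{property:modified-dyadic-S}.

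By the defining constraints, for every $j \in J_I$ we have $b^\bbJ_j \in (\sfbegin(I), \sfcenter(I)] \cap S_{\supseteq I}$ and $e^\bbJ_j \in [\sfcenter(I), \sfend(I)) \cap S_{\supseteq I}$. These two intervals overlap only at the point $\sfcenter(I)$, so the sets $\{b^\bbJ_j : j \in J_I\}$ and $\{e^\bbJ_j : j \in J_I\}$ can share at most one element (namely $\sfcenter(I)$). Hence
\begin{align*}
|\{b^\bbJ_j: j \in J_I\}| + |\{e^\bbJ_j: j \in J_I\}| \leq |(\sfbegin(I), \sfend(I)) \cap S_{\supseteq I}| + 1.
\end{align*}

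To bound $|(\sfbegin(I), \sfend(I)) \cap S_{\supseteq I}|$, I would decompose it as $\bigcup_{\hat I \supseteq I, \hat I \in \calI_\sftop} \bigl((\sfbegin(I), \sfend(I)) \cap S_{\hat I}\bigr)$. By Observation~\ref{obs:cutting-small}, the only ancestors $\hat I$ that can contribute a nonempty set in this union are those at most $h-1$ levels strictly above $I$; together with $I$ itself, this gives at most $h$ relevant intervals $\hat I$. Since each $S_{\hat I}$ has at most $s - 1$ elements by Property~\ref{property:modified-dyadic-S}, we conclude $|(\sfbegin(I), \sfend(I)) \cap S_{\supseteq I}| \leq h(s-1)$.

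Combining the two bounds yields $|\{b^\bbJ_j\}| + |\{e^\bbJ_j\}| \leq h(s-1) + 1 \leq sh$ (using $h \geq 1$), which is exactly the claim. There is no real obstacle here — the proof is essentially a direct counting argument once one uses Observation~\ref{obs:cutting-small} to confine the contributing ancestors to a window of depth $h$; the only subtlety is remembering that $b^\bbJ_j$ and $e^\bbJ_j$ may both equal $\sfcenter(I)$, which costs the extra $+1$ absorbed into the final $sh$ bound.
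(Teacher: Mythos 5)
Your proof is correct and follows essentially the same route as the paper's: both observe that $\{b^\bbJ_j\}$ and $\{e^\bbJ_j\}$ live in $S_{\supseteq I} \cap (\sfbegin(I), \sfend(I))$ and can overlap only at $\sfcenter(I)$, and both bound that set by $(s-1)h$ via Observation~\ref{obs:cutting-small} and Property~\ref{property:modified-dyadic-S}. The only difference is cosmetic: you spell out the ancestor-by-ancestor decomposition that the paper leaves implicit.
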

	\begin{proof}
		By Observation~\ref{obs:cutting-small}, we have $|S_{\supseteq I} \cap (\sfbegin(I), \sfend(I))| \leq (s-1)h$. Noticing that both $\{b^\bbJ_j: j \in J_I\}$ and $\{e^\bbJ_j: j \in J_I\}$ are subsets of $S_{\supset I} \cap (\sfbegin(I), \sfend(I))$ and can only share the element $\sfcenter(I)$, we have $\big|\{b^\bbJ_j: j \in J_I\}\big| + \big|\{e^\bbJ_j: j \in J_I\}\big| \leq (s-1)h  + 1 \leq sh$.
	\end{proof}
				
	With the windows for top jobs defined, we can then define a virtually-valid schedule for the partial modified dyadic system $\bbJ$ exactly as in Definition~\ref{def:virtually-valid}.
	
	\begin{definition}
		\label{def:virtually-valid-modified} The definition of a virtually-valid schedule for $\bbJ$ is the same as Definition~\ref{def:virtually-valid}.
	\end{definition}
	 
\subsection{Conversion between Valid and Virtually-Valid Schedules}
	This section corresponds to Section~\ref{sec:conversion} for the basic algorithm.  First, we show there is a virtually valid schedule $\sigma'^*$ for the modified dyadic system $\bbJ^*$, with a small number of discarded jobs. Then we show that given any valid schedule $\sigma''$ for a modified dyadic system $\bbJ$, we can efficiently convert it to a valid schedule with a small number of extra discarded jobs. 
	
	\subsubsection{From $\sigma^*$ to a Virtually-Valid Schedule $\sigma'^*$ for $\bbJ^*$} 
	This section corresponds to Section~\ref{subsec:valid-to-virtually-valid} for the basic algorithm.
	In the analysis for the existence of $\sigma'^*$ is different from that in the basic algorithm,  we need to use the new definition of windows.  For simplicity, we use $b^*_j, b'^*_j, e^*_j$ and $e'^*_j$ for a top job $j$ in $\bbJ^*$ to denote $b^{\bbJ^*}_j, b'^{\bbJ^*}_j, e^{\bbJ^*}_j$ and $e'^{\bbJ^*}_j$ respectively. 
	
	First, we show a lemma correspondent to Lemma~\ref{lemma:in-extended-window}. 
	\begin{lemma}[Counterpart of Lemma~\ref{lemma:in-extended-window}]
		\label{lemma:in-extended-window-modified}
		For every top job $j \in J^*_I$ for some $I \in \calI_\sftop[I^*]$, we have $\sigma^*_j\in (b'^*_j, e'^*_j]$.
	\end{lemma}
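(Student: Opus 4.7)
The plan is to prove the two bounds $\sigma^*_j > b'^*_j$ and $\sigma^*_j \leq e'^*_j$ separately, mimicking the structure of the proof of Lemma~\ref{lemma:in-extended-window} but using the new definitions of $b'^*_j$ and $e'^*_j$ in terms of $S^*_{\supseteq I}$ instead of the multiplies of $\max\{2^{-h}|I|, 2^h\}$. Since $\sigma^*$ is a valid schedule for $\bbJ^*$ and $j \in J^*_I$, the interval constraint already gives $\sigma^*_j \in I$, which handles the boundary cases.

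For the lower bound, I would split into two cases according to the definition of $b'^*_j$. If $b'^*_j = \sfbegin(I)$, then $\sigma^*_j \in I$ immediately gives $\sigma^*_j > \sfbegin(I) = b'^*_j$. Otherwise, $b'^*_j \in S^*_{\supseteq I} \cap (\sfbegin(I), \sfcenter(I))$ and $b'^*_j < b^*_j$. By the minimality of $b^*_j$ in its defining condition, $b'^*_j$ must fail that condition, so there exists some $j' \in J^*_{\subseteq (b'^*_j, \sfcenter(I)]}$ with $j' \prec j$. Let $I'$ be the owning interval of $j'$, so $I' \subseteq (b'^*_j, \sfcenter(I)]$. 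Since $\sigma^*$ is a valid schedule for $\bbJ^*$, we have $\sigma^*_{j'} \in I' \subseteq (b'^*_j, \sfcenter(I)]$, and by the precedence constraint $\sigma^*_j > \sigma^*_{j'} > b'^*_j$. The upper bound $\sigma^*_j \leq e'^*_j$ is proved symmetrically: either $e'^*_j = \sfend(I)$ and we are done, or there is a successor $j''$ of $j$ with owning interval in $(\sfcenter(I), e'^*_j]$, and then $\sigma^*_j < \sigma^*_{j''} \leq e'^*_j$.

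The argument is quite direct and I do not foresee any real obstacle; the only subtle point is to correctly invoke the \emph{minimality} (resp.\ maximality) of $b^*_j$ (resp.\ $e^*_j$) combined with the fact that $b'^*_j$ is the previous element of the finite set $S^*_{\supseteq I} \cup \{\sfbegin(I)\}$ below $b^*_j$, so that no candidate in $S^*_{\supseteq I}$ sits strictly between $b'^*_j$ and $b^*_j$. That is what guarantees $b'^*_j$ itself must be disqualified as a candidate in the definition of $b^*_j$, yielding the required predecessor in $J^*_{\subseteq (b'^*_j, \sfcenter(I)]}$.
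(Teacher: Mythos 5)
Your proposal is correct and follows essentially the same route as the paper's proof: case-split on whether $b'^*_j = \sfbegin(I)$, otherwise use the minimality of $b^*_j$ to extract a predecessor $j'$ whose owning interval lies in $(b'^*_j, \sfcenter(I)]$, then combine the interval constraint for $j'$ with the precedence constraint to get $\sigma^*_j > b'^*_j$, and argue symmetrically for $e'^*_j$. The only cosmetic difference is that the paper phrases the first case as "$b^*_j$ is the smallest element of $(\sfbegin(I), \sfcenter(I)] \cap S_{\supseteq I}$" rather than "$b'^*_j = \sfbegin(I)$", which is equivalent.
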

	\begin{proof}
		We prove that $\sigma^*_j > b'^*_j$.  By the definition of $b^*_j$, either $b^*_j$ is the smallest number in $(\sfbegin(I), \sfcenter[I]] \cap S_{\supseteq I}$ or $b^*_j$ is larger than the number and there is a job $j' \in J\left[(b'^*_j, \sfcenter(I)]\right]$ such that $j' \prec j$ (since otherwise $b^*_j$ would be smaller).  In the former case, $b'^*_j = \sfbegin(I)$ and the inequality follows from Claim~\ref{claim:simple-properties-of-b-e-modified}.  In the latter case, we must have $\sigma^*_{j'} > b'^*_j$. Since $j' \prec j$, we have $\sigma^*_j >b'^*_j$. Similarly, we can show that $\sigma^*_j\leq e'^*_j$. 
	\end{proof}	
	
	Now we show there is a virtually valid schedule $\sigma'^*$ for $\bbJ^*$ with a small number of jobs discarded.
	\begin{lemma}[Counterpart of Lemma~\ref{lemma:valid-to-virtually-valid}]
		\label{lemma:valid-to-virtually-valid-modified}
		There is a virtually-valid schedule $\sigma'^*$ for $\bbJ^*$ with at most $\frac{5\epsilon T}{8}$ jobs discarded.
	\end{lemma}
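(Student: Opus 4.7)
My plan is to mirror the structure of the proof of Lemma~\ref{lemma:valid-to-virtually-valid}, but replace the uniform partitions of $\sfleft(I)$ and $\sfright(I)$ (into intervals of length $\max\{2^{-h}|I|,2^h\}$) by the non-uniform partitions $\calW_\sfL$ and $\calW_\sfR$ used to define $S^*_I$.  Bottom jobs in $\sigma'^*$ inherit their time slots from $\sigma^*$.  For each $I\in\calI_\sftop$, I handle $J^*_I\cap\sigma^{*-1}(\sfleft(I))$ and $J^*_I\cap\sigma^{*-1}(\sfright(I))$ separately and symmetrically; so I focus on the left half.  Set $\sfcap(t):=|J^*_I\cap\sigma^{*-1}(t)|$ for $t\in\sfleft(I)$, and run the same left-to-right greedy shifting procedure, except now sweeping through the intervals $W\in\calW_\sfL$ (in left-to-right order) rather than equal-length sub-intervals: at each $W$, first schedule $\min(|\tilde J|,\sfcap(W))$ queued jobs, then push $J^*_I\cap\sigma^{*-1}(W)$ onto $\tilde J$; discard whatever remains in $\tilde J$ at the end.

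The main obstacle, and the one place the argument genuinely deviates from the basic version, is verifying the window constraint $\sigma'^*_j\in(b^*_j,e^*_j]$, because the cutting points of $\calW_\sfL$ come only from $S^*_I$, whereas $b^*_j$ is chosen from the larger set $S^*_{\supseteq I}$ (it may come from some strict ancestor).  The upper bound $\sfend(W')\le\sfcenter(I)\le e^*_j$ is immediate.  For the lower bound I plan to establish the key sub-claim: \emph{if $\sigma^*_j\in W$ for some $W\in\calW_\sfL$, then $\sfend(W)\ge b^*_j$.}  Suppose not, i.e.\ $\sfend(W)<b^*_j$.  Since $\sfend(W)$ is a cutting point of $\calW_\sfL$ we have $\sfend(W)\in S^*_I\cup\{\sfcenter(I)\}\subseteq S^*_{\supseteq I}$, and since $b'^*_j$ is by definition the largest element of $S^*_{\supseteq I}\cap(\sfbegin(I),\sfcenter(I))\cup\{\sfbegin(I)\}$ strictly less than $b^*_j$, we would get $\sfend(W)\le b'^*_j$, hence $\sigma^*_j\le\sfend(W)\le b'^*_j$, contradicting Lemma~\ref{lemma:in-extended-window-modified}.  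Since $j$ is added to $\tilde J$ only after the scheduling step of iteration $W$, it is placed in some strictly later $W'$, giving $\sigma'^*_j>\sfbegin(W')\ge\sfend(W)\ge b^*_j$ as required.

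For counting discards in the left half of $I$, the procedure leaves at most $\max_{W\in\calW_\sfL}|J^*_I\cap\sigma^{*-1}(W)|$ jobs undone (the standard invariant that $|\tilde J|$ never exceeds the largest per-interval load).  By the defining property of $\calW_\sfL$, every $W\in\calW_\sfL$ either has length $\max\{2^{-h}|I|,2^h\}$ (so contributes at most $m\max\{2^{-h}|I|,2^h\}$ jobs) or has $|J^*_I\cap\sigma^{*-1}(W)|\le\frac{\epsilon}{8m}|J^*_I|$, so the maximum is at most $m\max\{2^{-h}|I|,2^h\}+\frac{\epsilon}{8m}|J^*_I|$.  Handling the right half symmetrically and summing over all $I\in\calI_\sftop$ yields a total discard bound
\[
2m\sum_{I\in\calI_\sftop}\max\{2^{-h}|I|,2^h\}+\frac{\epsilon}{4m}\sum_{I\in\calI_\sftop}|J^*_I|.
\]
The second sum is at most $\frac{\epsilon}{4m}\cdot mT=\frac{\epsilon T}{4}$.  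Splitting the first sum by level and using the definitions $2^h\ge 16m\log T/\epsilon$ and $2^{h'}\ge 16m/\epsilon$: the levels with $2^{-h}|I|\ge 2^h$ contribute at most $2mL\cdot 2^{-h}T\le\frac{\epsilon T}{8}$, and the remaining top levels contribute at most $2m\cdot 2^{L+h-h'}=2mT/2^{h'}\le\frac{\epsilon T}{8}$.  Altogether the bound is $\frac{\epsilon T}{8}+\frac{\epsilon T}{8}+\frac{\epsilon T}{4}=\frac{\epsilon T}{2}\le\frac{5\epsilon T}{8}$, completing the proof.
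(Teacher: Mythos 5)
Your proof is correct and follows essentially the same route as the paper's: bottom jobs keep their $\sigma^*$ slots, and each half of each top interval is handled by the same left-to-right queue-and-shift procedure over a non-uniform partition, with discards bounded by the maximum per-cell load via Claim~\ref{claim:properties-of-Sstar-I}. The one deviation is that you partition $\sfleft(I)$ by the cutting points of $S^*_I$ alone ($\calW_\sfL$), whereas the paper uses the finer partition induced by $S^*_{\supseteq I}$, so that $b'^*_j$ is automatically a cell boundary and the window check is immediate; your extra sub-claim (any boundary $\sfend(W)<b^*_j$ lies in $S^*_{\supseteq I}$ and hence is at most $b'^*_j$, contradicting $\sigma^*_j>b'^*_j$ from Lemma~\ref{lemma:in-extended-window-modified}) correctly compensates for the coarser partition, and your discard accounting matches the paper's, even coming in at $\frac{\epsilon T}{2}\le\frac{5\epsilon T}{8}$.
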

	\begin{proof} 
		Again as in the proof of Lemma~\ref{lemma:valid-to-virtually-valid}, we schedule the bottom jobs in ${\sigma'^*}$ in exactly the same way as they are in $\sigma^*$. It remains to show how to schedule  top jobs in $\bbJ^*$ to satisfy capacity and window constraints. 

		Similarly we fix some $I \in \calI_\sftop$ and focus on the set $J^*_I$ of jobs assigned to $I$ in $\bbJ^*$. We schedule $J^*_I \cap \sigma^{*-1}(\sfleft(I))$ in $\sfleft(I)$ and $J^*_I \cap \sigma^{*-1}(\sfright(I))$ in $\sfright(I)$ in the schedule $\sigma'^*$. We only show how to schedule $J^*_I \cap \sigma^{*-1}(\sfleft(I))$. We define $\sfcap(t) := |J^*_I  \cap {\sigma^{*-1}}(t) \big|$ to be number of available slots at time $t$, for every $t \in \sfleft(I)$. In $\sigma'^*$,  we only schedule $J^*_I \cap {\sigma^{*-1}}(\sfleft(I))$ using the available slots.
		
		Here comes a difference between this proof and the proof of Lemma~\ref{lemma:valid-to-virtually-valid}: We let $\calC$ be the partition of $\sfleft(I)$ using the points in  $(\sfbegin(I), \sfcenter(I)) \cap S^*_{\supseteq I}$. For every  $I' \in \calC$, let $\sfcap(I') := \sum_{t \in I'}\sfcap(t) =  \big|J^*_I  \cap {\sigma^{*-1}}(I')\big|$ be the number of available slots in $I'$.  Then the procedure for scheduling  jobs $J^*_I  \cap {\sigma^{*-1}}(\sfleft(I))$ will be the same as that in the proof of Lemma~\ref{lemma:valid-to-virtually-valid}, except that we use a different $\calC$.  Initially, let $\tilde J \gets \emptyset$. For every $I' \in \calC$ from left to right, we do the following: schedule $\min\{|\tilde J|, \sfcap(I')\}$  jobs in $\tilde J$ using the $\sfcap(I')$ available slots in $I'$, remove the scheduled  jobs from $\tilde J$, and add the $\sfcap(I')$ jobs $J^*_I  \cap {\sigma^{*-1}}(I')$ to $\tilde J$. Then we discard $\tilde J$ in the end.  Notice that if some $j \in J^*_I  \cap {\sigma^{*-1}}(I'), I' \in \calC$ is scheduled in ${\sigma'^*}$, then it must be scheduled in some $I'' \in \calC$ to the right of $I'$.  The window constraint for $j$ will be satisfied since by Lemma~\ref{lemma:in-extended-window-modified} we have $b'^*_{j} \leq \sfbegin(I')$, which implies $b^*_{j} \leq \sfbegin(I'')$, and $e^*_j  \geq \sfcenter(I) \geq \sfend(I'')$. 
		
		Again we can show that the number of discarded jobs is the maximum of $\sfcap(I')$ over all $I' \in \calC$.  Notice that by Property~(\ref{claim:properties-of-Sstar-I}b), this is at most $\max\big\{2^{-h}|I|m, 2^hm, \frac{\epsilon|J^*_I|}{8m}\big\}$. Considering jobs in $J^*_I \cap \sigma^{*-1}(\sfright(I))$, we discarded at most $\max\big\{2^{1-h}|I|m, 2^{1+h}m, \frac{\epsilon|J^*_I|}{4m}\big\} \leq 2^{1-h}|I|m + 2^{1+h}m +  \frac{\epsilon|J^*_I|}{4m}$ jobs in $J^*_I$. 
		
		We then bound $\sum_{I \in \calI_\sftop}2^{1-h}|I|m + \sum_{I \in \calI_\sftop}2^{1+h}m + \sum_{I \in \calI_\sftop} \frac{\epsilon|J^*_I|}{4m}$. As in the proof of Lemma~\ref{lemma:virtually-valid-to-valid}, the first term is at most $2^{1-h}mLT \leq \frac{2mLT\cdot\epsilon}{16m\log T} \leq \frac{\epsilon T}{8}$, the second term is at most $4mT\cdot 2^{-h'} \leq \frac{4mT\cdot \epsilon}{16m} = \frac{\epsilon T}{4}$.  The third term is at most $\frac{\epsilon|J^\circ|}{4m} \leq \frac{\epsilon mT}{4m} \leq \frac{\epsilon T}{4}$. Thus, the sum of the three terms is at most $\frac{5\epsilon T}{8}$.
	\end{proof}
	
	To obtain a better running time, we need the virtually-valid schedule for $\bbJ^*$ to satisfy more property stated in the following lemma. 
	\begin{lemma}
			\label{lemma:S''-star}
			There is a virtually valid schedule $\sigma''^*$ for $\bbJ^*$ with at most $\frac{3\epsilon T}{4}$ jobs discarded. Moreover, 
			\begin{enumerate}[label=(\ref{lemma:S''-star}\alph*)]
				\item for every $ I^* \in \calI_{\sftop} \cup \calI_{\sfmid}$ and $W \subseteq I^*$, we have $|\{j \in J^*_{\supsetneq I^*} \cap \sigma''^{*-1}(I^*): (b^*_j, e^*_j] \cap I^* = W\}|$ is an integer multiply of $\rho|I^*|$.
			\end{enumerate}
	\end{lemma}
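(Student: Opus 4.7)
The plan is to start from the virtually-valid schedule $\sigma'^*$ guaranteed by Lemma~\ref{lemma:valid-to-virtually-valid-modified}, which already has at most $\frac{5\epsilon T}{8}$ discarded jobs, and to further discard at most $\frac{\epsilon T}{8}$ additional jobs in order to enforce property~(\ref{lemma:S''-star}a), giving $\sigma''^*$ with the required total of at most $\frac{3\epsilon T}{4}$ discards.

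The enforcement is done level-by-level from the top of $\bfT$ downward. For each $I^* \in \calI_\sftop \cup \calI_\sfmid$ and each window pattern $W$ that arises as $(b^*_j, e^*_j] \cap I^*$ for some $j \in J^*_{\supsetneq I^*} \cap \sigma^{-1}(I^*)$, if the current count $c(I^*, W) := |\{j \in J^*_{\supsetneq I^*} \cap \sigma^{-1}(I^*): (b^*_j, e^*_j] \cap I^* = W\}|$ is not a multiple of $\rho|I^*|$, we discard $c(I^*, W) \bmod \rho|I^*|$ jobs from that class.  The crucial combinatorial bound is that for each $I^*$, the number of such patterns is at most $(sh+1)^2 = O((sh)^2)$: indeed, each endpoint of $W$ is either an endpoint of $I^*$ or lies in $S^*_{\supseteq I^*} \cap (\sfbegin(I^*), \sfend(I^*))$, and by Observation~\ref{obs:cutting-small} this latter set has size at most $(s-1)h$, since only $S^*_{\hat I}$ for $\hat I$ at most $h-1$ levels above $I^*$ can contribute points lying inside $(\sfbegin(I^*), \sfend(I^*))$.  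Summing the per-class rounding cost $\rho|I^*|$ over all $I^* \in \calI_\sftop \cup \calI_\sfmid$ and their $O((sh)^2)$ window classes gives a total of $O((sh)^2 \cdot \rho \cdot L T)$ additional discards; substituting $\rho \leq \epsilon/(Lsh)^2$ yields $O(\epsilon T/L)$, comfortably within the $\epsilon T/8$ budget for $T$ sufficiently large relative to $m$ and $1/\epsilon$.

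The main obstacle is the cascading effect: when we discard a job $j \in J^*_{\supsetneq I^*}$ whose owning interval $I_j$ lies strictly above $\text{parent}(I^*)$, the same discard also decreases the count at every interval $I'$ along the chain from a child of $I_j$ down to the bottom interval containing $\sigma_j$, and thus can break divisibility at intervals strictly above $I^*$ that were already processed.  The plan for handling this is to prefer, whenever possible, discarding jobs whose owning interval is exactly $\text{parent}(I^*)$, since such discards affect only $I^*$ and strictly deeper intervals (which will be processed later, so their divisibility is not yet fixed).  When parent-owned jobs in the $(I^*, W)$ class are insufficient to complete the rounding, we fall back to jobs with higher owning intervals and perform a subsequent top-down re-rounding pass to restore divisibility at the disturbed ancestor levels.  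Since each discard affects at most $L$ counts and the per-pass rounding budget is already only $O(\epsilon T/L)$, an amortized argument shows that the extra discards over all re-rounding passes still sum to $\leq \epsilon T/8$, yielding the claimed $\sigma''^*$.
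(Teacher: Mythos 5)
There is a genuine gap, and it sits exactly at the crux of the lemma. Your first pass is fine: the count of $O((sh)^2)$ window classes per interval, the per-class rounding cost of at most $\rho|I^*|$, and the resulting bound $O(\rho(sh)^2LT)\le O(\epsilon T/L)$ all check out. But the top-down order forces you to confront the cascading problem you yourself identify, and the proposed repair does not work. A job $j$ in the class at $I^*$ with owning interval $I_j\supsetneq \mathrm{parent}(I^*)$ is simultaneously counted in the already-finalized classes at \emph{every} intermediate ancestor $I'$ with $I^*\subsetneq I'\subsetneq I_j$; discarding it shifts each of those counts by $1$ modulo $\rho|I'|$. Restoring divisibility at a disturbed ancestor $I'$ is not a unit-cost operation: since jobs cannot be added back, you must discard enough further jobs to reach the next multiple of $\rho|I'|$, i.e.\ up to $\rho|I'|-1$ jobs, which for an ancestor $k$ levels up is about $2^k$ times your original per-class budget at $I^*$ — and those repair discards in turn disturb \emph{their} ancestors. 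Your amortization ("each discard affects at most $L$ counts, and the per-pass budget is $O(\epsilon T/L)$") tacitly assumes the repair cost per disturbed count is $O(1)$, which it is not; neither termination nor a total budget of $\epsilon T/8$ follows. The "prefer parent-owned jobs" heuristic does not close this, because the class at $I^*$ may contain few or no jobs owned by $\mathrm{parent}(I^*)$, and any top-down scheme that discards a job owned strictly above $\mathrm{parent}(I^*)$ must remove from each affected ancestor class a number of jobs that is a multiple of $\rho|I'|\ge 2\rho|I^*|$ — more than you are allowed to discard at $I^*$ in the first place.

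The paper resolves this by reversing the order and strengthening the invariant. It proves a per-owning-interval condition: for each fixed $\hat I$ and each window $W$, the quantity $|J^*_{\hat I}(W)\cap\sigma''^{*-1}(I^*)|$ is a multiple of $\rho|I^*|$ for every strict descendant $I^*$ of $\hat I$ down to the level $L'$ where $\rho|I^*|=1$; property (\ref{lemma:S''-star}a) then follows by summing over ancestors $\hat I$ and windows $W'$ with $W'\cap I^*=W$. Processing descendants of $\hat I$ bottom-up, the induction hypothesis for the two children already forces the count at $I^*$ to be a multiple of $\rho|I^*|/2$, so at most $\rho|I^*|/2$ jobs need to be discarded at $I^*$; these are then distributed recursively to sub-intervals $I$ in batches that are multiples of $\rho|I|$, so the divisibility already established below $I^*$ is preserved and no repair passes are ever needed. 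The total is $\rho L|\hat I|/2$ per $(\hat I,W)$, hence $\rho(Lsh)^2T/8\le\epsilon T/8$ overall. To make your argument rigorous you would need to adopt essentially this bottom-up, multiple-preserving distribution scheme; the top-down rounding with ad hoc re-rounding passes is not a proof.
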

	The lemma says that if we consider all the jobs $j$ assigned to strict ancestors of $I^*$ and scheduled in $I^*$ in $\sigma''^*$, and group them according to $(b^*_j, e^*_j] \cap I^*$, then the cardinality of each group is an integer multiply of $\rho|I^*|$; recall that $\frac{\epsilon}{2(Lsh)^2} < \rho \leq \frac{\epsilon}{(Lsh)^2}$ is an integer power of $2$.
		\begin{proof}[Proof of Lemma~\ref{lemma:S''-star}]
			Define $L' = \log T - \log (1/\rho)$. So for an interval $I \in \calI_{L'}$, we have $\rho |I| = \rho T 2^{-L'} = 1$. Then we only need to guarantee Property~(\ref{lemma:S''-star}a) for $I \in \calI_{<L'}$.  For every interval $\hat I \in \calI_{< L'}$, every interval $W \subseteq \hat I$, let $J^*_{\hat I}(W):=\{j \in J^*_{\hat I}: (b^*_j, e^*_j] = W\}$. It suffices to guarantee the following condition (*): 
			\begin{enumerate}[label=(*)]
				\item For every $\hat I \in \calI_{\leq L'-1}, W \subseteq \hat I$ and strict descendant $I^* \in \calI_{\leq L'}$ of $\hat I$, we have $|J^*_{\hat I}(W) \cap \sigma''^{*-1}(I^*)|$ is an integer multiply of $\rho|I^*|$.
			\end{enumerate}
			To see why (*) implies Property~(\ref{lemma:S''-star}a), notice that the set $\{j \in J^*_{\supsetneq I^*} \cap \sigma''^{*-1}(I^*): (b^*_j, e^*_j) \cap I^* = W\}$ is the disjoint union of $J^*_{\hat I}(W') \cap \sigma''^{*-1}(I^*)$ over all $\hat I \in \calI, \hat I \supsetneq I$ and $W' \subseteq \hat I$ with $W' \cap I^* = W$.

			We shall show how to construct schedule $\sigma''^*$ so as to satisfy (*), by discarding some top jobs in $\sigma'^*$. Let $\sigma''^* = \sigma'^*$ initially.  We fix any $\hat I \in \calI_{<L'}$ and a sub-interval $W \subseteq \hat I$ such that $J^*_{\hat I}(W) \neq \emptyset$. For every strict descendant $I^* \in \calI_{\leq L'}$ of $\hat I$ from bottom to top, we guarantee (*) for $I^*$ (and the fixed $\hat I$ and $W$) in that order.  The condition holds for intervals $I^*$ at level $L'$ since for such intervals we have $\rho|I^*| = 1$.  Now focus on an interval $I^*$ above level $L'$ and assume (*) holds for all descendants $I$ of $I^*$. In particular, $|J^*_{\hat I}(W) \cap \sigma''^{*-1}(\sfleft(I^*))|$ and $|J^*_{\hat I}(W) \cap \sigma''^{*-1}(\sfright(I^*))|$ are integer multiplies of $\rho|I^*|/2$, and so is $|J^*_{\hat I}(W) \cap \sigma''^{*-1}(I^*)|$. If $|J^*_{\hat I}(W) \cap \sigma''^{*-1}(I^*)|$ is not an integer multiply of $\rho|I^*|$, we need to discard $\rho|I^*|/2$ jobs in the set without violating condition (*) for descendants of $I^*$. This can be done using a simple recursive procedure: we distributed  jobs that we need to discard recursively to sub-intervals $I$ of $I^*$ in the tree $\bfT$, guaranteeing the invariant that the number of jobs we discard in $I$ is an integer multiply of $\rho|I|$ less than or equal to $|J^*_{\hat I}(W) \cap \sigma''^{*-1}(I)|$.  The recursions stop at level $L'$ for which we can discard the specified number of jobs directly. 
			
			
			We count how many jobs we discarded. First fix a $\hat I$ and $W$. To guarantee (*) for $I^* \in \calI[\hat I] \cap \calI_{\geq L'}$, we discarded at most $\rho|I^*|/2$ jobs. So, summing up the bound over all $I^*$, we discarded at most $\rho L |\hat I|/2$ jobs.  For a fixed $\hat I$, by Corollary~\ref{coro:possibilities-for-b-e}, there are at most $(sh)^2/4$ different windows $W$ for which $J^*_{\hat I}(W) \neq \emptyset$.  Therefore, summing up the bound over all $W$ gives us an upper bound of $\rho L(sh)^2|\hat I|/8$. Then summing up the bound over all $\hat I \in \calI_{\leq L'-1}$, we have that we discarded at most $\rho (Lsh)^2 T/8 \leq \frac{\epsilon}{(Lsh)^2} \cdot (Lsh)^2 T/8 \leq \frac{\epsilon T}{8}$ jobs from $\sigma'^*$ to $\sigma''^*$, by the definition of $\rho$. Thus, counting the discarded jobs in $\sigma'^*$, the number of discarded jobs in $\sigma''^*$ is at most $\frac{5\epsilon T}8 + \frac{\epsilon T}{8} = \frac{3\epsilon T}{4}$. 
		\end{proof}
	
	\subsubsection{Converting a Virtually-Valid Schedule to a Valid One}
	This section corresponds to Section~\ref{subsec:virtually-valid-to-valid} for the basic algorithm: we show that given a modified dyadic system $\bbJ = ((J_I)_{I \in \calI}, (S_I)_{I \in \calI_\sftop})$ and a virtually-valid schedule $\sigma''$ for $\bbJ$, we can efficiently construct a valid schedule $\sigma$ for $\bbJ$ with a small number of extra discarded jobs.  Almost all the arguments in Section~\ref{subsec:virtually-valid-to-valid} still hold, since we are not using the properties of $b^\bbJ_j$'s and $e^\bbJ_j$'s other than those stated in Claim~\ref{claim:simple-properties-of-b-e-modified} and Lemma~\ref{lemma:b-e-respects-precedence-modified}.    For Lemma~\ref{lemma:virtually-valid-to-valid}, we need to use the bound in Claim~\ref{coro:possibilities-for-b-e} for the number of different $b^\bbJ_j$ and $e^\bbJ_j$ values in a set $J_I$. This gives us a better bound, allowing us to use a larger $\delta$ and thus a smaller $p$. 
	
	We define $J^{\#}$, $J^{\#}_I$ for $I \in \calI_\sftop$, $I(j)$ for $j \in J^{\#}$, $\side_{\sigma'}$ for a schedule $\sigma'$ and $\sigmaless$ in the same way as they were Section~\ref{subsec:virtually-valid-to-valid}.  Lemma~\ref{lemma:canonicalize} still holds with an identical proof (except we need to refer to new versions of claims, lemmas, definitions and properties):
	\begin{lemma}[copy of Lemma~\ref{lemma:canonicalize}]
		\label{lemma:canonicalize-modified}
		We can efficiently find another virtually-valid schedule $\sigma'$ for $\bbJ$ with $\sigma'^{-1}(\discarded) = \sigma''^{-1}(\discarded)$ and $\sigma'_j = \sigma''_j$ for every bottom job $j$ in $\bbJ$. Moreover, for every two jobs $j, j' \in J^{\#}$, the following holds. 
		\begin{enumerate}[label=(\ref{lemma:canonicalize-modified}\alph*)]
			\item If $j \prec j'$, then $\sigma'_{j} \leq \sigma'_{j'}$.
			\item If $j \sigmaless j'$, then $\sigma'_{j} \leq \sigma'_{j'}$.
		\end{enumerate}
	\end{lemma}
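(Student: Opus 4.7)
My plan is to mimic the proof of Lemma~\ref{lemma:canonicalize} almost verbatim, invoking the modified versions of the supporting results where appropriate. Initialize $\sigma' \gets \sigma''$ and repeatedly perform a swap: whenever there exist $j, j' \in J^{\#}$ violating (\ref{lemma:canonicalize-modified}a) or (\ref{lemma:canonicalize-modified}b), exchange $\sigma'_j$ and $\sigma'_{j'}$. The bottom-job assignments are never touched, so precedence and interval constraints for bottom jobs, as well as the capacity constraints, are preserved throughout. I need to argue two things: each swap keeps $\sigma'$ virtually-valid, and the procedure terminates in polynomial time.

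For virtual-validity after a swap, the only constraints that could break are the window constraints for the swapped top jobs. If (\ref{lemma:canonicalize-modified}a) is violated by $j \prec j'$ with $\sigma'_j > \sigma'_{j'}$, then by Lemma~\ref{lemma:b-e-respects-precedence-modified} we have $b^\bbJ_j \leq b^\bbJ_{j'}$ and $e^\bbJ_j \leq e^\bbJ_{j'}$, so together with the current window constraints we get $b^\bbJ_j \leq b^\bbJ_{j'} < \sigma'_{j'} < \sigma'_j \leq e^\bbJ_j \leq e^\bbJ_{j'}$, and the swap keeps both jobs inside their windows. If (\ref{lemma:canonicalize-modified}b) is violated, the two jobs $j, j'$ share the same owning interval $I$ and the same side, so Lemma~\ref{lemma:b-e-respects-precedence-modified} together with the definition of $\sigmaless$ again yields chained inequalities that certify both swapped values still lie in the respective windows. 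Note that although $b^\bbJ_j$ and $e^\bbJ_j$ are no longer multiples of $\max\{2^{-h}|I|,2^h\}$ but rather points of $S_{\supseteq I}$, this plays no role in the swap argument; we use only Claim~\ref{claim:simple-properties-of-b-e-modified} and Lemma~\ref{lemma:b-e-respects-precedence-modified}.

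For termination, I will define the same three-component potential $\dif = (\dif_1, \dif_2, \dif_3)$ as in the original proof, where $\dif_1 = \sum_{j \in J^{\#}} |I(j)|\cdot|\sigma'_j-\sfcenter(I(j))|$, $\dif_2$ is the number of inversions in $\side_{\sigma'}$ with respect to $\prec$, and $\dif_3$ is the number of inversions in $\sigma'$ with respect to $\sigmaless$. A case analysis identical to the one in the proof of Lemma~\ref{lemma:canonicalize} shows the lexicographic rank strictly drops: the three subcases $I(j)\subsetneq \sfleft(I(j'))$, $I(j')\subsetneq\sfright(I(j))$, and $I(j)=I(j')$ for a precedence violation use the monotonicity of $|\cdot - \sfcenter|$ and Lemma~\ref{lemma:reduce-inversions} exactly as before, and a same-side violation of (\ref{lemma:canonicalize-modified}b) is handled via $\dif_3$ and Lemma~\ref{lemma:reduce-inversions}. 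Since $\dif_1, \dif_2, \dif_3$ are non-negative integers bounded by $\poly(n)$, the procedure halts after $\poly(n)$ swaps.

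The only place where the new definitions interact with the argument is in checking the chained window inequalities above, and the main (mild) obstacle is verifying that the refined windows $(b^\bbJ_j, e^\bbJ_j]$ from $S_{\supseteq I}$ still satisfy the monotonicity needed for the precedence case (A): here Lemma~\ref{lemma:b-e-respects-precedence-modified} is precisely what is required, so no new idea beyond transcribing the original proof with the modified citations is needed.
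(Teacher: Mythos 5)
Your proposal is correct and is essentially the paper's own argument: the paper states that Lemma~\ref{lemma:canonicalize-modified} "holds with an identical proof" to Lemma~\ref{lemma:canonicalize}, modulo substituting Claim~\ref{claim:simple-properties-of-b-e-modified} and Lemma~\ref{lemma:b-e-respects-precedence-modified} for their originals, which is precisely the transcription you carry out. You also correctly identify that the change in how windows are defined (endpoints drawn from $S_{\supseteq I}$ rather than multiples of $\max\{2^{-h}|I|,2^h\}$) plays no role in the swap and potential-function arguments.
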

	

	We state the counterpart of Lemma~\ref{lemma:virtually-valid-to-valid} and show the difference in the new proof:
	\begin{lemma}[Counterpart of Lemma~\ref{lemma:virtually-valid-to-valid}] 
		\label{lemma:virtually-valid-to-valid-modified}
		Given a virtually-valid schedule $\sigma'$ for $\bbJ$ satisfying  conditions in Lemma~\ref{lemma:canonicalize-modified}, we can efficiently construct a valid schedule $\sigma$ for $\bbJ$ with $|\sigma(\discarded) \setminus \sigma'(\discarded)| \leq \frac{\epsilon T}{4}$.
	\end{lemma}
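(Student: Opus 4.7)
The plan is to follow the same template as the proof of Lemma~\ref{lemma:virtually-valid-to-valid}, with the crucial change being to exploit the improved bound in Corollary~\ref{coro:possibilities-for-b-e} on the number of distinct window endpoints per top interval, which is $sh$ rather than $2^h$. I will construct $\sigma$ by first copying $\sigma'$ on bottom jobs, then forcing each top job $j \in J^{\#}$ to remain in the same bottom interval $I' \in \calI_\sfbot$ as the one containing $\sigma'_j$ (or to be discarded). Since $b^\bbJ_j, e^\bbJ_j$ belong to $S_{\supseteq I(j)}$ and all elements of each $S_{\hat I}$ are integer multiples of $\max\{2^{-h}|\hat I|, 2^h\}$, both endpoints are multiples of $2^h$, so any bottom interval intersecting $(b^\bbJ_j, e^\bbJ_j]$ is contained in it; this gives interval constraints for top jobs for free.

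Claim~\ref{claim:simple-properties-of-b-e-modified} then handles precedence constraints between top and bottom jobs (any successor of $j$ lives to the right of $e^\bbJ_j$, any predecessor to the left of $b^\bbJ_j$, and $I' \subseteq (b^\bbJ_j, e^\bbJ_j]$), and Property~(\ref{lemma:canonicalize-modified}a) handles precedence between top jobs in different bottom intervals. Thus the only remaining issue is precedence among top jobs scheduled in the \emph{same} bottom interval $I'$, which I resolve by applying Lemma~\ref{lemma:list-scheduling} on each $I' \in \calI_\sfbot$ to the set $J^{\#} \cap \sigma'^{-1}(I')$, incurring at most $m\Delta(J^{\#} \cap \sigma'^{-1}(I'))$ discards per bottom interval. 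The remaining task is to show $\sum_{I' \in \calI_\sfbot} \Delta(J^{\#} \cap \sigma'^{-1}(I')) \leq \frac{\epsilon T}{4m}$.

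To bound this sum, I fix $I \in \calI_\sftop$ and split $J^{\#}_I$ into left-scheduled and right-scheduled jobs. By Property~(\ref{lemma:canonicalize-modified}b), the bottom intervals of $\calI_\sfbot[\sfleft(I)]$ (ordered left-to-right) respect the order $\langle b^\bbJ_j, \depth_{J^{\#}_I}(j)\rangle$ on left-scheduled jobs, and similarly for the right side with $e^\bbJ_j$. Applying Lemma~\ref{lemma:length-add-up} once with $c(j)=b^\bbJ_j$ on the left and once with $c(j)=e^\bbJ_j$ on the right, and crucially observing via Corollary~\ref{coro:possibilities-for-b-e} that the \emph{combined} size of the value sets satisfies $|\{b^\bbJ_j: j \in J^{\#}_I\}| + |\{e^\bbJ_j: j \in J^{\#}_I\}| \leq sh$, I obtain
\begin{align*}
\sum_{I' \in \calI_\sfbot[I]} \Delta(J^{\#}_I \cap \sigma'^{-1}(I')) \;\leq\; sh\cdot\Delta(J^{\#}_I) + 2^{-h}|I| \;\leq\; sh(\delta|J_I| + \delta'|I|) + 2^{-h}|I|,
\end{align*}
using also Property~\ref{property:dyadic-chain-length} of the modified dyadic system.

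Summing over $I \in \calI_\sftop$, and using $\sum_{I \in \calI_\sftop}|J_I| \leq mT$ and $\sum_{I \in \calI_\sftop}|I| \leq (L-h')T \leq LT \leq T\log T$, gives three terms: $sh\delta mT = \epsilon T/(8m)$ by the choice $\delta = \epsilon/(8shm^2)$; $sh\delta'\cdot LT \leq LT/2^h \leq \epsilon T/(16m)$ since $\delta' = 1/(2^h sh)$ and $2^h \geq 16m\log T/\epsilon$; and $2^{-h}\cdot LT \leq \epsilon T/(16m)$ for the same reason. Adding them gives $\epsilon T/(4m)$, so the total number of discarded jobs is at most $m\cdot \epsilon T/(4m) = \epsilon T/4$, as required. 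The main potential obstacle is ensuring that using the joint bound $|Z_L|+|Z_R|\leq sh$ from Corollary~\ref{coro:possibilities-for-b-e} is tight enough—which it is, provided we apply Lemma~\ref{lemma:length-add-up} to the two sides separately and only combine afterwards; applying $|Z_L|\leq sh$ and $|Z_R|\leq sh$ individually would lose a factor of two and spoil the final budget.
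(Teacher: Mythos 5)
Your proof is correct and follows essentially the same route as the paper's: it reuses the structure of the proof of Lemma~\ref{lemma:virtually-valid-to-valid} (per-bottom-interval list scheduling, Lemma~\ref{lemma:length-add-up} applied separately to the left- and right-scheduled jobs of each top interval) and replaces the $2^{h-1}$ counts by $A_{I,\sfL}=|\{b^\bbJ_j\}|$ and $A_{I,\sfR}=|\{e^\bbJ_j\}|$ with the joint bound $A_{I,\sfL}+A_{I,\sfR}\le sh$ from Claim~\ref{coro:possibilities-for-b-e}, exactly as the paper does. The final arithmetic with $\delta=\frac{\epsilon}{8shm^2}$ and $\delta'=\frac{1}{2^h sh}$ matches the paper's and correctly yields the $\frac{\epsilon T}{4}$ budget.
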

	\begin{proof}
		We only give the difference between this proof and the proof of Lemma~\ref{lemma:virtually-valid-to-valid}. We can replace Inequality~\eqref{inequ:Delta-left} in the proof to
		\begin{align*}
			&\quad \sum_{I' \in \calI_\sfbot[\sfleft(I)]}\Delta(J^{\#}_I\cap \sigma'^{-1}(I'))
			\leq A_{I,\sfL}(\delta|J_I|+\delta'|I|) + 2^{-{h-1}}|I|,
		\end{align*}
		where $A_{I, \sfL} = |\{b^\bbJ_j: j \in J^{\#}_I\}|$ is the number of different $b^\bbJ_j$ values for $j \in J^{\#}_I$. Similarly, we can prove $\sum_{I' \in \calI_\sfbot[\sfright(I)]}\Delta(J^{\#}_I\cap \sigma'^{-1}(I')) \leq A_{I,\sfR}(\delta|J_I|+\delta'|I|) + 2^{-{h-1}}|I|$, where $A_{I, \sfR} = |\{e^\bbJ_j: j \in J^{\#}_I\}|$.  By Corollary~\ref{coro:possibilities-for-b-e} we have $A_{I, \sfL} + A_{I, \sfR} \leq sh$. Thus, 
		\begin{align*}
			\sum_{I' \in \calI_\sfbot[I]}\Delta(J^{\#}_I\cap \sigma'^{-1}(I'))
			\leq sh(\delta|J_I|+\delta'|I|) + 2^{-h}|I|,
		\end{align*}
		
		Using the same argument as in the proof of Lemma~\ref{lemma:virtually-valid-to-valid}, we can show that the number of extra jobs we discarded from $\sigma'$ to $\sigma$ is at most  $m\sum_{I' \in \calI_\sfbot}\Delta\left(J^{\#}\cap \sigma'^{-1}(I')\right) \leq sh\delta m^2T+ (sh\delta'+2^{-h})mLT = shm^2T\cdot  \frac{\epsilon}{8shm^2} + 2\cdot 2^{-h} mLT = \frac{\epsilon T}{8} + 2\cdot \frac{\epsilon}{16m \log T}\cdot mLT \leq \frac{\epsilon T}{8} + \frac{\epsilon T}{8} = \frac{\epsilon T}{4}$.
	\end{proof}

\section{Improved Algorithm: Recursive Algorithm for Constructing Modified Dyadic System and Virtually-Valid Schedule}
\label{sec:improved-algo}

		This section corresponds to Section~\ref{sec:algo} for the basic algorithm. In order to define our new algorithm, we need  one more definition.
		\begin{definition}
			Given an interval $I \in \calI_\sftop$, and a set $S$ of integers, we define $\calI^\cut_S[I]$ to be the set of intervals $I' \in \calI[I]$ with $(\sfbegin(I'), \sfend(I')) \cap S \neq \emptyset$.  Define $\calI^\uncut_S[I] = \calI[I] \setminus \calI^\cut_S[I]$ to be the set of intervals $I' \in \calI[I]$ with $(\sfbegin(I'), \sfend(I')) \cap S = \emptyset$. Let $\calI^\maxuncut_S[I]$ be the set of inclusion-wise maximal intervals in $\calI^\uncut_S[I]$. 
		\end{definition}
		
		Thus, intervals in $\calI^\cut_S[I]$ are ``cut'' by points in $S$ and intervals in $\calI^\uncut_S[I]$ are not cut by points in $S$.   An interval $I$ is in $\calI^\maxuncut_S[I]$ if it is in $\calI^\uncut_S[I]$ but its parent is not, or $I=I^* \in \calI^\uncut_S[I]$.  It is easy to see that $\calI^\maxuncut_S[I]$ is a partition of $I$.


		
		With the definitions, we can now describe our algorithm $\schedule\mhyphen\modified$ (described in Algorithm~\ref{alg:schedule-modified}) for the improved running time. We skip some arguments if they are easy generalizations of the counterparts for the basic algorithm $\schedule$. In the input, we are given $I^*, {J_\anc}, b^\anc, e^\anc \in [0, T]^{J_\anc}$ as before. We are also given a set ${S_\anc}$ of integers, a set $J_I$ for every $I \in \calI^\cut_{{S_\anc}}[I^*]$ and a set $K_I$ for every $I \in \calI^\maxuncut_{{S_\anc}}[I^*]$.   Thus, other than the set $S_\anc$, a key difference between $\schedule\mhyphen\modified$ and $\schedule$ is that in $\schedule\mhyphen\modified$ we do not have the $J_I$ information for all intervals $I$ in $\calI_{<h-1}[I^*]$. Instead, we only have the information for intervals that are cut by ${S_\anc}$ (this will be a subset of $\calI_{<h-1}[I^*]$ since it will be easy to see that ${S_\anc}$ only contains integer multiplies of $\max\{2^{-h+1}|I^*|, 2^h\}$).  If one assumes ${S_\anc}$ is the set of all integer multiples of $\max\{2^{-h+1}|I^*|, 2^h\}$ between $\sfbegin(I)$ and $\sfend(I)$, then the input parameters between $\schedule$ and $\schedule\mhyphen\modified$ become the same.\footnote{There is a minor notation discrepancy between $\schedule$ and $\schedule\mhyphen\modified$: The bottom intervals are always uncut but they belong to $\calI_{<h-1}[I^*]$ if $I^*$ is at below level $L-h+1$.} The output of $\schedule\mhyphen\modified$ is the same as that for $\schedule$, except that the modified dyadic system $\bbJ^\best$ now needs to contain $S_\anc$ (which is given as input) and $(S^\best_I)_{I \in \calI_\sftop[I^*]}$ (which is produced by the algorithm).  

		\begin{algorithm}[t!] 
			\caption{$\schedule\mhyphen\modified\big(I^*,  {J_\anc},    \tilde b^\anc,  \tilde e^\anc, S_\anc, (\tilde J_{I})_{I \in \calI^\cut_{{S_\anc}}[I^*]}, (\tilde K_{I})_{I \in \calI^{\maxuncut}_{{S_\anc}}[I^*]}\big)$}\label{alg:schedule-modified}
			\textbf{Output:} a partial modified dyadic system $\bbJ^\best$ and a virtually-valid schedule $\sigma^\best$ for $\bbJ$
			\begin{algorithmic}[1]
				\If{$|{J_\anc}| > m|I^*|$ or  $|\tilde J_{\subseteq I^*}| + |\tilde K_{\subseteq I^*}| > m|I^*|$} \Return $(\bot, \bot)$ \EndIf  \label{step:schedule-modified-check-size}
				\If{$I^* \in \calI_\sfbot$}
					\Return $\big(\bbJ^\best := (\tilde J_\anc, \tilde b^\anc, \tilde e^\anc, S_\anc, (), (\tilde K_{I^*})), \text{best virtually-valid schedule $\sigma^\best$ for } \bbJ^\best\big)$
				    \label{step:schedule-modified-terminating-start}
				\EndIf \vspace*{-10pt}
				\State $\sigma^\best \gets \bot$, copy $\tilde b^\anc$, $\tilde e^\anc$, $\tilde J_I$'s and $\tilde K_I$'s to $b^\anc, e^\anc$, $J_I$'s  and $K_I$'s \label{step:schedule-modified-init}
				\For{every $S_{I^*}$ satisfying Property (\ref{def:modified-dyadic-system}a) if $I^* \in \calI_\sftop$, or $S_{I^*} = \{\sfcenter(I^*)\}$ if $I^* \in \calI_\sfmid$} \label{step:schedule-modified-enumerating-S} 
					\State $\calI' \gets \calI^\cut_{{S_\anc} \cup S_{I^*}}[I^*] \cap \calI^\uncut_{{S_\anc}}[I^*]$ \label{step:schedule-modified-calI'}
					\For{every possible vector $(g_I)_{I \in \calI'}$ s.t. $g_I \in \{\sfL, \sfR\}^p$ if $I \in \calI_\sftop$ and $g_I \in \{\sfL, \sfR\}^{m|I|}$ if $I \in \calI_\sfmid$} \label{step:schedule-modified-enumerate-g}
						\State \textbf{for} every $I \in \calI'$ from top to bottom \textbf{do}  $\big( J_I,  K_{\sfleft(I)},  K_{\sfright(I)}\big) \gets \pushdown(I, K_I, g_{I})$ 
						\label{step:schedule-modified-call-pushdown}
						\For{every $j \in  J_{I^*}$} \label{step:schedule-modified-extend-b-e-up}
							define
							\begin{itemize}[leftmargin=40pt]
								\item $b^\anc_j$ to be minimum integer $b$ in $({S_\anc} \cup S_{I^*})\cap (\sfbegin(I^*),\sfcenter(I^*)]$ such that there are no precedence constraints from $J_{\subseteq (b, \sfcenter(I^*)]} \cup K_{\subseteq (b, \sfcenter(I^*)]}$ to $j$, and
								\item $b^\anc_j$ to be minimum integer $e$ in $({S_\anc}  \cup S_{I^*}) \cap [\sfcenter(I^*),\sfend(I^*))$ such that there are no precedence constraints from $j$ to $J_{\subseteq (\sfcenter(I^*),e]} \cup K_{\subseteq (\sfcenter(I^*),e]}$.
							\end{itemize}
						\EndFor
						\For{every good partition (see Definition~\ref{def:good-partition}) of $ {J_\anc} \cup  J_{I^*}$ into $ J_\anc^\sfL, J_\anc^\sfR$ and $J_\discarded$, keeping only one partition in every good equivalence class defined in Remark~\ref{remark:equivalence}} \label{step:schedule-modified-partition-Janc}
							\State $(\bbJ^\sfL, \sigma^\sfL) \gets \schedule\mhyphen\modified\Big(\sfleft(I^*),  J_\anc^\sfL,    b^\anc|_{ J_\anc^\sfL}, e^\anc|_{ J_\anc^\sfL}, $\newline \hspace*{0.4\textwidth} ${S_\anc} \cup S_{I^*}, (J_{I})_{I \in \calI^\cut_{{S_\anc} \cup S_{I^*}}[\sfleft(I^*)]}, (K_{I})_{I \in \calI^{\maxuncut}_{{S_\anc} \cup S_{I^*}}[\sfleft(I^*)]}\Big)$ \label{step:schedule-modified-recurse-left}
							\State $(\bbJ^\sfR, \sigma^\sfR) \gets \schedule\mhyphen\modified\Big(\sfright(I^*),  J_\anc^\sfR,    b^\anc|_{ J_\anc^\sfR}, e^\anc|_{ J_\anc^\sfR},$ \newline\hspace*{0.38\textwidth}${S_\anc} \cup S_{I^*},  (J_{I})_{I \in \calI^\cut_{{S_\anc} \cup S_{I^*}}[\sfright(I^*)]}, (K_{I})_{I \in \calI^{\maxuncut}_{{S_\anc} \cup S_{I^*}}[\sfright(I^*)]}\Big)$
							 \label{step:schedule-modified-recurse-right}
							\If{$\sigma^\sfL, \sigma^\sfR\neq \bot$ and \big($\sigma^\best=\bot$ or $|(\sigma^\sfL)^{-1}(\sfleft(I^*))|+|(\sigma^\sfR)^{-1}(\sfright(I^*))|  > |(\sigma^{\best})^{-1}(I^*)|$\big)} \vspace*{-10pt}
							\label{step:schedule-modified-check-if-better}
								\State $S^\best_I \gets S_{I}, \forall I \in \calI_\sftop[I^*]$;  $J^\best_{I^*} \gets J_{I^*}$ \label{step:schedule-modified-update-0}
								\State $J^\best_I \gets J^\sfL_I, \forall I \in \calI[\sfleft(I^*)]$; $J^\best_I \gets J^\sfR_I, \forall I \in \calI[\sfright(I^*)]$ \label{step:schedule-modified-update}
								\State let $\sigma^\best$ be obtained by merging $\sigma^\sfL$ and $\sigma^\sfR$ and discard $J_\discarded$	
								\label{step:schedule-modified-update-1}
							\EndIf
						\EndFor
					\EndFor
				\EndFor
				\State \textbf{if} $\sigma \neq \bot$ \textbf{then} \Return $\Big(\bbJ^\best:=\big(J_\anc, \tilde b^\anc,  \tilde e^\anc, S_\anc, (S^\best_I)_{I \in \calI_\sftop[I^*]}, (J^\best_I)_{I \in \calI[I^*]}\big), \sigma^\best\Big)$ \\\hspace*{38pt} \textbf{else} \Return $(\bot, \bot)$
			\end{algorithmic}
		\end{algorithm}
		
	In Step~\ref{step:schedule-modified-check-size} of $\schedule\mhyphen\modified$, we check if both the number of ancestor jobs and the number of jobs assigned to sub-intervals of $I^*$ are at most $m|I^*|$.  We can handle the case $I^*\in \calI_\sfbot$ directly. So we assume $I^* \in \calI_\sftop \cup \calI_\sfmid$. In Step~\ref{step:schedule-modified-init}, we initialize the variables as before. One key difference between $\schedule\mhyphen\modified$ and $\schedule$ comes from Step~\ref{step:schedule-modified-enumerating-S}, where we guess the set $S_{I^*}$: If $I^* \in \calI_\sftop$, then $S_{I^*}$ can be any set satisfying Property~(\ref{def:modified-dyadic-system}a); if $I^* \in \calI_\sfmid$, we fix $S_{I^*} = \{ \sfcenter(I^*)\}$, which is needed to make sure that $I^*$ will be partitioned by its center.  In Step~\ref{step:schedule-modified-calI'}, we construct a set $\calI'  = \calI^\cut_{S_\anc \cup S_{I^*}} \cap \calI^\uncut_{S_\anc}$, the set of new intervals $I$ for which we need to know the set $J_I$. In Loop~\ref{step:schedule-modified-enumerate-g}, we try to guess $g^*_I$ for every $I \in \calI'$ as before. Based on our guesses,  we expand the information about $J_I$'s in Step~\ref{step:schedule-modified-call-pushdown}, by calling $\pushdown$  for every $I \in \calI'$; we shall guarantee that $I$ won't be a bottom interval (See Claim~\ref{claim:schedule-modified-properties}).
	
	Once we have $S_{I^*}$, $J_I$'s for $I \in \calI^\cut_{S_\anc \cup S_{I^*}}$ and $K_I$'s for $I \in \calI^\maxuncut_{S_\anc \cup S_{I^*}}$, we can compute the $b^\bbJ$ and $e^\bbJ$ values for jobs in $J_{I^*}$.  This is done in Step~\ref{step:schedule-modified-extend-b-e-up}: For every $j \in J_{I^*}$, $b^\anc_j$ and $e^\anc_j$ will be the same as the $b^\bbJ_j$ and $e^\bbJ_j$ values for the constructed modified dyadic system $\bbJ$.  
	In Loop~\ref{step:schedule-modified-partition-Janc}, we guess how jobs in ${J_\anc} \cup J_{I^*}$ are split into $\sfleft(I^*)$ and $\sfright(I^*)$. Again, we keep one partition in every equivalence class, but we only consider good partitions:
	\begin{definition} \label{def:good-partition}
		In Step~\ref{step:schedule-modified-partition-Janc} of $\schedule\mhyphen\modified$, we say a partition $(J^\sfL_\anc, J^\sfR_\anc, J_\discarded)$ is good if for every interval $W \subseteq \sfleft(I^*)$, we have $|\{j \in J^\sfL_\anc: (b^\anc_j, e^\anc_j] \cap \sfleft(I^*) = W\}|$ is an integer multiply of $\rho |I^*|/2$, and for every interval $W \subseteq \sfright(I^*)$, we have $|\{j \in J^\sfR_\anc: (b^\anc_j, e^\anc_j] \cap \sfleft(I^*) = W\}|$ is an integer multiply of $\rho |I^*|/2$.
	\end{definition}


	Once we made the guess, we recursively and independently call the $\schedule$ procedure for $\sfleft(I^*)$ and $\sfright(I^*)$ (Step~\ref{step:schedule-modified-recurse-left} and \ref{step:schedule-modified-recurse-right}).  We maintain the best solution constructed so far (Step~\ref{step:schedule-modified-update-0} to~\ref{step:schedule-modified-update-1}) and return it in the end.
	
	In the main algorithm, we simply call $(\bbJ^\best, \sigma^\best) \gets \schedule\mhyphen\modified([T], \emptyset, (), (), \emptyset, (), (K_{[T]} = J^\circ))$ and return $\bbJ^\best$ and $\sigma^\best$.
		
\subsection{Analysis of Correctness}
	\label{subsec:algo-modifiedcorrectness}
	We now analyze the correctness of the algorithm. The following claim gives some simple properties about the input to each recursion of $\schedule$. 
		\begin{claim}[Counterpart of Claim~\ref{claim:schedule-properties}]
			\label{claim:schedule-modified-properties}
			At the beginning of a recursion of $\schedule\mhyphen\modified$ for some $I^* \in \calI$, the following holds.
			\begin{enumerate}[label=(\ref{claim:schedule-modified-properties}\alph*)]
				\item All integers in ${S_\anc}$ are integer multiplies of $\max\{2^{-h+1}|I^*|, 2^h\}$, and $|S_\anc \cap (\sfbegin(I^*), \sfend(I^*))| \leq (s-1)(h-1)$. This implies that $\calI^\cut_{{S_\anc}}[I^*]$ does not contain bottom intervals. \label{property:schedule-modified-input-Sanc}
				\item All sets in $\{{J_\anc}\} \cup \{\tilde J_I\}_{I \in \calI^\cut_{{S_\anc}}[I^*]}\cup\{\tilde K_I\}_{I \in \calI^\maxuncut_{{S_\anc}}[I^*]}$ are mutually disjoint. \label{property:schedule-modified-input-partition}
				\item For every $I \in \calI^\cut_{{S_\anc}}[I^*] \cap \calI_\sftop[I^*]$, we have $\Delta(\tilde J_I) \leq \delta|\tilde J_I| + \delta'|I|$.\smallskip \label{property:schedule-modified-input-chain-length}
				\item For every $I \in \calI^\cut_{{S_\anc}}[I^*] \cap \calI_\sfmid[I^*]$, we have $\tilde J_I = \emptyset$. \label{property:schedule-modified-input-middle-empty}
				\item The sequence $(\tilde J_{I} \text{ or }  \tilde K_{I})_{I \in \calI^\cut_{{S_\anc}}[I^*]\cup\calI^\maxuncut_{{S_\anc}}[I^*]}$ according to $\inorderless$ respects the precedence constraints, where $(\tilde J_{I} \text{ or }  \tilde K_{I})$ indicates either $\tilde J_I$ or $\tilde K_I$ depending on which one is given in the input. \label{property:schedule-modified-input-precedence}
				\item For every $j \in {J_\anc}$, we have that $\tilde b^\anc_j, \tilde e^\anc_j \in S_{\anc}$; moreover, for every $j \in {J_\anc}$, we have $\tilde b^\anc_j \leq \sfbegin(I^*)$ or $\tilde e^\anc_j \geq \sfend(I^*)$. \label{property:schedule-modified-input-b-e-multiplies}
			\end{enumerate}
		\end{claim}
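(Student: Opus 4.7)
The plan is to prove the claim by induction on the depth of $I^*$ in $\bfT$. The base case is $I^* = [T]$ with ${J_\anc} = \emptyset$, ${S_\anc} = \emptyset$ and $\tilde K_{[T]} = J^\circ$, where all six properties are trivially satisfied. For the inductive step, I assume (a)--(f) hold at the start of a recursion on some $I^* \in \calI_\sftop \cup \calI_\sfmid$ and show they hold at the start of the two sub-recursions in Steps~\ref{step:schedule-modified-recurse-left} and~\ref{step:schedule-modified-recurse-right}, where the new ancestor-cut set is ${S_\anc} \cup S_{I^*}$ and the new child interval $I'$ is either $\sfleft(I^*)$ or $\sfright(I^*)$.

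For property (a), I would argue the two parts separately. For the multiplicity condition: by the inductive hypothesis the old ${S_\anc}$ consists of multiples of $\max\{2^{-h+1}|I^*|, 2^h\} = \max\{2^{-h}|I^*|, 2^h\}\cdot(\text{power of }2)$, hence of multiples of $\max\{2^{-h+1}|I'|,2^h\}$ since $|I'| = |I^*|/2$; the newly added $S_{I^*}$ consists of multiples of $\max\{2^{-h}|I^*|, 2^h\} = \max\{2^{-h+1}|I'|, 2^h\}$ by Property~(\ref{def:modified-dyadic-system}a) when $I^*\in\calI_\sftop$, and equals $\{\sfcenter(I^*)\}$, a multiple of $|I^*|/2=|I'|$, when $I^*\in\calI_\sfmid$. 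For the counting bound, the elements of the new ${S_\anc}\cup S_{I^*}$ lying inside $(\sfbegin(I'), \sfend(I'))$ come from $S_{\hat I}$ for ancestors $\hat I$ of $I'$; by Observation~\ref{obs:cutting-small} only ancestors within $h-1$ levels contribute, and each contributes at most $s-1$ elements (by Property~(\ref{def:modified-dyadic-system}a)), giving the bound $(s-1)(h-1)$. The consequence that $\calI^\cut_{{S_\anc}}[I^*]$ contains no bottom intervals is then immediate: a bottom interval has length $2^h$ while every element of ${S_\anc}$ is a multiple of at least $2^h$, so a bottom interval cannot have a point of ${S_\anc}$ strictly in its interior.

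Properties (b)--(e) follow from the analysis of $\pushdown$ exactly as in the proof of Claim~\ref{claim:schedule-properties}. The only subtlety is identifying the right index set: the new pieces $(J_I, K_{\sfleft(I)}, K_{\sfright(I)})$ are produced by $\pushdown$ for $I$ ranging over $\calI' = \calI^\cut_{{S_\anc}\cup S_{I^*}}[I^*]\cap\calI^\uncut_{{S_\anc}}[I^*]$ in Step~\ref{step:schedule-modified-call-pushdown}, and one verifies by induction on $I\in\calI'$ (top-to-bottom) that after processing a child $I'$ of $I^*$ the collection $\{\tilde J_I\}_{I\in\calI^\cut_{{S_\anc}\cup S_{I^*}}[I']}\cup\{\tilde K_I\}_{I\in\calI^\maxuncut_{{S_\anc}\cup S_{I^*}}[I']}$ partitions $\tilde K$-in of $I'$ as required. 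Then (b), (c), (d), (e) transfer via Properties~\ref{property:pushdown-partition}, \ref{property:pushdown-chain-length}, \ref{property:pushdown-middle-empty}, and~\ref{property:pushdown-precedence} for $\pushdown$ respectively, combined with the inductive precedence-respecting property for $I^*$ and the fact that $\tilde J_{I^*}$ itself sits between $\sfleft$ and $\sfright$ contributions in the in-order traversal.

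For property (f), the ancestor jobs at the start of the sub-recursion split into those inherited from ${J_\anc}$ (for which $b^\anc_j, e^\anc_j$ lie in the old ${S_\anc}\subseteq{S_\anc}\cup S_{I^*}$ by induction) and those newly moved from $J_{I^*}$ in the partition in Step~\ref{step:schedule-modified-partition-Janc} (for which $b^\anc_j,e^\anc_j$ are defined in Step~\ref{step:schedule-modified-extend-b-e-up} to lie in $({S_\anc}\cup S_{I^*})\cap(\sfbegin(I^*),\sfend(I^*))$). For the second part of (f), the newly assigned values satisfy $b^\anc_j\le\sfcenter(I^*)=\sfbegin(\sfright(I^*))$ and $e^\anc_j\ge\sfcenter(I^*)=\sfend(\sfleft(I^*))$, so the condition holds for both children. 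For inherited jobs, one uses $\sfbegin(\sfleft(I^*))=\sfbegin(I^*)$ and $\sfend(\sfright(I^*))=\sfend(I^*)$ together with the inductive condition for $I^*$. The main obstacle in executing this plan is simply tracking the multiple shrinking-scale conventions in property (a) carefully; the rest is a direct translation of the proof of Claim~\ref{claim:schedule-properties} to the modified setting.
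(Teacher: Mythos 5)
Your proposal is correct and takes essentially the same approach as the paper: properties (b)--(f) are transferred from the $\pushdown$ guarantees and the ancestor recursions exactly as in Claim~\ref{claim:schedule-properties}, and property~(a) follows from Property~\ref{property:modified-dyadic-S} of the guessed sets $S_{\hat I}$ together with Observation~\ref{obs:cutting-small}. Your explicit induction on the recursion tree and the careful $(s-1)(h-1)$ count are just a more detailed write-up of the paper's (terser) argument.
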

		\begin{proof}
			So, Property~\ref{property:schedule-modified-input-Sanc} holds.  Step~\ref{step:schedule-modified-enumerating-S} we guaranteed that $S_I$ satisfy Property~\ref{property:modified-dyadic-S}, and ${S_\anc}$ is the union of $S_{\hat I}$'s for strict ancestors $\hat I$ of $I^*$.  Thus all integers in ${S_\anc}$ are integer multiplies of $\max\{2^{-h+1}|I^*|, 2^h\}$. By Observation~\ref{obs:cutting-small}, we have $|S_\anc \cap (\sfbegin(I^*), \sfend(I^*))| \leq (s-1)h$. 
			For Property~\ref{property:schedule-modified-input-b-e-multiplies}, notice that at every recursion of $\schedule\mhyphen\modified$, we guaranteed that for any $j \in J_{I^*}$ we have $\tilde b^\anc_j, \tilde e^\anc_j \in S_\anc \cup S_{I^*}$. 
			The analysis for the second half of  Property~\ref{property:schedule-modified-input-b-e-multiplies} and the other four properties are the same as that in Claim~\ref{claim:schedule-properties}.
		\end{proof}
	

	The following lemma shows the validity of the output for each recursion of $\schedule\mhyphen\modified$. Its proof is almost identical to that of Lemma~\ref{lemma:schedule-valid}, the only difference being that we need to check the valid of $S$ in the sets. 
		\begin{lemma}[Counterpart of Lemma~\ref{lemma:schedule-valid}]
			\label{lemma:schedule-modified-valid}
			Suppose some recursion of $\schedule\mhyphen\modified$ takes $\big(I^*,  J_\anc, \tilde b^\anc,  \break \tilde e^\anc,  {S_\anc}, (\tilde J_{I})_{I \in \calI^\cut_{{S_\anc}}[I^*]}, (\tilde K_{I})_{I \in \calI^\maxuncut_{{S_\anc}}[I^*]}\big)$ as input and returns $\big(\bbJ^\best,\sigma^\best\big) \neq (\bot, \bot)$.   Then $\bbJ^\best = (J_\anc, \tilde b^\anc, \allowbreak \tilde e^\anc, {S_\anc}, (S_I)_{I \in \calI_\sftop[I^*]}, (J^\best_I)_{I \in \calI[I^*]})$ is a {partial modified dyadic} system over $I^*$ and $\sigma^\best$ is a virtually-valid schedule for $\bbJ^\best$.  Moreover, $J^\best_I = \tilde J_I$ for every $I \in \calI^\cut_{S_\anc}[I^*]$, and $J^\best_{\subseteq I} = \tilde K_I$ for every $I \in \calI^\maxuncut_{S_\anc}[I^*]$.
		\end{lemma}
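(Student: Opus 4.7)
The plan is to prove the statement by induction on the level of $I^*$ from bottom to top, mirroring the structure of the proof of Lemma~\ref{lemma:schedule-valid} and only highlighting what changes due to the presence of the sets ${S_\anc}$ and $S_{I^*}$.  The base case $I^* \in \calI_\sfbot$ is immediate from Step~\ref{step:schedule-modified-terminating-start}: the tuple returned satisfies all of Properties~\ref{property:dyadic-disjoint}--\ref{property:dyadic-precedence} and~\ref{property:modified-dyadic-S} vacuously (since $\calI_\sftop[I^*]=\calI_\sfmid[I^*]=\emptyset$ and $\calI[I^*]=\{I^*\}$), and any schedule found by enumeration is virtually-valid by construction.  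So I would concentrate on the inductive step, $I^* \in \calI_\sftop \cup \calI_\sfmid$, and focus on the iteration of the three nested loops in which the final $\bbJ^\best, \sigma^\best$ are written in Steps~\ref{step:schedule-modified-update-0}--\ref{step:schedule-modified-update-1}.

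First, I would establish the two ``partition identities'' $J^\best_I=\tilde J_I$ for $I\in\calI^\cut_{S_\anc}[I^*]$ and $J^\best_{\subseteq I}=\tilde K_I$ for $I\in\calI^\maxuncut_{S_\anc}[I^*]$.  The first is immediate from Step~\ref{step:schedule-modified-update} because such $I$ lie in $\calI[\sfleft(I^*)]\cup\calI[\sfright(I^*)]\cup\{I^*\}$, and in either branch $\tilde J_I$ is passed down unchanged, while the inductive hypothesis gives $J^\sfL_I=\tilde J_I$ (resp.\ $J^\sfR_I=\tilde J_I$).  For the second, I would split by whether $I$ lies strictly inside $\sfleft(I^*)$, $\sfright(I^*)$, or equals $I^*$.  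In the first two cases, repeated application of Property~\ref{property:pushdown-partition} along the chain of intervals in $\calI'\cap\calI[I]$ (the new intervals introduced by $S_{I^*}$) and the inductive hypothesis $J^\sfL_{\subseteq I'}=K_{I'}$ (resp.\ $J^\sfR_{\subseteq I'}=K_{I'}$) for the $\calI^\maxuncut_{S_\anc\cup S_{I^*}}$-intervals $I'\subseteq I$, assembled carefully, give $J^\best_{\subseteq I}=\tilde K_I$.  The case $I=I^*$ requires combining the above with $J^\best_{I^*}=J_{I^*}$, and uses that $J_{I^*}\cup K_{\subseteq\sfleft(I^*)}\cup K_{\subseteq\sfright(I^*)}=\tilde K_{I^*}$, which again follows from iterated application of Property~\ref{property:pushdown-partition} starting from the unique $\calI^\maxuncut_{S_\anc}[I^*]$-interval containing $I^*$'s cuts, namely $I^*$ itself (when $I^*\in\calI^\maxuncut_{S_\anc}[I^*]$) or from the inductive split otherwise.

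Next, I would verify that $\bbJ^\best$ is a partial modified dyadic system.  Properties~\ref{property:dyadic-disjoint}--\ref{property:dyadic-precedence} are obtained by viewing $\bbJ^\best$ as the merger of $\bbJ^\sfL$ and $\bbJ^\sfR$, with $J_\discarded$ added to the ancestor jobs and $\tilde J_{I^*}$ promoted to $J^\best_{I^*}$, exactly as in the proof of Lemma~\ref{lemma:schedule-valid}; the inputs already satisfy Properties~\ref{property:schedule-modified-input-partition}--\ref{property:schedule-modified-input-precedence} (by Claim~\ref{claim:schedule-modified-properties}), and the $\pushdown$ calls in Step~\ref{step:schedule-modified-call-pushdown} preserve them along the newly cut intervals in $\calI'$ by Properties~\ref{property:pushdown-partition}--\ref{property:pushdown-precedence}.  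The new ingredient is Property~\ref{property:modified-dyadic-S}: for $I \in \calI_\sftop[\sfleft(I^*)] \cup \calI_\sftop[\sfright(I^*)]$ this holds by induction, and for $I=I^*$ it holds because $S_{I^*}$ is chosen in Step~\ref{step:schedule-modified-enumerating-S} to satisfy it (by enumeration when $I^* \in \calI_\sftop$, and trivially when $I^* \in \calI_\sfmid$).

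Finally, I would verify that $\sigma^\best$ is virtually-valid for $\bbJ^\best$.  Capacity, bottom precedence, and bottom interval constraints descend from $\sigma^\sfL$ and $\sigma^\sfR$ since the schedules are on disjoint intervals.  The key step, and the only place where the argument really differs from Lemma~\ref{lemma:schedule-valid}, is checking the window constraints for top and ancestor jobs.  For a top job $j$ of $\bbJ^\sfL$ (resp.\ $\bbJ^\sfR$), one must check that its window in $\bbJ^\best$ coincides with its window in $\bbJ^\sfL$ (resp.\ $\bbJ^\sfR$); this requires showing $S^{\bbJ^\best}_{\supseteq I(j)} = S^{\bbJ^\sfL}_{\supseteq I(j)}$ (which I expect to be the main bookkeeping obstacle), and follows because the ancestors of $I(j)$ inside $\sfleft(I^*)$ have the same $S$-sets in both systems by the inductive hypothesis, while $S^{\bbJ^\best}$ additionally contributes $S^{\bbJ^\best}_{I^*} = S_{I^*}$ and the $S$-sets of $I^*$'s ancestors from $S_\anc$---but these are exactly the sets used to define $b^{\bbJ^\sfL}_j$ and $e^{\bbJ^\sfL}_j$ in the sub-recursion, since the sub-recursion's $S_\anc$ equals $S_\anc \cup S_{I^*}$ (Step~\ref{step:schedule-modified-recurse-left}).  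For $j \in \tilde J_{I^*}$, we need $b^{\bbJ^\best}_j = b^\anc_j$ and $e^{\bbJ^\best}_j = e^\anc_j$, which is exactly the content of Step~\ref{step:schedule-modified-extend-b-e-up}; here one must justify that the $J$'s and $K$'s referenced in that step collectively contain all jobs in the relevant sub-intervals of $I^*$, which follows from the partition identity established above.  Finally, the window constraints for ancestor jobs in $J_\anc$ are inherited from the two sub-recursions, and for jobs moved into $J_\discarded$ there is nothing to verify.  This completes the induction.
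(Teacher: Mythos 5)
Your proposal is correct and follows essentially the same route as the paper's proof: induction from bottom to top, focusing on the last iteration in which $\bbJ^\best,\sigma^\best$ are written, deriving the partition identity $J^\best_{\subseteq I}=\tilde K_I$ from Property~\ref{property:pushdown-partition} applied to the intervals newly cut by $S_{I^*}$ together with the induction hypothesis, and inheriting the system properties and virtual-validity from $\bbJ^\sfL,\bbJ^\sfR$. The paper's writeup is terser (it defers most checks to the proof of Lemma~\ref{lemma:schedule-valid}), whereas you spell out the window-consistency bookkeeping $S^{\bbJ^\best}_{\supseteq I(j)}=S^{\bbJ^\sfL}_{\supseteq I(j)}$ explicitly; this is a faithful elaboration of what the paper leaves implicit, not a different argument.
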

		\begin{proof}
			The lemma can be proved using mathematical induction and we only highlight the difference between the proof and that for Lemma~\ref{lemma:schedule-valid}. Consider the case $I^* \in \calI_\sftop \cup \calI_\sfmid$, and focus on the last iteration of the three nested loops in which  Step \ref{step:schedule-modified-update-0} to \ref{step:schedule-modified-update-1} are executed and the last moment of the iteration. The induction hypothesis for the two sub-recursions of $\schedule\mhyphen\modified$ made in Step~\ref{step:schedule-modified-recurse-left} and~\ref{step:schedule-modified-recurse-right} says
			\begin{itemize}
				\item $\bbJ^\sfL:=\big(J_\anc^\sfL,   b^\anc|_{J_\anc^\sfL}, e^\anc|_{J_\anc^\sfL}, S_\anc, (S_I)_{I \in \calI_\sftop[\sfleft(I^*)]}, (J^\sfL_I)_{I \in \calI[\sfleft(I^*)]}\big)$ is a {partial modified dyadic} system over $\sfleft(I^*)$ and $\sigma^\sfL$ is a valid schedule for $\bbJ^\sfL$. Moreover $J^\sfL_{\subseteq I} =  K_{I}$ for every $I \in \calI^\maxuncut_{S_\anc \cup S_{I^*}}[\sfleft(I^*)]$. 
				\item $\bbJ^\sfR:=\big(J_\anc^\sfR,   b^\anc|_{J_\anc^\sfL}, e^\anc|_{J_\anc^\sfR}, S_\anc, (S_I)_{I \in \calI_\sftop[\sfright(I^*)]}, (J^\sfR_I)_{I \in \calI[\sfright(I^*)]}\big)$ is a {partial modified dyadic} system over $\sfright(I^*)$ and $\sigma^\sfR$ is a valid schedule for $\bbJ^\sfR$. Moreover $J^\sfR_{\subseteq I} =  K_{I}$ for every $I \in \calI^\maxuncut_{S_\anc \cup S_{I^*}}[\sfright(I^*)]$. 
			\end{itemize}
			
			We first show that $J^\best_{\subseteq I} = \tilde K_I$ for every $I \in \calI^\maxuncut_{S_\anc}$.  Focus on such an interval $I$. Since points in $S_\anc$ do not cut $I$, whether a sub-interval $I'$ of $I$ is cut by $S_\anc \cup S_{I^*}$ is determined by whether it is cut by $S_{I^*}$.  By Property~\ref{property:pushdown-partition} of the procedure $\pushdown$, we have that $\union_{I' \in \calI^\cut_{S_{I^*}}[I]} J_{I'} \cup \union_{I' \in \calI^\maxuncut_{S_{I^*}}[I]}K_{I'} = K_I = \tilde K_I$. By the induction hypothesis, for every $I' \in \calI^\maxuncut_{S_{I^*}}[I]$, we have $J^\best_{\subseteq I'} =  K_{I'}$ by the induction hypothesis and the way we constructed the sets $J^\best_{I''}$'s. Also, $J_{I'} = J^\best_{I'}$ for every $I' \in \calI^\cut_{S_{I^*}[I]}$. Therefore, 
			\begin{align*}
				\tilde K_I = \union_{I' \in \calI^\cut_{S_{I^*}}[I]} J_{I'} \cup \union_{I' \in \calI^\maxuncut_{S_{I^*}}[I]}K_{I'} = \union_{I' \in \calI^\cut_{S_{I^*}}[I]} J^\best_{I'} \cup \union_{I' \in \calI^\maxuncut_{S_{I^*}}[I]}J^\best_{\subseteq I'} = J^\best_{\subseteq I}.
			\end{align*}
			
			Again, $\bbJ^\best$ satisfies all the properties in Definition~\ref{def:modified-dyadic-system}: they are implied by these properties for the two partial modified dyadic systems $\bbJ^\sfL$ and $\bbJ^\sfR$.  The virtual-validity of  $\sigma^\best$ is implied by the virtual-validity of $\sigma^\sfL$ and $\sigma^\sfR$, and  the $b^\anc_j$ and $e^\anc_j$ values $j \in J^\best_{I^*}$ are the same as their $b^{\bbJ^\best}_j$ and $e^{\bbJ^\best}_j$ values. 
		\end{proof}	
				
		The following lemma is the counterpart of Lemma~\ref{lemma:schedule-good}; a slight difference is that we use $\sigma''^*$ instead of $\sigma'^*$ in the lemma (Recall that $b^*_j = b^{\bbJ^*}_j$ and $e^*_j = b^{\bbJ^*}_j$ for every top job $j$ in $\bbJ^*$):
		\begin{lemma}[Counterpart of Lemma~\ref{lemma:schedule-good}]
			\label{lemma:schedule-modified-good}
			Suppose at the beginning of some recursion of $\schedule\mhyphen\modified$, we have $\{(\tilde b^\anc_j, \tilde e^\anc_j] : j \in  {J_\anc}\} = \{(b^*_j, e^*_j]: j \in \sigma''^{*-1}(I^*) \cap J^*_{\supsetneq I^*}\}$, $S_\anc = S^*_{\supsetneq I^*}$, $\tilde J_I = J^*_I$ for every $I\in \calI^\cut_{{S_\anc}}[I^*]$ and $ K_I = K^*_I =  J^*_{\subseteq I}$ for every $I \in \calI^\maxuncut_{{S_\anc}}[I^*]$.  Then the returned schedule $\sigma^\best$ has at least $|\sigma''^{*-1}(I^*)|$ jobs scheduled.
		\end{lemma}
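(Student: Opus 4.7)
The plan is to mirror the proof of Lemma~\ref{lemma:schedule-good} and induct on $I^*$ from bottom to top, but to handle the two new wrinkles that arise in $\schedule\mhyphen\modified$: (i) the algorithm now also enumerates the set $S_{I^*}$ in Loop~\ref{step:schedule-modified-enumerating-S}, and (ii) in Loop~\ref{step:schedule-modified-partition-Janc} it only iterates over \emph{good} partitions in the sense of Definition~\ref{def:good-partition}.  The base case $I^* \in \calI_\sfbot$ is essentially verbatim: starting from $\sigma''^*$ restricted to $I^*$ and replacing each job in $\sigma''^{*-1}(I^*) \cap J^*_{\supsetneq I^*}$ by the (uniquely matched) job of ${J_\anc}$ via the assumed multi-set equality of windows, we get a candidate virtually-valid schedule for the bottom sub-system, realizing $|\sigma''^{*-1}(I^*)|$ jobs.

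For $I^* \in \calI_\sftop \cup \calI_\sfmid$, Step~\ref{step:schedule-modified-check-size} does not short-circuit because the sizes are bounded by $|\sigma''^{*-1}(I^*)| \le m|I^*|$ and $|K^*_{I^*}|\le m|I^*|$.  Next, I will focus on the iteration of Loop~\ref{step:schedule-modified-enumerating-S} in which $S_{I^*}=S^*_{I^*}$ (with $S^*_{I^*} = \{\sfcenter(I^*)\}$ in the middle case, matching the algorithm's forced choice), and then, inside it, the iteration of Loop~\ref{step:schedule-modified-enumerate-g} in which each $g_I$ extends $g^*_I$ for $I \in \calI'$; this extension is possible because of Claim~\ref{claim:number-guesses-small} and because $\calI'$ contains no bottom intervals by Claim~\ref{claim:schedule-modified-properties}.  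With these choices, a straightforward top-down argument using Property~\ref{property:pushdown-returns-star} of $\pushdown$ shows that $J_I=J^*_I$ for every $I\in\calI'$ and $K_I=K^*_I$ for every $I\in\calI^\maxuncut_{{S_\anc}\cup S_{I^*}}[I^*]$ (using the input hypotheses for the two disjoint origins of each $K_I$: either an input-provided $\tilde K_{\hat I}$ at a maximal $S_\anc$-uncut ancestor $\hat I$, or a $\pushdown$ call on $I$'s parent within $\calI'$).  Consequently, after Step~\ref{step:schedule-modified-extend-b-e-up}, each $j \in J_{I^*}=J^*_{I^*}$ has $b^\anc_j = b^*_j$ and $e^\anc_j = e^*_j$, because both are computed by the same minimum/maximum rule from the same data.

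The main new step is Loop~\ref{step:schedule-modified-partition-Janc}: I need to produce an explicit good partition $(J^\sfL_\anc, J^\sfR_\anc, J_\discarded)$ of ${J_\anc}\cup J^*_{I^*}$ that faithfully reflects $\sigma''^*$.  Using the assumed multi-set equality between the windows of ${J_\anc}$ and those of $\sigma''^{*-1}(I^*)\cap J^*_{\supsetneq I^*}$, pick an arbitrary bijection $\phi$ between these two sets preserving windows; then assign each $j\in{J_\anc}$ to $J^\sfL_\anc$, $J^\sfR_\anc$, or $J_\discarded$ according to whether $\phi(j)$ lies in $\sigma''^{*-1}(\sfleft(I^*))$, $\sigma''^{*-1}(\sfright(I^*))$, or $\sigma''^{*-1}(\discarded)$, and partition $J^*_{I^*}$ analogously using $\sigma''^*$ directly.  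The hard part is checking goodness: for each $W\subseteq\sfleft(I^*)$ I must argue that $|\{j\in J^\sfL_\anc: (b^\anc_j,e^\anc_j]\cap\sfleft(I^*)=W\}|$ is a multiple of $\rho|I^*|/2=\rho|\sfleft(I^*)|$.  The key observation is that $J^*_{\supseteq I^*} = J^*_{\supsetneq \sfleft(I^*)}$, so under the bijection this cardinality equals $\bigl|\{j\in\sigma''^{*-1}(\sfleft(I^*))\cap J^*_{\supsetneq \sfleft(I^*)}: (b^*_j,e^*_j]\cap\sfleft(I^*)=W\}\bigr|$, which is precisely the quantity bounded by Property~(\ref{lemma:S''-star}a) applied with $\sfleft(I^*)$ in the role of $I^*$; an identical argument handles $\sfright(I^*)$.

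Finally, Loop~\ref{step:schedule-modified-partition-Janc} considers one representative from every good equivalence class, so at some iteration the algorithm will use a partition in the class of our constructed $(J^\sfL_\anc,J^\sfR_\anc,J_\discarded)$.  In that iteration, the inputs handed to the two sub-recursions in Steps~\ref{step:schedule-modified-recurse-left}--\ref{step:schedule-modified-recurse-right} satisfy the hypotheses of the lemma for $\sfleft(I^*)$ and $\sfright(I^*)$: the window multi-set equivalences follow from the construction of the partition; the ancestor cut set $S_{\anc}\cup S_{I^*}$ equals $S^*_{\supsetneq \sfleft(I^*)}$ (resp.\ $S^*_{\supsetneq\sfright(I^*)}$); and the $(J_I)$ and $(K_I)$ families are exactly $(J^*_I)$ and $(K^*_I)$ on the required index sets.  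The inductive hypothesis then yields $|(\sigma^\sfL)^{-1}(\sfleft(I^*))|\ge |\sigma''^{*-1}(\sfleft(I^*))|$ and the analogous bound for the right, so the check in Step~\ref{step:schedule-modified-check-if-better} forces $|(\sigma^\best)^{-1}(I^*)|\ge |\sigma''^{*-1}(I^*)|$, completing the induction.
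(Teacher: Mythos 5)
Your proposal is correct and follows essentially the same route as the paper: induct as in Lemma~\ref{lemma:schedule-good}, restrict attention to the iteration with $S_{I^*}=S^*_{I^*}$ (and the matching $g$-guesses), and invoke Property~(\ref{lemma:S''-star}a) applied to $\sfleft(I^*)$ and $\sfright(I^*)$ (together with the identity $J^*_{\supseteq I^*}=J^*_{\supsetneq \sfleft(I^*)}$) to certify that the partition induced by $\sigma''^*$ is good. The paper states these two points only as "highlighted differences," so your write-up is a faithful, and in fact more detailed, version of its argument.
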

		
		
		The proof of the lemma is very similar to that of Lemma~\ref{lemma:schedule-good} and thus we only highlight the difference. First, in the new algorithm we need to guess $S_{I^*}$, and thus in our analysis we focus on the iteration of the outermost loop in which we have $S_{I^*} = S^*_{I^*}$.  Second,  we only considered good partitions in Step~\ref{step:schedule-modified-partition-Janc}, but Lemma~\ref{lemma:S''-star} says that the partition according to $\sigma''^*$ is always good. 
		
%
		So, the main algorithm will return a modified dyadic system $\bbJ^\best$ and a valid schedule $\sigma^\best$ for $\bbJ^\best$ with at least $\sigma''^{*-1}([T])$ jobs scheduled, since the parameters passed to $\schedule\mhyphen\modified$ satisfy the conditions of Lemma~\ref{lemma:schedule-modified-good}.
\subsection{Analysis of Running Time}
	The following claim will be used to derive the improved running time:
	\begin{claim}
		\label{claim:calI'-small}
		The set $\calI'$ constructed in Step~\ref{step:schedule-modified-calI'} has $|\calI'| \leq (s-1)h$.  
	\end{claim}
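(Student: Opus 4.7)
The plan is to reduce the bound on $|\calI'|$ to a counting argument over the points of $S_{I^*}$, using Observation~\ref{obs:cutting-small} to control how deep the cuts can go.

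First, I would observe the key set inclusion $\calI' \subseteq \calI^\cut_{S_{I^*}}[I^*]$. Indeed, if $I \in \calI'$, then $I$ is cut by $S_\anc \cup S_{I^*}$ but not by $S_\anc$, so the cutting point must come from $S_{I^*}$. Thus it suffices to bound $|\calI^\cut_{S_{I^*}}[I^*]|$.

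Next, I would bound $|\calI^\cut_{S_{I^*}}[I^*]|$ by a union bound over the points of $S_{I^*}$. By Property~(\ref{def:modified-dyadic-system}a), we have $|S_{I^*}| \leq s-1$ and every element of $S_{I^*}$ is an integer multiple of $\max\{2^{-h}|I^*|, 2^h\}$ lying in $(\sfbegin(I^*), \sfend(I^*))$. For each fixed $t \in S_{I^*}$, the intervals in $\calI[I^*]$ whose open interior contains $t$ form a chain descending from $I^*$. Applying Observation~\ref{obs:cutting-small} with the ancestor interval taken to be $I^*$ itself shows that every interval $I \in \calI[I^*]$ with $t \in (\sfbegin(I), \sfend(I))$ must sit at most $h-1$ levels below $I^*$. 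Hence this chain contains at most $h$ intervals, one at each of the levels $0, 1, \ldots, h-1$ below $I^*$.

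Combining these two facts gives
\begin{align*}
|\calI'| \leq |\calI^\cut_{S_{I^*}}[I^*]| \leq \sum_{t \in S_{I^*}} \big|\{I \in \calI[I^*] : t \in (\sfbegin(I), \sfend(I))\}\big| \leq (s-1)\cdot h,
\end{align*}
as claimed. There is no real obstacle in this proof; the only subtlety is to remember to invoke Observation~\ref{obs:cutting-small} in the special case $\hat I = I^*$ in order to turn ``$t$ is a multiple of $\max\{2^{-h}|I^*|, 2^h\}$'' into a depth bound on the chain of intervals cut by $t$.
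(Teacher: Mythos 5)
Your proof is correct and takes essentially the same route as the paper: both reduce to the observation that the cut must come from $S_{I^*}$, use Observation~\ref{obs:cutting-small} to show each point of $S_{I^*}$ cuts at most $h$ intervals of $\calI[I^*]$ (a chain of relative depth at most $h-1$), and multiply by $|S_{I^*}| \leq s-1$. The only immaterial difference is that the paper handles the case $I^* \in \calI_\sfmid$ separately (there $S_{I^*} = \{\sfcenter(I^*)\}$ cuts only $I^*$ itself), whereas your argument absorbs it since $\{\sfcenter(I^*)\}$ happens to satisfy the hypotheses anyway.
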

	\begin{proof}
		Notice that $\calI' = \calI^\cut_{{S_\anc} \cup S_{I^*}}[I^*] \cap \calI^\uncut_{{S_\anc}}[I^*]$ is the set of intervals in $\calI[I^*]$ that are not cut by ${S_\anc}$ but cut by ${S_\anc} \cup S_{I^*}$. If $I^* \in \calI_\sfmid$, then $S_{I^*} = \{\sfcenter(I^*)\}$ and $\calI'$ is either $\emptyset$ or $\{I^*\}$. Now assume $I^* \in \calI_\sftop$. By Observation~\ref{obs:cutting-small}, each point in $S_{I^*}$ can cut at most $h$ intervals in $\calI[I^*]$. The claim holds since $|S_{I^*}| \leq s-1$.
	\end{proof}
	
	\label{subsec:algo-modified-running-time}
		Finally we analyze the running time of the algorithm.
		\begin{lemma}
			\label{lemma:schedule-modified-run-time}
			The running time of $\schedule\mhyphen\modified$ for $I^* = [T]$ is at most $n^{O\left(\frac{m^4}{\epsilon^3}\log^3\log n\right)}$.
		\end{lemma}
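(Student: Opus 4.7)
The plan is to mirror the analysis of Lemma~\ref{lemma:schedule-run-time}: for each $\ell\in[0,L)$ let $R_\ell$ denote the maximum, over $I^*\in\calI_\ell$, of the number of sub-recursive calls issued by a single call of $\schedule\mhyphen\modified$ on $I^*$, and let $R_L$ be the worst-case non-recursive work of a bottom call. Since the non-recursive work per recursion is polynomial in $n$, the total running time is at most $\poly(n)\prod_{\ell=0}^L R_\ell$, so it suffices to show $\sum_{\ell=0}^L\log R_\ell = O\!\left(\frac{m^4\log n\log^3\log n}{\epsilon^3}\right)$ to obtain the claimed $n^{O(m^4\log^3\log n/\epsilon^3)}$ bound.

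The bulk of the work is to bound $\log R_\ell$ for a top level $\ell\in[0,L-h'-1]$, which I would decompose into three sources of enumeration inside one recursion. \textbf{(i)} Choosing $S_{I^*}$ in Step~\ref{step:schedule-modified-enumerating-S} is a choice of at most $s-1$ points from the $\leq 2^h$ admissible multiples, contributing $O(sh)$ bits. \textbf{(ii)} The guess $(g_I)_{I\in\calI'}$ in Loop~\ref{step:schedule-modified-enumerate-g}: by Claim~\ref{claim:calI'-small} we have $|\calI'|\leq sh$, each top $g_I$ has length $p$, and each middle $g_I$ has length $m|I|$. Plugging the new parameters $\delta=\frac{\epsilon}{8shm^2}$ and $\delta'=\frac{1}{2^h sh}$ into $p=\floor{\frac{2}{\delta}\ln\frac{m}{\delta'}}+1$ gives $p=O(\frac{m^3\log^2\log n}{\epsilon^2})$, hence the top part contributes $sh\cdot p = O(\frac{m^4\log^3\log n}{\epsilon^3})$; for the middle part, each of the $\leq s-1$ points of $S_{I^*}$ cuts one middle interval per level over the $\leq h'$ middle levels, so $\sum_{I\in\calI'\cap\calI_\sfmid}m|I|\leq (s-1)m\cdot 2^{h+h'+1}$, which is nonzero only for the $O(h)$ top levels with $\ell\geq L-h'-h+1$ and therefore contributes a total of $O(\frac{m^4\log n\log\log n}{\epsilon^3})$ across all levels, a lower-order term. \textbf{(iii)} Enumerating the good equivalence classes of partitions in Step~\ref{step:schedule-modified-partition-Janc}: by Property~(\ref{claim:schedule-modified-properties}a) and $|S_{I^*}|\leq s-1$ we have $|S_\anc\cup S_{I^*}|\leq sh$, so the number of distinct window intersections with each of $\sfleft(I^*),\sfright(I^*)$ is $O((sh)^2)$; the $\rho|I^*|/2$-divisibility in Definition~\ref{def:good-partition} restricts the left and right counts per window to $O(m/\rho)$ choices each, giving $O((sh)^2\log(m/\rho))=O(\frac{m^2\log^3\log n}{\epsilon^2})$ bits, again lower order. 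Summing the dominant $sh\cdot p$ contribution over the $L=O(\log n)$ top levels yields $O(\frac{m^4\log n\log^3\log n}{\epsilon^3})$.

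For the middle levels $\ell\in[L-h',L-1]$, $S_{I^*}=\{\sfcenter(I^*)\}$ is fixed, and $\calI'\subseteq\{I^*\}$ since $\sfcenter(I^*)$ lies on the boundary of $\sfleft(I^*),\sfright(I^*)$ and hence is strictly inside no proper descendant of $I^*$; the $g$-enumeration then costs at most $m|I^*|$ bits, and $\sum_{\ell=L-h'}^{L-1}m|I^*|\leq m\cdot 2^{h+h'+1}=O(\frac{m^3\log n}{\epsilon^2})$, well within budget. The good-partition count is again $O((sh)^2\log(m/\rho))$ per level and sums to a lower-order amount. At the bottom, $R_L$ is the number of virtually-valid schedules on a job-set of size $\leq m\cdot 2^h$, contributing $O(m2^h\log 2^h)=O(\frac{m^2\log n\log\log n}{\epsilon})$ bits, also lower order. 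The main bookkeeping pain I anticipate is item~(iii): one must verify carefully that Claim~\ref{claim:calI'-small} shrinks the $2^{h+1}$ factor of the basic algorithm down to $(sh)^2$ and that Definition~\ref{def:good-partition} replaces the naive per-window count of $m|I^*|$ by $2m/\rho$, because these two reductions together are precisely what convert the $\log n$ factor inside the exponent of the basic algorithm into a $\log\log n$ factor and yield the running time stated in Theorem~\ref{thm:main}.
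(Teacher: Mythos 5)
Your proposal is correct and follows essentially the same route as the paper: the same level-by-level decomposition $\poly(n)\prod_\ell R_\ell$, the same use of Claim~\ref{claim:calI'-small} to bound the $g$-guessing cost by $O(shp)$ per top level, the same separate treatment of the near-bottom levels where middle intervals enter $\calI'$, and the same counting of good-partition classes via the $\rho|I^*|/2$-divisibility (your $O((sh)^2)$ bound on distinct window intersections is slightly looser than the paper's $2sh$, but still lower order). Your explicit accounting of the $S_{I^*}$ enumeration is a minor point the paper leaves implicit; otherwise the two arguments coincide.
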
 
		\begin{proof} Again, for every $\ell \in [0, L)$, we define $R_\ell$ to be the maximum number of times we call $\schedule\mhyphen\modified$ as sub-recursions in a recursion of $\schedule\mhyphen\modified$ for some $I^* \in \calI_\ell$.  Let $R_{L}$ be the worst case running time for a recursion of $\schedule\mhyphen\modified$ for some $I^* \in \calI_{L}$.  Again, it suffices to bound $\poly(n)\prod_{\ell=0}^{L}R_\ell$.

		First, we bound $R_\ell$ for $\ell < L$ and focus on any $I^* \in \calI_\ell$. By Claim~\ref{claim:calI'-small}, we have $|\calI'| \leq (s-1)h$ before Step~\ref{step:schedule-modified-enumerate-g}.  If $\ell \leq L-h' - h$, $(g_I)_{I \in \calI'}$ has a total length of $(s-1)hp$.  If $\ell \geq L-h'-h+1$ but $\ell \leq L-h$, the total length is at most $(s-1)h \max\{p, m 2^{-(L - h')}T \} =(s-1)h \max\{p, m 2^{h+h'}\}  \leq 2^{h+h'}(s-1)h m$. (Notice that for the improved algorithm, we have $p = O(shm^2/\epsilon \cdot \log \log T) = O(\frac{m^3}{\epsilon^2}\log^2\log T)$ and $2^{h + h'} = \Omega(\frac{m^2\log T}{\epsilon^2})$). If $\ell \geq L-h+1$, then the length is $0$. 
		
		Now we consider the number of different ways to split ${J_\anc} \cup J_{I^*}$ into $J_\anc^\sfL, J_\anc^\sfR$ and $J_\discarded$. For each $j \in {J_\anc} \cup J_{I^*}$, we have $b^\anc_j \leq \sfbegin(\sfleft(I^*))$ or $e^\anc_j \geq \sfend(\sfleft(I^*))$, and we also have $b^\anc_j \leq \sfbegin(\sfright(I^*))$ or $e^\anc_j \geq \sfend(\sfright(I^*))$. By Property~\ref{property:schedule-modified-input-Sanc} and \ref{property:schedule-modified-input-b-e-multiplies}, we have $\big|\{b^\anc_j: j \in J_\anc \cup J_{I^*}\}\big| + \big|\{e^\anc_j: j \in  J_\anc \cup J_{I^*}\}\big| \leq |S_{\anc} \cap S_{I^*}| + 1 \leq (s-1)h + 1 \leq sh$. 
		Thus there are at most $2\cdot sh$ distinct elements in $\{(b^\anc_j, e^\anc_j] \cap \sfleft(I^*): j \in {J_\anc} \cup J_{I^*}\} \cup \{(b^\anc_j, e^\anc_j] \cap \sfright(I^*): j \in {J_\anc} \cup J_{I^*}\}$.  Since we only consider good partitions, in which multiplicity of each interval in $\{(b^\anc_j, e^\anc_j] \cap \sfleft(I^*): j \in J^\sfL_\anc \cup J_{I^*}\}$ or $\{(b^\anc_j, e^\anc_j] \cap \sfright(I^*): j \in J^\sfR_\anc \cup J_{I^*}\}$ is a multiply of $\rho|I^*|/2$, the number of partitions we consider is at most $(2m|I^*|/\rho + 1)^{2sh}$. 
		
		Therefore, fore $\ell \leq L-h-h'$, we have 
			\begin{align*}
				&\quad \log R_\ell \leq 1  + shp+ 2sh \log \big((2m/\rho) + 1\big) \leq 1 + shp + O(sh \log \log T) \leq O(shp)\\ 
				&= O\left( sh \cdot \frac1\delta \log (m/\delta') \right)= O\left(\frac{(shm)^2\log \log T}{\epsilon}\right) = O\left(\frac{m^4\log^3\log T}{\epsilon^3}\right).
			\end{align*}
			If $\ell \geq L - h' -h +1$, we have 
			\begin{align*}
				\log R_\ell &\leq 1 + 2^{h+h'}shm + 2sh \log \big((2m/\rho) + 1\big) \leq O(2^{h+h'}shm) = O\left(\frac{m^4\log T\log\log T}{\epsilon^3}\right).
			\end{align*}

			Now we bound $\log R_{L}$ can still be bounded by $O\left(m2^h\log 2^h\right) = O\left(\frac{m^2\log n \log\log n}{\epsilon}\right)$.
			
			Therefore, we have 
			\begin{align*}
				\sum_{\ell = 0}^L\log R_L &\leq (L-h-h'+1)\cdot O\left(\frac{m^4\log^3\log T}{\epsilon^3}\right) + (h+h'-1)\cdot O\left(\frac{m^4\log T\log\log T}{\epsilon^3}\right)\\
				&\quad + O\left(\frac{m^2\log n \log\log n}{\epsilon}\right) \qquad = \qquad O\left(\frac{m^4 \log n \log^3 \log n}{\epsilon^3}\right).
			\end{align*}
			
		Above we used that $h + h' = O(\log \log T)$.
		Thus, the running time of $\schedule\mhyphen\modified$ for $I^* = [T]$ is at most $n^{O\left(\frac{m^4}{\epsilon^3}\log^3\log n\right)}$.
		\end{proof}
				
		\paragraph{Wrapping Up} Running the main algorithm, we can obtain a {partial modified dyadic} system $\bbJ$ and a virtually-valid schedule $\sigma''$ for $\bbJ$ with  $|\sigma''^{-1}(\discarded)| \leq |\sigma''^{*-1}(\discarded)| \leq \frac{3\epsilon T}{4}$ by Lemma~\ref{lemma:S''-star}. By Lemma~\ref{lemma:canonicalize-modified} and \ref{lemma:virtually-valid-to-valid-modified}, we can convert $\sigma''$ to a valid schedule $\sigma$ for $\bbJ$ with $|\sigma^{-1}(\discarded) \setminus \sigma''^{-1}(\discarded)| \leq \frac{3\epsilon T}{4}$. Thus, we have  $|\sigma^{-1}(\discarded)| \leq \frac{\epsilon T}{4} + \frac{3\epsilon T}{4} = \epsilon T$.  The running time of the whole algorithm is $n^{O\left(\frac{m^4}{\epsilon^3}\log^3\log n\right)}$ by Lemma~\ref{lemma:schedule-modified-run-time}.  Therefore we proved Theorem~\ref{thm:main}.
\section{Discussion}
We showed how to obtain a $(1+\epsilon)$-approximation for $Pm|\text{prec}, j_j = 1|C_{\max}$ in running time $n^{O_{m, \epsilon}(\log^3\log n)}$, by using a novel combinatorial algorithm based on making guesses  about the optimum solution.  Though we have the improved running time, obtaining a PTAS for the problem remains open. We believe our framework has the potential to achieve this goal. Currently the $\poly\log\log n $ factors in the exponent come from the number of interesting levels in one recursion of the algorithm. It is possible that our framework with a more careful analysis of the number of discarded jobs can lead to a PTAS for the problem.  It is also interesting to see if a Sherali-Adams hierarchy based algorithm can give a result similar to ours. 

\end{document}